\renewcommand{\epsilon}{\varepsilon}
\DeclareMathOperator*{\argmin}{arg\,min}
\DeclareMathOperator*{\argmax}{arg\,max}    
\NewDocumentCommand{\BestResp}{O{} O{}}{\mathsf{BR}\@ifnotempty{#1}{^{#1}}\@ifnotempty{#2}{_{#2}}}
\newcommand{\eps}{\varepsilon}
\newcommand{\conv}{\mathrm{conv}}
\newcommand{\diam}{\mathrm{diam}}
\newcommand{\E}{\mathrm{E}}
\newcommand{\Unif}{\mathrm{Unif}}
\newcommand{\R}{\mathbb{R}}
\newcommand{\cA}{\mathcal{A}}
\newcommand{\cB}{\mathcal{B}}
\newcommand{\cX}{\mathcal{X}}
\newcommand{\cY}{\mathcal{Y}}
\newcommand{\cR}{\mathcal{R}}
\newcommand{\BR}{\BestResp}
\newcommand{\brr}{\mathsf{br}}
\newcommand{\br}{\mathsf{br}}
\newcommand{\poly}{\mathrm{poly}}
\newcommand{\MultiThreadedFiniteAlg}{\mbox{{\normalfont\textsc{MultiThreadedRobustStack}}}}
\newcommand{\Dist}{\mathcal{D}}
\newcommand{\UCB}{\textsc{UCB}}
\newcommand{\BatchedBinarySearch}{\mbox{{\normalfont\textsc{BatchedBinarySearch}}}}
\newcommand{\SE}{{\normalfont\textsc{SuccessiveElimination}}}
\newcommand{\SEDelayed}{{\normalfont\textsc{SuccElimDelayed}}}
\newcommand{\SEUpdate}{{\normalfont\textsc{UpdateBounds}}}
\newcommand{\Oracle}{\mbox{{\normalfont\textsc{Oracle}}}}
\newcommand{\ClinchSimplex}{\mbox{{\normalfont\textsc{Clinch.Simplex}}}}
\newcommand{\Clinch}{\mbox{{\normalfont\textsc{Clinch}}}}
\newcommand{\SecuritySearch}{\mbox{{\normalfont\textsc{SecuritySearch}}}}
\newcommand{\Perturb}{\mbox{{\normalfont\textsc{Perturb}}}}
\newcommand{\BatchedClinch}{\mbox{{\normalfont\textsc{BatchedClinch}}}}
\newcommand{\MultiThreadedClinch}{\mbox{{\normalfont\textsc{MultiThreadedClinch}}}}
\newcommand{\GDwoG}{\mbox{{\normalfont\textsc{GDwoG}}}}
\newcommand{\Regret}{R}
\newcommand{\ConserveMass}{\mbox{{\normalfont\textsc{ConserveMass}}}}
\newcommand{\Stabilize}{\mbox{{\normalfont\textsc{ConserveMass}}}}
\newcommand{\FiniteAlg}{\mbox{{\normalfont\textsc{RobustStack}}}}
\newcommand{\vol}{\mathrm{vol}}         
\newcommand{\bvec}[1]{\mathbf{#1}}
\newcommand{\bv}[1]{\mathbf{#1}}
\theoremstyle{plain}
\newtheorem{lemma}{Lemma}[section]
\newtheorem{theorem}[lemma]{Theorem}
\newtheorem{proposition}[lemma]{Proposition}
\newtheorem{remark}[lemma]{Remark}
\theoremstyle{definition}
\newtheorem{definition}[lemma]{Definition}
\newtheorem{assumption}[lemma]{Assumption}
\title{Learning in Stackelberg Games with Non-myopic Agents}
\author{
Nika Haghtalab\thanks{UC Berkeley, \texttt{nika@berkeley.edu}} 
\and 
Thodoris Lykouris\thanks{Massachusetts Institute of Technology, \texttt{lykouris@mit.edu}}  
\and Sloan Nietert\thanks{Cornell University, \texttt{nietert@cs.cornell.edu}} 
\and Alexander Wei\thanks{UC Berkeley, \texttt{awei@berkeley.edu}} }
\date{First version: August 2022\\
Current version: May 2025\footnote{A preliminary version was accepted for presentation at the Conference on Economics and Computation (EC 2022).}}
\begin{document}

\maketitle

\begin{abstract}
We study Stackelberg games where a principal repeatedly interacts with a non-myopic long-lived agent, without knowing the agent's payoff function.
Although learning in Stackelberg games is well-understood when the agent is myopic, dealing with non-myopic agents poses additional complications. In particular, non-myopic agents may strategize and select actions that are inferior in the present in order to mislead the principal's learning algorithm and obtain better outcomes in the future.

We provide a general framework that reduces learning in presence of non-myopic agents to robust bandit optimization in the presence of myopic agents. Through the design and analysis of \emph{minimally reactive} bandit algorithms, our reduction trades off the statistical efficiency of the principal's learning algorithm against its effectiveness in inducing near-best-responses.
We apply this framework to \emph{Stackelberg security games (SSGs), pricing with unknown demand curve,} \emph{general finite Stackelberg games,} and \emph{strategic classification}. In each setting, we characterize the type and impact of misspecifications present in near-best responses and develop a learning algorithm robust to such misspecifications.

On the way, we improve the state-of-the-art query complexity of learning in SSGs with $n$ targets from $O(n^3)$ to a near-optimal $\widetilde{O}(n)$ by uncovering a fundamental structural property of these games. The latter result is of independent interest beyond learning with non-myopic~agents.

\end{abstract}

\addtocounter{page}{-1}
\thispagestyle{empty}

\newpage

\newpage
\section{Introduction}

Stackelberg games are a canonical model for strategic principal-agent interactions. Consider a defense system that distributes its security resources across high-risk targets prior to attacks being executed; or a seller who chooses a price prior to knowing a customer's proclivity to buy; or a tax policymaker who sets rules on when audits are triggered prior to seeing filed tax reports. In each of these scenarios, a \emph{principal} first selects an action $x\in\mathcal{X}$ and then an \emph{agent} reacts with an action $y\in\mathcal{Y}$, where $\mathcal{X}$ and $\mathcal{Y}$ are the principal's and agent's action spaces, respectively. In the examples above, agent actions correspond to which target to attack, how much to purchase, and how much tax to pay to evade an audit, respectively.
Typically, the principal wants an action $x$ that maximizes their payoff when the agent plays a best response $y = \brr(x)$; such a pair $(x, y)$ is a \emph{Stackelberg equilibrium}. By \emph{committing} to a strategy, the principal can guarantee they achieve a higher payoff than in the fixed point equilibrium of the corresponding simultaneous-play game. However, finding such a strategy requires knowledge of the agent's payoff function.

When faced with unknown agent payoffs, the principal can attempt to learn a best response via repeated interactions with the agent. If a (na\"ive) agent is unaware that such learning occurs and always plays a best response, the principal can use classical online learning approaches to optimize their own payoff in the stage game. Learning from such \emph{myopic} agents has been extensively studied in multiple Stackelberg games, including security games \citep{letchford2009learning,blum2014,peng2019learning}, demand learning \citep{kleinberg03value,besbes2009dynamic}, and strategic classification \citep{dong2018, chen2020learning}.

However, long-lived agents will generally not volunteer information that can be used against them in the future.
This is especially true in online environments where a learner seeks to exploit recently learned patterns of behavior as soon as possible, thus the agent can see a tangible advantage for deviating from its instantaneous best response and leading the learner astray. This trade-off between the (statistical) efficiency of learning algorithms and the perverse long-term incentives they may create brings us to the main questions of this work:
\begin{center}
    \emph{What design principles lead to efficient learning in Stackelberg games with non-myopic agents?} \\
    \emph{How can insights from learning with
    myopic agents be applied to non-myopic agents?}
\end{center}

A typical assumption for non-myopic learning is that the principal is willing to wait longer for future rewards than the agent. This is modeled as an
\emph{asymmetry in patience} where the  agent, unlike the principal, receives  geometrically discounted utilities according to a discount factor $\gamma \in (0,1)$. For example, in the auctions literature, this modeling choice is rooted in the assumption that the auctioneer has greater means and is therefore more willing to accept deferred utilities compared to buyers who value immediate rewards (see Section \ref{ssec:related_work} for more details). 
Indeed, some asymmetry in patience is provably needed to enable a principal to learn effectively from strategic interactions in various domains 
\citep{amin2013learning,ananthakrishnan2024knowledge}.

This agent impatience favors principal policies which are \emph{slow} to implement lessons from each round of feedback. Such algorithms incentivize the agent to $\eps$-approximately best respond by making it unappealing for the agent to sacrifice payoff more than $\eps$ in the present for the effect their actions will have only far into the future. Thus, a key technical challenge for learning with non-myopic agents is the design of \emph{robust learning algorithms} that tolerate inexact best responses. In high-dimensional Stackelberg games, even non-robust learning requires care due to numerous discontinuities in the principal's payoff function and the difficulty of identifying well-behaved optimization subproblems. With inexact feedback, the principal must further understand the complex sets of strategies which can be rationalized by an $\eps$-approximately best-responding agent.

Furthermore, the statistical efficiency of the principal's learning algorithm must be traded off against its effectiveness at inducing approximate best responses. 
The more reactive an algorithm, i.e., the  faster it is  in implementing lessons learned from individual rounds of feedback, the more robust it has to be in order to handle deviations from an agent's best response.
Therefore, another technical challenge is to devise principled approaches for designing \emph{minimally reactive} (or \emph{optimally ``slowed-down''}) robust learning algorithms that support and encourage approximate best responses.

\subsection{Our contribution}

In Section 2, we present a general framework for non-myopic learning in Stackelberg games. We aim to minimize (\emph{Stackelberg}) regret, which compares our cumulative utility to that of an omniscient principal who always plays a Stackelberg equilibrium strategy. With this objective, we reduce non-myopic learning to \emph{robust bandit optimization with delayed feedback}, formalizing the two challenges outlined above. A na\"ive application of this reduction is the following: for a fixed delay $D$, cycle through $D$ copies of a robust bandit policy with regret bound $R(T)$ against $\eps$-approximate best-responding agents, where $\eps = \gamma^D/(1-\gamma)$. Facing this policy, a $\gamma$-discounting agent will always provide $\eps$-approximate best responses, so we achieve a non-myopic regret bound of $D \cdot R(T)$, where $D$ typically scales with the agent's effective time horizon $T_\gamma = \frac{1}{1-\gamma}$. Another common way of mediating principal-agent information flow is to require that the principal submit their actions in \emph{batches} of size $D$. We prove an equivalence between these two approaches, both of which serve as generic and user-friendly entry points for non-myopic algorithm design. For sharper regret bounds in specific applications, we employ time-varying batch sizes and insights from bandits with delays.

As our main application, we consider \emph{Stackelberg security games} (SSGs), a canonical setting that models strategic interactions between an attacker (agent) and a defender (principal). Here, the principal wishes to fractionally allocate defensive resources across $n$ targets, and the agent aims to attack while evading the principal's defense. Existing approaches solve $n$ separate convex optimization subproblems, one per target $y$ over the set of $x$ with $\brr(x) = y$, using agent feedback to learn the region each action $x$ belongs to \citep{conitzer2006computing, letchford2009learning, blum2014, balcan2015commitment}. 
However, $\eps$-approximate best responses $\brr_\eps(x)$ can corrupt this feedback adversarially anywhere near the boundaries of these high-dimensional regions.

\paragraph{Robust and optimal search algorithm for SSGs.} Towards a non-myopic learning algorithm for SSGs, Section~\ref{sec:clinch} begins with the simpler problem of \emph{robust search}. That is, we seek to estimate a Stackelberg equilibrium principal strategy using queries to an $\eps$-approximately best-responding agent.  Seeking an analytically tractable algorithm for this corrupted feedback setting, we uncover a clean structure that characterizes the principal's optimal solution against best-responding agents in a single-shot game. 
We show that all $n$ regions and sub-problems share a unique optimal solution $x^\star$ when considering a conservative allocation of the principal's resources. This leads us to a single optimization problem which we solve with a variant of the cutting plane method. 

The resulting algorithm---$\Clinch$---solves the myopic learning problem with near-optimal $\widetilde{O}(n)$ query complexity, improving upon on the state-of-the-art $O(n^3)$ dependence on the number of targets \citep{peng2019learning}. This asymptotic improvement is realized in practice, as we demonstrate empirically by implementing \Clinch{} and comparing it with the \SecuritySearch{} algorithm of \cite{peng2019learning}. Moreover, the simplicity of our new algorithm lets us extend it seamlessly to $\varepsilon$-approximately best-responding agents; the uniqueness of $x^\star$ allows us to approach it from any direction in the principal's strategy space while tolerating small perturbations. 

\paragraph{Extension to non-myopic agents with unknown discount factor.}
In Section~\ref{sec:non-myopic-clinch}, we turn \Clinch{} into an effective principal policy against $\gamma$-discounting agents, using the reduction from Section~\ref{sec:framework}. To improve upon the na\"ive cycling approach, we observe that \Clinch{} can advance with coarse feedback in initial rounds, only requiring accurate best responses as it nears $x^\star$. This motivates us to employ a gradually increasing batch-size schedule, and the resulting policy achieves regret  $\widetilde{O}(n(\log T + T_\gamma))$ against $\gamma$-discounting agents. Since knowledge of $\gamma$ may be impractical, we also develop a policy for unknown discount factor. Adapting an approach originally developed for adversarial corruptions \citep{LykourisMirrokniPaesLeme18}, we run $\log T$ parallel copies of \Clinch{} with geometrically increasing feedback delays, sharing information between copies by intersecting their confidence sets. This multi-threaded policy only incurs a $\log T$ multiplicative increase in regret compared to the batched algorithm. Via simulations against a restricted class of non-myopic agents, we demonstrate that the batched approach incurs linear regret if the guess for $\gamma$ is too small, while the multi-threaded approach always achieves sublinear regret at a mild overhead over the best~batched~policy.

\paragraph{Beyond SSGs.}
In Section~\ref{sec:beyond-ssgs}, we apply our framework to three more Stackelberg games: pricing with an unknown demand curve, general finite Stackelberg games, and strategic classification.
In each application area, we require a robust bandit optimization algorithm;
however, the context and type of noise to which we must be robust is application-dependent. Writing $u:\cX \times \cY \to \R$ for the principal's payoff function, we identify two general types of noise: 1) \emph{pointwise} errors, which refer to settings where $u(x, \brr_\eps(x))$ is close $u(x, \brr(x))$ for all $x$ and 2) bounded-region errors, where $\brr_\eps(x) = \brr(x)$ except in bounded and structured regions with no guarantees for $u(x,\brr_\eps(x))$. The relevant robust learning domains are convex optimization with a separation oracle (SSGs), stochastic multi-armed bandits (demand learning), convex optimization with a membership oracle (finite Stackelberg games), and bandit convex optimization (strategic classification).

\Cref{tab:my_label} presents our obtained regret guarantees alongside their corresponding error types and robust algorithms. For SSGs and demand learning, careful minimally reactive policies achieve regret scaling additively with the effective time horizon $T_\gamma$, while the others exhibit multiplicative scaling.  For demand learning, this is achieved using techniques from bandits with delays \citep{lancewicki2021stochastic}. Employing the same multi-threaded approach used for SSGs, we obtain $\gamma$-agnostic guarantees for all settings except strategic classification, at the cost of a $\log T$ multiplicative overhead.

\begin{table}
\small
\begin{tabular}{llp{4.4cm}p{4.4cm}} 
\toprule
Environment & Error Type & Robust Learning Algorithm & Non-myopic Regret\\
\midrule
SSGs & bounded-region & $\Clinch$ [this work] & $\widetilde{O}(n(\log T + T_\gamma))$\\
demand learning & pointwise  & \textsc{SuccElim} \citep{evendar06action} & $\widetilde O(\sqrt{T} + T_\gamma)$\\
finite Stackelberg & bounded-region & \FiniteAlg{} [this work] & $\widetilde{O}(T_\gamma \log^{4}\!T (V^{-1}\sqrt{m} + n m^{2.5}))$\\
strategic classification & pointwise & \GDwoG{} \citep{flaxman2005} &
$\widetilde{O}(T_\gamma^{1/4}\sqrt{d} \,T^{3/4})$ \\
\bottomrule
\end{tabular}
\caption{\raggedright For each primary learning environment, we list the error type to which we must be robust, the main robust learning algorithm employed, and the non-myopic regret bound achieved with known discount factor $\gamma$. Here, $n$ is the number of agent actions, %
$m$ is the number of principal actions  and $V^{-1}$ is an inverse volume quantity for finite Stackelberg games, and $d$ is the dimension of the feature space in strategic classification.}
\label{tab:my_label}
\end{table}

\subsection{Related work}\label{ssec:related_work}
Learning in the presence of non-myopic agents has been well-studied in the context of auctions \citep{amin2013learning, mohri2014, liu2018, abernethy2019}. There, batching and delays are often used to limit the extent of strategic manipulation from bidders and non-myopia is frequently modeled via $\gamma$-discounted utility maximizing agents. Initial research focused on posted prices \citep{amin2013learning, mohri2014, drutsa2017horizon} with later work including multi-bidder auctions with reserve prices \citep{liu2018, abernethy2019}, formal guarantees for incentive compatibility \citep{kanoria2014}, and more nuanced, contextual valuations \citep{golrezaei2019, drutsa20contextual,golrezaei2023incentive}. 
In \Cref{ssec:demand}, we provide direct comparisons to the posted-price setting, where we improve the state-of-the-art dependence on $T_\gamma$ for both fixed and stochastic buyer valuations.
Compared to high-dimensional settings like SSGs, analysis of approximate best response behavior is simpler in posted-price auctions, where 
agents can be viewed as having slightly perturbed one-dimensional values. 
In some multi-bidder settings, the seller's choice of personalized reserve prices is multi-dimensional \citep{golrezaei2019, golrezaei2023incentive}. Here, the distribution over reserve prices is unknown to the buyers until after bidding, so this setting does not fall into the Stackelberg framework, and the techniques employed do not translate directly to our applications.
Differential privacy has also been employed as a tool for filtering information flow in various mechanism design settings \citep{mcsherry2007,kobbi2012,kearns2014,liu2018,abernethy2019}.

Utility discounting has a long history in stochastic control, economics, and reinforcement learning (see, e.g., \citealp{blackwell1965, abreu1988,fedenberg1986, littman1994}), where it models uncertainty of future participation and acts as a tractable middle-ground between finite-horizon and infinite-horizon repeated games without discounting. In our setting, the discounted horizon~$T_\gamma$ models the agent's patience and would typically scale inversely with actual time between repeated interactions. Related notions of patience in repeated games include the extreme version of long-run (non-discounting) and short-run (best-responding) players \citep{fudenberg1990repeated}, and, for dynamic pricing, models where buyers persist for a limited number of rounds, during which they can make a single purchase \citep{ahn2007pricing,liu2015, lobel2020dynamic}.

In behavioral economics and psychology, there is a well-documented empirical phenomenon of individuals preferring immediate but poorer rewards over greater delayed rewards  \citep{ferrari1995procrastination, national1999pathological}. This impatience is often modeled via utility discounting \citep{steel2007nature, ross2012midbrain, kirby1999heroin, suranovic1999economic}, with \cite{kirby1999heroin} specifically comparing discount rates between drug users and non-users. More broadly, the field of picoeconomics explores the psychological roots of this behavior and its implications for decision-making \citep{ainslie1992picoeconomics}.
While this literature generally favors hyperbolic discounting \citep{laibson1997golden}, compared to this work's geometric discount model, we note that our multi-threaded algorithms rely quite weakly on this modeling choice. In particular, we observe favorable performance in empirical simulations of multi-threaded \Clinch{} against hyperbolic discounting agents (see Remark~\ref{rem:hyperbolic-discounting}).

Our results have similarities to several lines of work in the multi-armed bandit literature. 
First, multi-armed bandits with delayed feedback has been studied extensively in various settings, initially with stochastic arm-independent delays \citep{joulani2013online,vernade2017}, later with arm-dependent delays \citep{gael20stochastic,lancewicki2021stochastic}, and also with feedback aggregation \citep{pike2018bandits}. Our analysis in Section~\ref{ssec:demand} extends that of \cite{lancewicki2021stochastic} to handle adversarial perturbations. Second, multi-armed bandits with adversarial corruptions have been well-explored \citep{LykourisMirrokniPaesLeme18,GuptaKorenTalwar19,ZimmertSeldin21}. Unlike our setting where the agent acts according to a non-myopic behavioral model, this line of work assumes that the feedback can be completely adversarial in a bounded number of rounds $C$ and purely stochastic otherwise. That said, our handling of unknown discount factor uses the multi-threading paradigm from this literature \citep{LykourisMirrokniPaesLeme18} and draws an interesting connection between the role of feedback delays in non-myopic learning and the role of the corruption budget $C$.

Stackelberg Security games~\citep{conitzer2006computing,tambe2011security} have been well-studied in recent literature. In particular, regret and query complexity bounds have been given for online and offline learning \citep{blum2014,balcan2015commitment,XuTJ16,blum2017learning,peng2019learning}. We improve over the state-of-the-art query complexity results for security games (see \Cref{ssec:ssg-comparison} for details). 
Since SSGs often model interaction between long-lived institutions and adversaries aiming for short-term profit, our asymmetric discounting assumption is natural for this setting. Several works in security games have explored robustness to noisy best responses \citep{HaghtalabFNSPT16,pita2012robust}
using behavioral assumptions that model the noise. The resulting algorithms are adversarially robust as needed by our framework. \cite{nguyen2019tackling} treated a distinct Bayesian model of non-myopia using mixed integer linear programming.

Finally, we comment on our static equilibrium benchmark $T \max_{x \in \cX} u(x,\br(x))$. This is the maximum utility the principal can obtain against best-responding agents and is thus the standard benchmark for learning in Stackelberg games (see, e.g., the works in the previous paragraph). Interestingly, against non-myopic agents with known utilities, the principal can often beat the static benchmark \citep{zuo2015optimal,collina2023efficient}, by threatening to defect if the agent deviates from desired behavior. However, for unknown agent utilities, the principal may be unable to design a similarly effective threat. In particular, a $\Omega(T_\gamma)$ regret bound for dynamic pricing \citep{amin2013learning} implies that the principal cannot beat the static benchmark for pricing or generic security games and finite games (since pricing can be embedded as a special case of each). Moreover, any regret bounds proved via reduction to robust bandit learning must apply for best-responding agents, against which the static benchmark is optimal. Specifically, our regret analysis applies to any agents which approximately best respond when feedback is delayed, including best-responding (fully myopic) agents. In contrast, ``threat-based'' approaches strongly rely on agents being forward-looking (non-myopic) and believing the principal's threat.

\section{Framework}\label{sec:framework}
We consider learning in general Stackelberg environments, in which a principal (the ``leader'') aims to learn an optimal strategy while interacting repeatedly with a non-myopic agent (the ``follower''). In this section, we first describe the basic model for principal-agent interaction and then introduce our general approach for learning in non-myopic principal-agent settings. 

\subsection{Model}
A \emph{Stackelberg game} is a tuple $(\mathcal{X}, \mathcal{Y}, u, v)$ of principal action set $\mathcal{X}$, agent action set $\mathcal{Y}$, principal payoff function $u\colon\mathcal{X}\times\mathcal{Y}\to [0, 1]$, and agent payoff function $v\colon\mathcal{X}\times\mathcal{Y}\to [0, 1]$. The principal leads with an action $x\in\mathcal{X}$, observed by the agent and the agent follows with an action $y\in\mathcal{Y}$, observed by the principal. Finally, the principal and the agent receive payoffs $u(x, y)$ and $v(x, y)$, respectively. 

We consider \emph{repeated} Stackelberg games, in which the same principal and agent play a
sequence of Stackelberg games $((\mathcal{X}, \mathcal{Y}, u, v_t))_{t=1}^T$ over $T$ rounds, with both participants observing the outcome of each game before proceeding to the next round.
Notice that the agent's payoff function $v$ may depend on the round $t$, possibly drawn from some distribution over possible payoff functions.
Furthermore, we assume that the agent knows both the principal payoff function $u$ and the distribution over each future agent payoff function $v_t$, while the principal knows only $u$. When considering the principal learning in this context, we also assume that the agent knows the principal's learning algorithm and can thus compute its forward-looking utility (as we discuss further below).

\paragraph{Discounting.}
A common assumption in repeated games is that agents discount future payoffs; our agent acts with a discount factor of $\gamma$, for some $0 < \gamma<1$. Formally, for a sequence $((x_1, y_1), (x_2, y_2), \ldots, (x_T, y_T))$ of actions, the principal's total utility is $\sum_{t=1}^T u(x_t, y_t)$ and the agent's \emph{$\gamma$-discounted} utility is $\sum_{t=1}^T \gamma^t v_t(x_t, y_t)$. We make the behavioral assumption that the agent acts to maximize their expected $\gamma$-discounted utility and may thus trade off present utility for future (discounted) payoffs. A canonical motivation for this assumption is that the agent leaves the game with probability $\gamma$ at each round and is replaced by another agent from the same population. 

\paragraph{(Approximate) best responses.}
To bound the loss in present utility compared to the (myopic) best response, we consider $\epsilon$-approximate best responses. Considering approximate best responses lets us move beyond myopic agents who always maximize present-round utility, as typically studied in Stackelberg games, to non-myopic agents whose actions take future payoffs into account.

Formally, define $\BestResp(x)\coloneqq \{ y\in\mathcal{Y} : v(x, y) = \max_{y'\in\mathcal{Y}} v(x, y') \}$ to be the agent's best response set and $\BestResp[\varepsilon](x) \coloneqq \{  y\in\mathcal{Y} : v(x, y) \ge \max_{y'\in\mathcal{Y}} v(x, y') - \varepsilon \}$ to be their
\emph{$\varepsilon$-approximate} best response set to $x\in\mathcal{X}$. When the agent payoff functions $v_t$ vary with the round number $t$, we write $\BestResp[][t](x)$ and $\BestResp[\eps][t](x)$ to denote the agent's ($\eps$-approximate) best response sets with respect to $v_t$.

\paragraph{Histories, policies, and regret.}
A \emph{history} $H$ is an element of $\mathcal{H}\coloneqq\bigcup_{t\ge 0} (\mathcal{X}\times\mathcal{Y})^t$ representing actions played in previous rounds. A principal policy $\mathcal{A}\colon\mathcal{H}\to\mathcal{X}$ is a (possibly random) function that takes a history $H_{t - 1} = ((x_{s}, y_{s}))_{s=1}^{t-1}$ and outputs an action $x_t = \mathcal{A}(H_{t - 1})$. An agent policy $\mathcal{B}$ is a (possibly random) function that, given a history $H_{t - 1}$, realized utility functions $v_{1:t} = (v_s)_{s=1}^t$, and a principal action $x_t$, outputs an action $y_t = \mathcal{B}(H_{t - 1}, v_{1:t},x_t)$.

The principal commits to a policy $\mathcal{A}$ before the {start of the} game. The agent then chooses a policy $\mathcal{B}$. To measure the performance of $\cA$ against $\cB$, we use \emph{Stackelberg} (or \emph{strategic}) \emph{regret}
\begin{equation}
\label{eq:stackelberg-regret}
    \Regret_{\cA,\cB}(T) \coloneqq \max_{x\in\mathcal{X}} \left({\E\left[\sum_{t=1}^T \Bigl( \max_{y\in\BestResp[][t](x)} u(x, y) - u(x_t, y_t) \Bigr)\right]}\right),
\end{equation}
where the expectation is taken over the random history $H_T$ induced by these policies and the (possibly random) agent utilities $v_{1:T}$. This regret compares the principal's realized payoff to that obtained against a best-responding agent. When the optimal choice of $y\in \BestResp_t(x)$ is not unique, we consider the choice of $y$ that corresponds to an agent tie-breaking in favor of the principal, as this yields the highest standard against which one can compete. When agent payoffs $v_t$ are stochastic and drawn i.i.d., the regret benchmarks $\cA$ against optimal Stackelberg equilibrium play in the stage game and decomposes into $T \max_{x \in \cX} \E[u(x,\brr(x))] - \E[\sum_{t=1}^T u(x_t,y_t)]$, where $\brr(x) \in \argmax_{y \in \BestResp_1(x)} u(x,y)$ again breaks ties in favor of the principal.

Generally, we consider $\mathcal{B}$ belonging to a class of agent policies $\mathfrak{B}$ (potentially depending on $\cA$) and minimize the \emph{worst-case Stackelberg regret} $R_{\cA,\mathfrak{B}}(T) \coloneqq \sup_{\cB \in \mathfrak{B}} R_{\cA,\cB}(T)$. In our non-myopic setting, $\mathfrak{B} = \mathfrak{B}_\gamma(\cA)$ is the family of policies which maximize the agent's $\gamma$-discounted utility given $\cA$.
Since our framework will relate non-myopic agents to approximately best-responding agents, we also consider the class $\mathfrak{B}^\eps$ of policies $\cB$ with $\cB(H_{t-1},x_t) \in \BestResp^\eps(x_t)$ for all $t$, where $\mathfrak{B}^0$ corresponds to the traditional myopic setting. Define $R_{\cA}(T,\gamma)\coloneqq R_{\cA,\mathfrak{B}_\gamma(\cA)}(T)$ and $R_{\cA}^\eps(T)\coloneqq R_{\cA,\mathfrak{B}^\eps}(T)$, respectively.

\subsection{Reduction to robust and minimally reactive learning}
\label{ssec:reduction}

As noted above, a major challenge {in} our learning setting is that agents may play actions that are far from best responses in any given round to obtain higher discounted future utility. At a high level, this is remedied by choosing a principal policy that is minimally reactive to agent feedback, so that each agent action has a bounded impact on said utility. Concretely,
a simple technique to decrease the influence that individual agent actions have on the principal policy, and thus the agent's incentive to manipulate their action in the present round, is to delay the principal's response to agent actions. Formally, %
a principal policy $\mathcal{A}$ is \emph{$D$-delayed} if each action $x_t = \mathcal{A}(H_{t-1})$ relies only on the prefix $H_{t-D}$ of $H_{t-1}$, i.e., \smash{$\mathcal{A}(H_{t-1}) = \mathcal{A}'(H_{t - D})$} for some \smash{$\mathcal{A}'$}. 
With sufficient delay, the agent will have little incentive to manipulate their action and will play  an approximate best response; this
can be thought of as a (possibly adversarial) perturbation of the actual best response. Previous work has explored such an idea in the context of auctions with non-myopic agents (see \Cref{ssec:related_work}); in contrast, we focus on distilling design principles that apply to general principal-agent settings. Towards this goal, we present a black-box reduction from learning with non-myopic agents to the better-understood problem of bandit learning from adversarially perturbed inputs.

\begin{proposition}\label{prop:delayed-feedback}Let $0 < \gamma < 1$ and $\eps \geq 0$. Fix $D = \lceil T_\gamma \log (T_\gamma/\eps) \rceil$, where $T_\gamma = \frac{1}{1-\gamma}$ is the agent's discounted time horizon.
Then, if principal policy $\cA$ is $D$-delayed, we have $R_\cA(T,\gamma) \leq R_\cA^\eps(T)$.
\end{proposition}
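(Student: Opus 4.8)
The plan is to show that under a $D$-delayed principal policy $\cA$ with $D = \lceil T_\gamma \log(T_\gamma/\eps)\rceil$, any rational $\gamma$-discounting agent must play an $\eps$-approximate best response in every round, so that $\mathfrak{B}_\gamma(\cA) \subseteq \mathfrak{B}^\eps$; the regret inequality $R_\cA(T,\gamma) \le R_\cA^\eps(T)$ then follows immediately by taking suprema over the respective policy classes. The entire content is thus the incentive argument: \emph{delay kills the return on manipulation}.

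\textbf{Key steps.} First I would fix a round $t$, a history $H_{t-1}$, and the principal action $x_t = \cA(H_{t-1})$, and consider the agent contemplating a deviation at round $t$ from a present-round best response $y^\star \in \BestResp[][t](x_t)$ to some alternative $y_t$. Because $\cA$ is $D$-delayed, the action $y_t$ the agent plays at round $t$ is first visible to (can first influence) the principal's choice at round $t+D$, since $x_{t+D} = \cA'(H_{t+D-1})$ and $H_{t+D-1}$ is the earliest prefix containing $(x_t, y_t)$. Hence the agent's choice at round $t$ affects only its own round-$t$ payoff and the rounds $t+D, t+D+1, \dots, T$. Second, I would bound the two competing effects on the agent's $\gamma$-discounted utility. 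The immediate loss from not best-responding at round $t$ is at least $\gamma^t(\,v_t(x_t,y^\star) - v_t(x_t,y_t)\,)$; if $y_t \notin \BestResp[\eps][t](x_t)$ this present sacrifice exceeds $\gamma^t \eps$ by definition of the $\eps$-approximate best response set. The maximum possible future gain, on the other hand, is obtained by crudely bounding every affected future per-round payoff difference by its range, which is at most $1$ since $v \in [0,1]$; thus the discounted future gain is at most $\sum_{s=t+D}^{T} \gamma^s \le \gamma^{t+D} \sum_{j\ge 0}\gamma^j = \gamma^{t+D} T_\gamma$. Third, I would verify the inequality $\gamma^{t+D} T_\gamma \le \gamma^t \eps$, i.e.\ $\gamma^D T_\gamma \le \eps$, which is exactly what the choice of $D$ guarantees: writing $\gamma = 1 - \frac{1}{T_\gamma}$ and using $\log(1/\gamma) \ge 1 - \gamma = 1/T_\gamma$, we get $\gamma^D = e^{-D\log(1/\gamma)} \le e^{-D/T_\gamma} \le e^{-\log(T_\gamma/\eps)} = \eps/T_\gamma$, so $\gamma^D T_\gamma \le \eps$. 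Consequently any deviation to a non-$\eps$-approximate best response strictly lowers the agent's discounted utility, so a rational agent never makes such a deviation, giving $\cB(H_{t-1},x_t) \in \BestResp[\eps][t](x_t)$ for all $t$ and all $\cB \in \mathfrak{B}_\gamma(\cA)$.

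\textbf{Main obstacle.} The subtle point is that the agent is comparing two entire \emph{policies}, not two single actions: deviating at round $t$ changes future histories and hence the principal's future actions and the agent's own future best-response opportunities, so the naive ``immediate loss vs.\ future gain'' accounting must be made rigorous as a statement about optimal continuation values. The clean way to handle this is to argue that an optimal agent policy can be taken to best-respond approximately round-by-round: I would bound the total future utility under \emph{any} continuation by the trivial per-round ceiling of $1$ and compare against the continuation in which the agent best-responds at round $t$ and plays optimally thereafter, observing that the round-$t$ action does not constrain the feasible future payoffs by more than this crude bound because future payoffs always lie in $[0,1]$ regardless of history. This reduces the policy-level comparison to the per-round inequality above. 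A secondary technicality is that at the final $D$ rounds there is no future at all, so the argument only strengthens there; and one should note the benchmark in $R_\cA^\eps(T)$ is defined with the supremum over $\mathfrak{B}^\eps$, so the containment $\mathfrak{B}_\gamma(\cA) \subseteq \mathfrak{B}^\eps$ is precisely what licenses the final inequality.
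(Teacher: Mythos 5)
Your proposal is correct and follows essentially the same route as the paper's proof: a single-round exchange argument in which the $D$-delay freezes the principal's actions through round $t+D-1$, so the deviation's future gain is crudely bounded by $\sum_{s\ge t+D}\gamma^s = \gamma^{t+D}T_\gamma$, which the present loss of more than $\gamma^t\eps$ strictly dominates. The only differences are cosmetic: the paper phrases the argument as a contradiction via an explicitly modified agent policy $\cB'$ (agreeing with $\cB$ except at the deviating pair), whereas you compare continuation values directly, and you additionally spell out the arithmetic $\gamma^D T_\gamma \le \eps$ via $\log(1/\gamma)\ge 1-\gamma$, which the paper leaves implicit.
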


Our proof (\Cref{app:robust-learning-with-delays}) observes that the total discounted utility for rounds after time $t$ is at most $\frac{1}{1-\gamma}\gamma^t$. While \Cref{prop:delayed-feedback} simplifies the principal's learning problem, it presents two new---but more tractable---challenges: (i) designing an adversarially robust bandit algorithm, and (ii) implementing this algorithm with delayed feedback.
For (i), we translate the guarantee $y_t \in \BestResp_t^\eps(x_t)$ to a more standard error type in a context-specific way, showing that $y_t \in \BestResp_t(x_t')$ for some $x_t'$ near $x_t$ and bounding the deviation of $y_t$ from $\BestResp_t(x_t)$. Achieving (i) alone is insufficient, since the agent need not approximately best respond if the principal policy reacts quickly to the agent's actions.

\paragraph{Design principles for minimally reactive learning.} 
For (ii), we note that any bandit algorithm $\cA$ can be simply converted to a $D$-delayed  algorithm with up to a multiplicative in $D$ overhead in regret, by interleaving $D$ copies of $\cA$ and (somewhat wastefully) running them in parallel; this was first observed by \citet{weinberger02delayed}. However, this approach is far from optimal in most of our applications; often, we are able to collect less wasteful feedback using non-reactive but more diverse and variable query schedules that allow us to incur less regret while maintaining the same delay. To design these non-reactive schedules, we relate designing efficient delayed algorithms to designing \emph{batched} algorithms, in which the principal makes queries and receives feedback in batches of size $B$. Formally, we say a principal policy is \emph{$B$-batched} if each action $x_t = \cA(H_{t-1})$ relies only on the prefix $H_{B\lfloor (t-1)/B \rfloor}$. By definition, any $D$-delayed policy is also $D$-batched, but there is also a useful reduction in the opposite direction (Proposition~\ref{prop:batch-delay-reduction}). The proof (\Cref{app:batch-delay}) runs two copies of $\cA$ in parallel, alternating between batches, and even applies to a wider class of abstract bandit learning problems (though the relevant case for our setting is $\mathfrak{B} = \mathfrak{B}^\eps$).

\begin{proposition}
\label{prop:batch-delay-reduction}
Any $B$-batched principal policy $\cA$ can be converted into a $B$-delayed policy $\cA'$ such that $R_{\cA',\mathfrak{B}}(T) \leq 2 R_{\cA,\mathfrak{B}}(T)$ for any  class of agent policies $\mathfrak{B}$.
\end{proposition}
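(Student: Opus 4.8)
The plan is to run $\cA$ at \emph{half speed}, replaying each of its batches twice, so that the extra batch of lookahead needed to respect a full $B$-round delay is paid for by a factor-two slowdown rather than by wasted rounds. Assume $2B \mid T$ for cleanliness and write the horizon as batches $0,\dots,J-1$ of length $B$, where batch $j$ is the round set $\{jB+1,\dots,(j+1)B\}$; let $d_j$ be the length-$B$ block of actions that $\cA$ commits to at the start of its batch $j$, so that $d_j = g_j(H_{jB})$ by the $B$-batched property. I would define $\cA'$ to maintain a private copy of $\cA$ and to play each block $d_j$ \emph{twice consecutively}, over the $2B$ rounds $\{2jB+1,\dots,(2j+2)B\}$, advancing the internal copy by one block every two batches and feeding it the feedback generated by the \emph{first} of the two copies.

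First I would verify that $\cA'$ is $B$-delayed. The only feedback the internal $\cA$ needs to form $d_j$ is the response to $d_0,\dots,d_{j-1}$; the first copy of $d_{j-1}$ occupies rounds $\{(2j-2)B+1,\dots,(2j-1)B\}$, so its feedback is already contained in $H_{(2j-1)B}$. Since $\cA'$ begins to play $d_j$ at round $2jB+1$ and $(2jB+1)-B=(2j-1)B+1$, the block $d_j$ is a function of $H_{(2j-1)B+1}$, hence of $H_{t-B}$ for every round $t\ge 2jB+1$ in its play. Thus $x_t=\cA'(H_{t-1})$ depends only on $H_{t-B}$, as required. (This is exactly the point where a single-batch shift would fail: then the feedback of batch $j-1$ would not be accessible in time, and the factor-two slowdown is what buys back the timing.)

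Next I would bound the regret. Because the benchmark term in \eqref{eq:stackelberg-regret} is independent of the principal's policy, for any fixed agent $\cB$ one may write $R_{\cA,\cB}(T)=U^\star-\E[\sum_t u(x_t,y_t)]$ for a common constant $U^\star$; grouping the realized-utility deficit by batch gives $R_{\cA,\cB}(T)=\sum_{j=0}^{J-1}\rho_j$, where each per-batch regret $\rho_j\ge 0$ (the per-round benchmark upper-bounds the payoff obtainable against a best-responding agent). Over its $J$ batches, $\cA'$ realizes the blocks $d_0,d_0,d_1,d_1,\dots,d_{J/2-1},d_{J/2-1}$, i.e.\ each of $\cA$'s first $J/2$ blocks exactly twice, so
\[
R_{\cA',\cB}(T)\;=\;2\sum_{j=0}^{J/2-1}\rho_j\;\le\;2\sum_{j=0}^{J-1}\rho_j\;=\;2\,R_{\cA,\cB}(T),
\]
where the inequality merely drops the nonnegative tail $\rho_{J/2},\dots,\rho_{J-1}$. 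Taking the supremum over $\cB\in\mathfrak{B}$ then yields $R_{\cA',\mathfrak{B}}(T)\le 2R_{\cA,\mathfrak{B}}(T)$.

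The step I expect to be the main obstacle is the identity $R_{\cA',\cB}(T)=2\sum_{j<J/2}\rho_j$, which quietly assumes that replaying a block incurs the same per-batch regret as its first appearance and that the internal copy of $\cA$ observes the feedback it would see in a stand-alone run. For a reactive, history-dependent agent neither is automatic, since $\cA'$'s doubled action stream presents the agent with different prefixes than $\cA$ does. I would discharge this by feeding the internal $\cA$ only the first-copy responses, so that its trajectory is distributed exactly as a genuine run of $\cA$ against the induced agent, and by exploiting that, conditioned on a fixed block, the per-round regret is determined by the current action and payoff. For the best-responding and i.i.d.\ agent classes that the framework actually invokes—where the response depends only on the current action—both the nonnegativity and the shift-invariance of $\rho_j$ hold immediately, and the displayed computation is exact.
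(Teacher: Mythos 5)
Your delay verification is correct, and the ``half-speed'' idea is in the same spirit as the paper's proof, but the regret half of your argument has a genuine gap---precisely at the step you flagged as the main obstacle, and your proposed repair does not close it. The paper's construction (\Cref{lemma:1-delay}, applied in \Cref{app:batch-delay} to the $B$-fold product problem) interleaves \emph{two independent copies} of $\cA$, each fed the feedback from its own batches. Then \emph{both} complementary subsequences of the realized history are genuine runs of $\cA$ on feasible histories, so the regret bound of $\cA$ applies to each, and subadditivity gives the factor $2$. In your construction only the first-copy subsequence is a run of $\cA$: the actions in the second-copy subsequence were computed from the \emph{first-copy} responses, not from the second-copy prefix, so that subsequence is not consistent with $\cA$'s decision rule and inherits no regret guarantee. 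Your repair asserts that for the classes the framework invokes ``the response depends only on the current action,'' but this is false for $\mathfrak{B}^\eps$ (the relevant class, as the paper notes): the agent's response is merely \emph{constrained} to lie in $\BR^\eps(x_t)$, and within that set it may be chosen adversarially and history-dependently. Since responses to your second copies never influence any future action, the agent can respond helpfully on first copies (keeping the internal $\cA$ on an exploitable trajectory) and harmfully on second copies. (A secondary issue: dropping the tail $\rho_{J/2},\dots,\rho_{J-1}$ needs per-batch regret to be nonnegative, which also is not automatic, since an $\eps$-approximate best response can pay the principal more than the exact-best-response benchmark.)

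Here is a concrete failure. Take $\cX=\{a,b\}$, a fixed agent payoff with $\BR(a)=\{y_1\}$, $\BR^\eps(a)=\{y_1,y_2\}$, $\BR(b)=\BR^\eps(b)=\{z\}$, and principal payoffs $u(a,y_1)=1$, $u(a,y_2)=0$, $u(b,z)=1-1/T$. Let $\cA$ play $a$ until its (batch-boundary) feedback ever contains $y_2$, and play $b$ in all later rounds. Against any $\cB\in\mathfrak{B}^\eps$, $\cA$ incurs regret at most $B+1$: rounds where $a$ is met by $y_1$ contribute zero, at most one batch of rounds can be met by $y_2$ before $\cA$ switches, and all $b$-rounds contribute $1/T$ each. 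Against your $\cA'$, the agent that answers $y_1$ on every first copy and $y_2$ on every second copy lies in $\mathfrak{B}^\eps$; the internal $\cA$ never observes $y_2$, so $\cA'$ plays $a$ forever and earns exactly $T/2$, i.e.\ regret $T/2 \gg 2(B+1) \geq 2R_{\cA,\mathfrak{B}^\eps}(T)$ for large $T$. The fix is exactly the paper's: alternate batches between two \emph{independent} copies of $\cA$, each consuming the feedback generated on its own rounds, so that every regret-generating response is observed by the learner that produced that round's action.
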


\section{\Clinch{}: A Near-Optimal Robust Search Algorithm for SSGs}\label{sec:clinch}

We now turn to Stackelberg security games (SSGs), 
a canonical setting for principal-agent learning. 
In SSGs, the principal must allocate their limited resources to defend a set of targets from the agent, who aims who aims to attack advantageous targets left unprotected by the principal \citep{keikintveld2009}.
The principal first commits to a strategy, i.e., a probabilistic assignment of its resources to protect targets, and the agent then chooses a target to attack based on the principal's strategy.

Before treating non-myopic agents, we first design a robust search algorithm for SSGs.
For $n$-target games, our algorithm \Clinch{} approximates an optimal strategy for the principal using $\widetilde{O}(n)$ queries to a near best-responding agent (see \Cref{thm:clinch}). This query complexity is nearly optimal and improves upon the state-of-the-art of $O(n^3)$ queries for search with exact best responses \citep{peng2019learning}. To achieve this, we identify and leverage new structural properties of SSG equilibria to cast the principal's learning problem as quasi-convex optimization  with a separation oracle.

\subsection{Model and preliminaries}\label{sec:security-games-model}
A \emph{Stackelberg security game} (SSG) is a Stackelberg game $(\cX, \cY, u, v)$ where the agent attacks a target from the set $\cY = \{1,2,\ldots,n\}$ and the principal commits to a defense in the strategy space $\cX\subseteq [0, 1]^n$. A defense $\bvec{x}\in\cX$ corresponds to target $y\in\cY$ being defended with probability $x_y$. We assume that $\cX$ is closed, convex, and downward closed (i.e., if $\bv{x}\in\mathcal{X}$ and $\bv{x'} \in [0,1]^n$ is such that $x'_y \leq x_y$ for all $y$, then $\bv{x'} \in \cX$ as well). Finally, we assume payoffs depend only on the target attacked and the extent to which it was defended. Specifically, for each $y\in\cY$, $u(\bv{x}, y) = u^y(x_y)$ and $v(\bv{x}, y) = v^y(x_y)$, where $u^y\colon [0, 1]\to [0, 1]$ and $v^y\colon [0, 1]\to [0, 1]$ are, respectively, strictly increasing and strictly decreasing continuous functions in $x_y$. We note that this model generalizes a classical framework where $\cX$ is the space of marginal coverage probabilities achievable by a randomized allocation of defensive resources to certain schedules, under the ``subsets of schedules are schedules'' (SSAS) assumption \citep{korzhyk2011stackelberg}. This is the standard setting in the literature for learning in SSGs \citep{letchford2009learning,blum2014,peng2019learning}. Throughout, we write $\Unif(S)$ for the uniform distribution over a set $S \subseteq \cX$, $\mathbf{e}_y \in [0,1]^n$ for the standard basis vector of $y \in \cY$, and $\bv{0}_n,\bv{1}_n$ for the all zeros and ones vectors. We write $\bv{x} \leq \bv{x'}$ for $\bv{x},\bv{x'} \in \R^n$ if $x_y \leq x'_y$ for all $y \in \cY$.

\begin{remark}
\label{rem:clinch-simplex}
One important case of interest is  where the principal can defend only one target at a time (but is allowed to mix over which target to defend). Mathematically, this corresponds to the setting where $\mathcal{X}$ is the downward closure of the probability simplex $\Delta_{n-1}\coloneqq\{\bv{x} : \|\bv{x}\|_1 = 1 \land x_y\ge 0\:\forall y\}$, i.e., $\cX = \, \Delta_{n-1}^\leq\coloneqq\{\bv{x} : \|\bv{x}\|_1 \leq 1 \land x_y\ge 0\:\forall y\}$. We use the specialization of our framework to the simplex in \Cref{app:security-simplex} to facilitate the exposition of the main ideas of our approach.
\end{remark}

We consider learning a fixed SSG over $T$ rounds. During the $t$-th round, the principal announces a defense $\bv{x}^{(t)} \in \cX$, the agent attacks a target $y_t \in\cY$, and the players receive payoffs $u(\bv{x}^{(t)},y_t)$ and $v(\bv{x}^{(t)},y_t)$, respectively, with the agent payoff function $v$ unknown to the principal.
Recall that Stackelberg regret is given by $T \max_{\bv{x} \in \cX} u(\bv{x},\brr(\bv{x})) - \E\left[\sum_{t=1}^T u(\bv{x}^{(t)},y_t)\right]$.
When the agent is myopic and $y_t \in \BestResp(\bv{x}^{(t)})$ for all $t$, this task can be reframed as learning an optimal strategy for the principal using queries to a best response oracle that returns an arbitrary representative from $\BestResp(\bv{x})$ when given $\bv{x} \in \cX$. For non-myopic agents, we consider learning with an approximate best response oracle returning an element of $\BestResp^\eps(\bv{x})$, for some small $\eps > 0$.

\paragraph{Regularity assumptions.}

Additional structural assumptions are standard, and in fact necessary, for learning in security games. The conditions we use correspond to bit precision and non-degeneracy assumptions in previous work~\citep{letchford2009learning, blum2014, peng2019learning}.
First, we require a known \emph{slope bound} $C \geq 1$ such that, for all $0 \leq s < t \leq 1$ and $y \in \cY$:
\begin{equation*}
    \frac{1}{C} \leq \frac{v^y(s) - v^y(t)}{t-s} \leq C \quad \text{and} \quad 0 < \frac{u^y(t) - u^y(s)}{t-s} \leq C.
\end{equation*}
At a high level, this slope assumption bounds how quickly the defender and attacker utilities improve and degrade, respectively, with one extra unit of protection on an attacked target. When utility functions are linear, the upper bounds must be satisfied with $C=1$ to ensure payoffs in~$[0,1]$. When each $v^y$ is non-linear but continuously differentiable with derivative bounded away from 0, compactness of $[0,1]$ implies that these inequalities hold for sufficiently large $C$.

Second, we define for each target $y \in \cY$ the \emph{best response region} $K_y\subseteq\mathcal{X}$ as the set of principal strategies for which $y$ is a best response, i.e.,
$K_y \coloneqq \{ \bv{x} \in \cX : y \in \BestResp(\bv{x}) \}$. We require that non-empty best response regions $K_y$ have known minimum \emph{width} $W > 0$ along its target's dimension, i.e., 
\begin{equation*}
    \max_{x \in K_y} x_y - \min_{\bv{x} \in K_y} x_y = \max_{\bv{x} \in K_y} x_y \geq W \qquad \text{for all $y \in \cY$ with $K_y \neq \emptyset$},
\end{equation*}
where the equality uses that $\min_{\bv{x} \in K_y} x_y = 0$ by downward closure of $\cX$ and monotonicity of $v^y$. This assumption implies that any target that a best-responding attacker can be made to attack under some defense $x$ is a best response to a sufficiently substantial set of defenses.

\begin{remark}
The state-of-the art algorithm for learning SSGs  \citep{peng2019learning} imposes that utilities are linear with non-zero coefficients specified by $L$ bits, implying a slope bound of $C = 2^{L}$. Moreover, it requires a minimum volume $2^{-nL}$ for each non-empty region, implying a width bound of $W = 2^{-nL}$ as a region contained in $[0,1]^n$ has volume bounded by its width along any dimension.
\end{remark}

\subsection{Structural properties of SSG equilibria}\label{sec:ssg-structure}

To characterize equilibrium structure in SSGs underlying the analysis of \Clinch{}, we introduce the notion of \emph{conservative} strategies, where the principal wastes no defensive resources on targets not attacked by a best-responding agent. This property was originally defined for optimal strategies \citep{blum2014} but our generalization enables a helpful decoupling in our analysis.

\begin{definition}
A strategy $\bv{x}\in\cX$ is called \emph{conservative} if $x_y > 0$ only for $y \in \cY$ such that $y\in\BestResp(\bv{x})$.
\end{definition}

\begin{figure}[t]
\centering
\hspace{0mm}
\begin{minipage}{0.7\textwidth}
\centering
\includegraphics[width=1\textwidth]{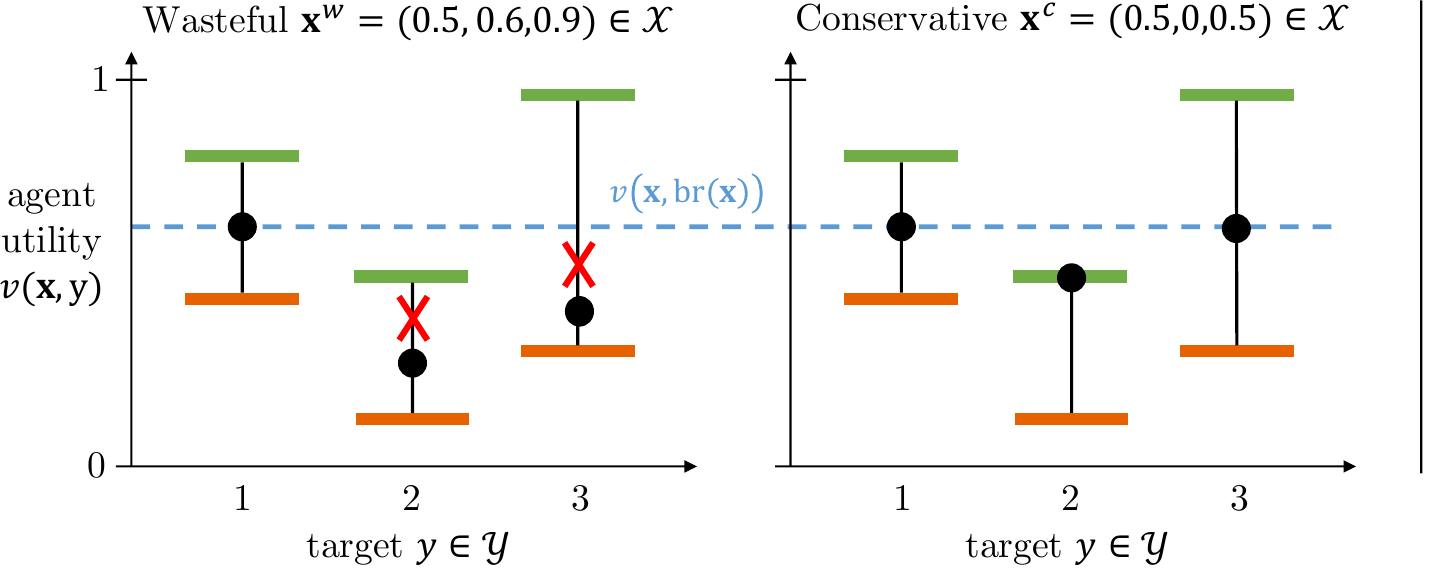}
\end{minipage}
\hspace{1mm}
\begin{minipage}{0.2\textwidth}
\vspace{-8mm}
{\fontsize{8}{12}\selectfont

$\cY = \{1,2,3\}, \, \cX = [0,1]^3$\vspace{1mm}

$v(\bv{x},1) = 0.8 - 0.35x_1$\vspace{-0.5mm}

$v(\bv{x},2) = 0.5 -0.5x_2$\vspace{-0.5mm}

$v(\bv{x},3) = 0.95 - 0.65x_3$\vspace{5mm}

\includegraphics[width=0.22\textwidth]{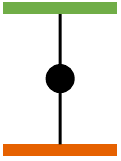}

\vspace{-10.6mm}
\hspace{7mm}
\begin{minipage}{0.75\textwidth}
    $v^y(0)$\\
    $v(\bv{x},y) = v^y(x_y)$\\
    $v^y(1)$
\end{minipage}
}
\end{minipage}\vspace{2mm}
\caption{
Agent utility profiles for a wasteful principal strategy $\bv{x}^w$ and a conservative strategy $\bv{x}^c$ for a 3-target SSG. The strategy $\bv{x}^w$ is wasteful because it allocates non-zero weight to targets 2 and 3, but $\BR(\bv{x}^w) = \{1\}$. The strategy $\bv{x}^c$ is conservative because $\BR(\bv{x}^c) = \{1,3\}$ and it allocates no weight to target 2.}
\label{fig:stability}
\vspace{-4mm}
\end{figure}

\Cref{fig:stability} compares agent utility profiles for a wasteful and a conservative principal strategy. The red crosses indicate wasteful coverage that must be eliminated to conserve resources while maintaining the same best response payoff $w = v(\bv{x},\brr(\bv{x}))$ for the agent. We now show that conservative strategies are uniquely determined by this payoff, which relates monotonically to their coordinates.

\begin{lemma}\label{lem:ssg-optimality}
Suppose $\bv{x}, \bv{x'}\in\cX$ are two conservative principal strategies, and define a best-responding agent's utilities as $V \coloneqq v(\bv{x}, \brr(\bv{x}))$ and $V' \coloneqq v(\bv{x'}, \brr(\bv{x'}))$. If $V = V'$, then $\bv{x} = \bv{x'}$. Otherwise, if $V < V'$, we have $\BestResp(\bv{x'}) \subseteq \BestResp(\bv{x})$ and $x_y \geq x'_y$ for all $y \in \cY$ with equality only if $x_y = x'_y = 0$.
\end{lemma}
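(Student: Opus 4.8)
The plan is to reduce everything to a single structural fact: a conservative strategy is completely determined, coordinate by coordinate, by its best-response payoff. Writing $w = v(x,\brr(x)) = \max_{y\in\cY} v^y(x_y)$, I would show that for a conservative $x$ and every $y \in \cY$,
\[
x_y = \begin{cases} (v^y)^{-1}(w), & \text{if } v^y(0) > w,\\ 0, & \text{if } v^y(0) \le w,\end{cases}
\]
where $(v^y)^{-1}$ is well-defined since each $v^y$ is continuous and strictly decreasing, hence a bijection onto its range. To prove the formula I argue the two cases separately. If $v^y(0) > w$, then $x_y = 0$ would force $v^y(x_y) = v^y(0) > w$, contradicting that $w$ is the maximum payoff; hence $x_y > 0$, and conservativeness gives $y \in \BestResp(x)$, i.e. $v^y(x_y) = w$, so $x_y = (v^y)^{-1}(w)$ (which automatically lands in $(0,1]$). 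Conversely, if $v^y(0) \le w$ and we had $x_y > 0$, conservativeness would give $v^y(x_y) = w \ge v^y(0) > v^y(x_y)$ by strict monotonicity, a contradiction; hence $x_y = 0$.

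Claim~1 is then immediate: if $w = w'$, the displayed formula returns the same value for $x_y$ and $x'_y$ at every coordinate, so $x = x'$. For the monotonicity part of Claim~2, I assume $w < w'$ and compare the two formulas coordinate by coordinate, splitting on where $v^y(0)$ sits. When $v^y(0) \le w$ (so also $v^y(0) \le w'$), both coordinates are $0$. When $w < v^y(0) \le w'$, the formula gives $x_y = (v^y)^{-1}(w) > 0 = x'_y$. When $v^y(0) > w'$, both coordinates are active and strict monotonicity of $(v^y)^{-1}$ together with $w < w'$ yields $x_y = (v^y)^{-1}(w) > (v^y)^{-1}(w') = x'_y$. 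Thus $x_y \ge x'_y$ in every case, with equality only in the first, where indeed $x_y = x'_y = 0$.

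For the inclusion $\BestResp(x') \subseteq \BestResp(x)$, I take any $y \in \BestResp(x')$, so $v^y(x'_y) = w'$. If $x'_y > 0$, the active case of the characterization (applied to $x'$) requires $v^y(0) > w'$; if $x'_y = 0$, then directly $v^y(0) = v^y(x'_y) = w'$. Either way $v^y(0) \ge w' > w$, so $y$ is active for $x$, giving $v^y(x_y) = w$ and hence $y \in \BestResp(x)$.

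The case analysis is routine; the one delicate point—and the main obstacle—is the boundary case $v^y(0) = w$, namely a target that is a best response yet receives zero coverage. The key design choice that keeps the argument clean is to define the active/inactive dichotomy through the strict inequality $v^y(0) > w$ rather than through membership $y \in \BestResp(x)$: this prevents such zero-coverage best responses from breaking the uniqueness in Claim~1 while still being exactly what drives the inclusion in Claim~2.
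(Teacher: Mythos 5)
Your proof is correct and takes essentially the same approach as the paper's: both rest on the same core fact that a conservative strategy is uniquely reconstructed from $w$ via the case split on $v^y(0)$ versus $w$ (your dichotomy $v^y(0) \le w$ simply merges the paper's two cases $v^y(0) < w$ and $v^y(0) = w$). The only cosmetic difference is that you derive the monotonicity and inclusion claims by comparing the explicit coordinate formulas, whereas the paper argues directly from $v^y(x'_y) = w' > w \ge v^y(x_y)$ and then invokes conservativeness; both executions are sound.
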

\begin{proof}
To show that $V = V'$ implies $\bv{x} = \bv{x'}$, we prove that the equation $v(\bv{\tilde{x}},\br(\bv{\tilde{x}})) = V$ has at most one conservative solution $\bv{\tilde{x}} \in \cX$ (which must thus coincide with $\bv{x}$). For each target $y \in \cY$, we prove that $\tilde{x}_y$ is uniquely determined by $V$, with our analysis split into three cases based on $v^y(0)$. This is the utility the agent obtains by attacking $y$ when it is entirely undefended ($\tilde{x}_y = 0$), corresponding to the green upper bars in \Cref{fig:stability}. 
\begin{enumerate}[(i)]
    \item If $v^y(0) > V$  then $\tilde{x}_y > 0$; otherwise, we would have $v(\bv{\tilde{x}},\br(\bv{\tilde{x}})) \geq v(\bv{\tilde{x}},y) = v^y(0) > V = v(\bv{\tilde{x}},\br(\bv{\tilde{x}}))$, which is a contradiction. Since $\bv{\tilde{x}}$ is conservative, it further holds that $y\in\BestResp(\bv{\tilde{x}})$ and thus $v^y(\tilde{x}_y) = V$. Since $v^y(\tilde{x}_y)$ is strictly decreasing, $V$ uniquely determines $\tilde{x}_y$.
    \item If $v^y(0) < V$  then $v^y(\tilde{x}_y) \leq v^y(0) < V$. Hence $y \not\in \BestResp(\bv{\tilde{x}})$, and so $\tilde{x}_y = 0$ as $\bv{\tilde{x}}$ is conservative. 
    \item If $v^y(0) = V$, then either $\tilde{x}_y = 0$ or $\tilde{x}_y > 0$. The latter implies $y\not\in\BestResp(\bv{\tilde{x}})$, which contradicts conservativeness. Hence we must have $\tilde{x}_y = 0$ here as well.
\end{enumerate}

Next, we suppose that $V < V'$ and show that $\BR(\bv{x'}) \subseteq \BR(\bv{x})$.
For any $y \in \BestResp(\bv{x}')$, we have $v^y(x'_y) = V' > V \geq v^y(x_y)$, so monotonicity implies $x'_y < x_y$. Hence, $x_y > 0$ and so $y \in \BestResp(\bv{x})$ by con-servativeness. 
Lastly, we show that $x_y \geq x'_y$ for all $y$, with equality only if $x_y = x'_y = 0$. Indeed, if
$x'_y > 0$, $y \in \BestResp(\bv{x}')$ and so $x_y > x'_y$, as above. For $y$ with $x'_y = 0$, we have $x_y \geq 0 = x'_y$.
\end{proof}

Using this, we establish that the unique conservative $\bv{x^\star}\in \cX$ maximizing the principal's payoff also minimizes the agent's payoff. We exploit this for efficient optimization: while the principal's payoff $u(\bv{x}, \brr(\bv{x}))$ is difficult to directly optimize, as it may be discontinuous in $\bv{x}$, the equivalent objective $v(\bv{x}, \brr(\bv{x})) = \max_{y\in\cY} v^y(x_y)$ is quasi-convex in $x$, and even convex when the $v^y$ are linear.

\begin{proposition}\label{prop:ssg-solution}
There exists a unique \emph{conservative} strategy $\bv{x^\star}\in\argmax_{x\in\cX} u(\bv{x}, \brr(\bv{x}))$.
Moreover, this principal strategy $\bv{x^\star}$ is also the unique \emph{conservative} strategy in $\argmin_{\bv{x}\in\cX} v(\bv{x}, \brr(\bv{x}))$.
\end{proposition}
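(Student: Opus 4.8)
The plan is to lean on \Cref{lem:ssg-optimality}, which already shows that conservative strategies form a totally ordered one-parameter family indexed by their best-response payoff $w = v(x,\brr(x))$: a smaller $w$ forces every coordinate weakly larger and enlarges the best-response set. I would (i) reduce optimization over all of $\cX$ to optimization over conservative strategies via a \emph{trimming} operation, (ii) show both objectives move monotonically along this family, and (iii) conclude that the conservative strategy with the smallest feasible payoff $w$ is simultaneously the unique $u$-maximizer and the unique $v$-minimizer.

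First I would set up the trimming operation. Given any $x \in \cX$ with payoff $w = \max_y v^y(x_y)$, I decrease each coordinate $x_y$ of a non-best-responding target down to the level where $v^y$ equals $w$ (or to $0$ if $v^y(0) \le w$), leaving the best-responding coordinates untouched. Downward closure of $\cX$ keeps the result $\tilde x$ feasible; by construction $\tilde x$ is conservative with $v(\tilde x, \brr(\tilde x)) = w$ unchanged; and since $\BestResp(x) \subseteq \BestResp(\tilde x)$ while $\tilde x_y = x_y$ on $\BestResp(x)$, enlarging the index set of the max gives $u(\tilde x, \brr(\tilde x)) \ge u(x, \brr(x))$. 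Thus trimming never hurts the principal and preserves the agent's payoff.

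Next I would pin down existence. Since $v(x,\brr(x)) = \max_y v^y(x_y)$ is continuous (indeed quasi-convex) and $\cX$ is compact, it attains a global minimum $w^\star$ at some $x_0$; trimming $x_0$ yields a conservative strategy attaining $w^\star$, which \Cref{lem:ssg-optimality} identifies as the unique conservative strategy with that payoff. Call it $x^\star$. As $w^\star$ is the global minimum and every other conservative strategy has strictly larger payoff (hence differs from $x^\star$), $x^\star$ is immediately the unique conservative minimizer of $v(\cdot,\brr(\cdot))$.

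Finally I would transfer this to the $u$-objective through monotonicity. For conservative $x,x'$ with $w < w'$, I would take a target $y'$ achieving $u(x',\brr(x'))$, note $y' \in \BestResp(x') \subseteq \BestResp(x)$, and argue $x_{y'} > x'_{y'}$ strictly; the interior case follows from the strict coordinate inequality of \Cref{lem:ssg-optimality}, while the boundary case $x'_{y'}=0$ needs a separate step, using $v^{y'}(0) = w' > w = v^{y'}(x_{y'})$ to force $x_{y'} > 0$. Then $u(x,\brr(x)) \ge u^{y'}(x_{y'}) > u^{y'}(x'_{y'}) = u(x',\brr(x'))$, so $u(\cdot,\brr(\cdot))$ is strictly decreasing in $w$ over conservative strategies. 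Combined with the trimming bound, for any $x \in \cX$ we get $u(x,\brr(x)) \le u(\tilde x,\brr(\tilde x)) \le u(x^\star,\brr(x^\star))$, so $x^\star$ maximizes $u(\cdot,\brr(\cdot))$ over all of $\cX$, and strict monotonicity makes it the unique conservative maximizer. I expect the trimming construction---checking that it simultaneously preserves $w$, stays in $\cX$, and only grows the best-response set---to be the main obstacle, together with the boundary case $x'_{y'}=0$ in the strictness argument.
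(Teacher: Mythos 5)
Your proof is correct, and while it rests on the same two pillars as the paper's argument---a trimming operation that reduces any strategy to a conservative one without hurting the principal, and the monotone structure of \Cref{lem:ssg-optimality}---it runs the logic in the opposite direction, and this buys you something concrete. The paper starts from an arbitrary element of $\argmax_{x\in\cX} u(x,\brr(x))$, trims it, proves uniqueness by contradiction, and only afterwards deduces the $v$-minimization property; along the way it (a) tacitly assumes the argmax of the possibly discontinuous objective $u(x,\brr(x))$ is attained, and (b) invokes the width regularity assumption to rule out a zero coordinate of $x^\star$ so that the weak inequality of \Cref{lem:ssg-optimality} can be made strict. You instead start from a minimizer of $v(x,\brr(x)) = \max_y v^y(x_y)$, whose existence is immediate since this objective is genuinely continuous on the compact set $\cX$; trimming it and applying \Cref{lem:ssg-optimality} gives the unique conservative $v$-minimizer $x^\star$ at once, and you then transfer to the $u$-objective by proving $u$ is strictly decreasing in $w$ along the conservative family. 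Crucially, your boundary case $x'_{y'}=0$ is resolved purely from conservativeness (if $y'\in\BestResp(x')$ with $x'_{y'}=0$ then $v^{y'}(0)=w'>w=v^{y'}(x_{y'})$ forces $x_{y'}>0$), so you never need the width assumption; your argument thus shows the proposition holds without that regularity hypothesis, and it settles existence of the conservative $u$-maximizer as a byproduct rather than an unstated premise. The paper's ordering, by contrast, reaches the $u$-statement (the one used downstream for the learning results) more directly, but your reorganization is the tighter proof.
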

\begin{proof}
We first show that a maximizer exists. For fixed $y \in \cY$, the best response set $K_y = \{ \bv{x} \in \cX : v(\bv{x},y) \geq v(\bv{x},y') \: \forall \: y' \in \cY \}$ is compact, since each function $v^y - v^{y'}$ is continuous and $\cX$ is compact. Since $u^y$ is also continuous, there must exist a solution to $\max_{\bv{x} \in K_y} u^y(x_y)$, and since $\cY$ is finite, one of these solutions must achieve $\max_{\bv{x} \in \cX} u(\bv{x},\br(\bv{x})) = \max_{y \in \cY} \max_{\bv{x} \in K_y} u^y(x_y)$.

Let $\bv{\tilde{x}}$ be an optimal strategy. If $\bv{\tilde{x}}$ is not conservative, we transform it into a conservative $\bv{x^{\star}} \in\argmax_{x\in\cX} u(\bv{x}, \brr(\bv{x}))$. Given that $\bv{\tilde{x}}$ is not conservative, there exists some $y\in\cY$ such that $\bv{\tilde{x}}_y > 0$ but $y\not\in\BestResp(\bv{\tilde{x}})$. By the downward closure property of $\cX$, we can reduce $\tilde{x}_y$ until either $\tilde{x}_y = 0$ or $y$ becomes a best response. If $y$ does not become a best response, then $u(\bv{\tilde{x}}, \brr(\bv{\tilde{x}}))$ stays the same. Otherwise, since the other coordinates are unchanged and $\brr$ breaks ties in favor of the principal, adding $y$ to the best response set can only decrease $u(\bv{\tilde{x}}, \brr(\bv{\tilde{x}}))$. While $\bv{\tilde{x}}$ is not conservative, we iterate this procedure. Since either $x_y$ gets set to $0$ or $y$ gets added to $\BR(\bv{\tilde{x}})$, this procedure will terminate after $n$ iterations. Thus, there exists a conservative strategy $\bv{x^\star}\in\argmax_{\bv{x}\in\cX} u(\bv{x}, \brr(\bv{x}))$.

Next, we show that $\bv{x^\star}$ is unique. Suppose for the sake of contradiction there exists another conservative strategy $\bv{x'}\neq \bv{x^\star}$ in $\argmax_{\bv{x}\in\cX} u(\bv{x}, \brr(\bv{x}))$.
Define $V^\star\coloneqq v(\bv{x^\star}, \brr(\bv{x^\star}))$ and $V'\coloneqq v(\bv{x'}, \brr(\bv{x'}))$.
By \Cref{lem:ssg-optimality}, we must have $V^\star\neq V'$.
Assume without loss of generality that $V^\star < V'$. We claim $\bv{x'}$ cannot be optimal for the principal. Indeed, fix any $y' \in \BR(\bv{x'})$ and $y^\star \in \BR(\bv{x^\star})$.
Note that if $\bv{x}^\star_{y^\star} = 0$, then the best response region $K_{y^\star}$ has  width $\max_{\bv{x} \in K_{y^\star}} x_{y^\star} = x^\star_{y^\star} = 0$, which is forbidden by our regularity assumption.
Therefore, $x^\star_{y^\star} >0$ and, by \Cref{lem:ssg-optimality}, we have that $x'_{y^\star} < x^\star_{y^\star}$ and $y'\in \BestResp(\bv{x}^\star)$. In this case, monotonicity of the utilities and optimality of $y^\star$ within $\BR(\bv{x}^\star)$ imply that $u(\bv{x}^\star, y^\star) \geq u(\bv{x}^\star, y') > u(\bv{x}', y')$. This shows that $\bv{x}'$ is not an optimal strategy for the principal.

Having shown that $\bv{x}^\star$ is well-defined, we now prove that it also minimizes $v(\bv{x}, \brr(\bv{x}))$. We first show that there exists a conservative strategy $\bv{x}'$ minimizing $v(\bv{x}, \brr(\bv{x})) = \max_y v(\bv{x}, y)$ over $\cX$. As before, we start with any minimizer $\bv{\tilde{x}}$ of $v(\bv{x}, \brr(\bv{x}))$. Such a minimizer must exist because $v(\bv{x}, \brr(\bv{x})) = \max_y v^y(x_y) $ is continuous in $\bv{x}$ as the maximum of finitely many continuous functions, and $\cX$ is compact. Then, if $\bv{\tilde{x}}$ is not conservative at some $y$, we can decrease $\tilde{x}_y$ until either $\tilde{x}_y = 0$ or $y$ is a best response, leaving $v(\bv{\tilde{x}},\br(\bv{\tilde{x}}))$ unchanged. As before, this procedure will terminate after $n$ iterations to arrive a conservative minimizer $\bv{x}'$. By minimality, we know that $v(\bv{x}',\brr(\bv{x}')) \leq V^\star$. If this inequality were strict, then the argument in the previous paragraph would contradict $\bv{x}^\star \in \argmax_{\bv{x} \in \cX} u(\bv{x},\brr(\bv{x}))$. Hence we have equality and \Cref{lem:ssg-optimality} implies that $\bv{x}' = \bv{x}^\star$.
\end{proof}

\begin{remark}
Theorem 3.8 of \citet{korzhyk2011stackelberg} identifies that all optimal strategies for the principal minimize the utility of a best-responding agent. Moreover, assuming a homogeneity condition on $\cX$ satisfied in the simplex setting (but not for general SSGs), Theorem 3.10 of \citet{korzhyk2011stackelberg} implies that \Cref{prop:ssg-solution} holds without requiring conservativeness. To the best of our knowledge, these properties have not been previously exploited for learning SSGs.
\end{remark}

\subsection{Design and analysis of \Clinch{}}
\label{sec:ssg-results}

Our algorithm \Clinch{} (\Cref{alg:clinch}) estimates the unique conservative optimizer $\bv{x}^\star$ guaranteed by \Cref{prop:ssg-solution}, even with inexact best response feedback. Its main loop searches for an approximate minimizer of the agent's best response utility $v(\bv{x},\brr(\bv{x}))$, while the post-processing routine \ConserveMass{} (\Cref{alg:conserve-mass}) ensures that this strategy is nearly conservative. 
More precisely, \Clinch{} maintains an active search region $S$, determined by entry-wise lower and upper bounds $\underline{\bv{x}},\overline{\bv{x}}$ which may be initialized using prior knowledge of $\bv{x}^\star$.  Upon querying the centroid\footnote{We give our proof assuming that we can exactly compute the centroid $\E_{w \sim\operatorname{Unif}(S)}[w]$ of each search region $S$. Handling an approximate centroid is standard (see, e.g., \citet{bertsimas2004solving}), and we omit the details. (Moreover, sample complexity is not affected since we have full knowledge of the set $S$ at each iteration.)} 
$\bv{x}$ of $S$ and receiving feedback $y \in \cY$, we deduce that $\bv{x}^\star \geq x_y - C\eps$. By updating $\underline{x}_y$ accordingly, we ``clinch'' this progress and either shrink $S$ significantly or remove $y$ from the active set $\cR$, in which case $S$ is flattened along this dimension in the next round.
The termination condition at Step~\ref{step:clinch-terminate} ensures that the agent's utility in best response to the final query is sufficiently small.
After this, \Stabilize{} performs a binary search for each target $y$ that approximates the procedure described in the proof of \Cref{prop:ssg-solution}, reducing $x_y$ until it nears the threshold where $y$ becomes a best response.

\begin{center}
\begin{minipage}{0.495\textwidth}
\begin{algorithm}[H]
\caption{\Clinch}\label{alg:clinch}
\DontPrintSemicolon
\SetAlgoNoLine
\SetKwInOut{Input}{input}
\Input{accuracy $\delta \in (0,1]$,
entry-wise lower and upper bounds $\underline{\bv{x}},\overline{\bv{x}} \in \R^n$ for $\bv{x}^\star$, $\eps$-approximate best response oracle \Oracle{} with $\Oracle(\bv{x})\!\in\! \BestResp^\eps(\bv{x})$\;}
$S \gets \cX$, $\cR \gets \cY$, $y \gets 1$, $\lambda \gets \smash{\frac{\delta}{4C^2}}$\;
    \While{$y \in \cR$}{\label{step:clinch-terminate}
        \For{$y' \in \cR$ \emph{\textbf{with}} $\underline{\bv{x}} + \lambda \mathbf{e}_{y'} \not\in \cR$}{
            $\cR \gets \cR \setminus \{y'\}$, \: $\smash{\overline{x}_{y'} \gets \underline{x}_{y'}}$ \label{step:clinch-coordinate-locking}
        }
        $S \gets \{ \bv{x}' \in \cX : \underline{\bv{x}} \leq \bv{x}' \leq \overline{\bv{x}}
        \}$\; \label{step:clinch-search-region}
        $\bv{x} \gets \E_{\bv{x}' \sim \Unif(S)} [\bv{x}']$\; \label{step:clinch-oracle-query}
        $y \gets \Oracle(\bv{x})$\; \label{step:clinch-oracle-response}
        $\smash{\underline{x}_y \gets x_y - C\eps}$\; \label{step:clinch-lower-bound-update}
    }
    \Return $\ConserveMass(\bv{x},\lambda,\underline{\bv{x}})$
\end{algorithm}
\end{minipage}
\hfill
\begin{minipage}{0.475\textwidth}
\begin{algorithm}[H]
\caption{\ConserveMass}\label{alg:conserve-mass}
\DontPrintSemicolon
\SetAlgoNoLine
\SetKwInOut{Input}{input}
\Input{strategy $\bv{x} \in \cX$, accuracy $\lambda \in (0,1]$, entry-wise lower bound $\underline{\bv{x}} \in \R^n$ for $\bv{x}^\star$, approximate best response oracle $\Oracle$\;}
\For{$y \in \cY$}{ \label{step:conserve-mass-loop-start}
    $\ell \gets \underline{x}_y$, $u \gets x_y$\;
    \While{$u - \ell \geq \lambda$}{
        $m \gets (u + \ell)/2$\;
        \textbf{if} $\Oracle(\bv{x} + [m-x_y]\mathbf{e}_y) = y$\;
        \hspace*{5mm} $\ell \gets m$\;
        \textbf{else} \:$u \gets m$\;
    }
   $\hat{x}_y \gets \ell$\;
} \label{step:conserve-mass-loop-end}
\Return $\hat{\bv{x}}$
\end{algorithm}
\end{minipage}
\end{center}

\begin{lemma}[Minimize]
\label{lem:minimization}
Fix $\delta \in (0,1]$ and $\underline{\bv{x}},\overline{\bv{x}} \in \R^n$ with $\underline{\bv{x}} \leq \bv{x}^\star \leq \overline{\bv{x}}$. Then, after $\smash{O\bigl(n\log\frac{C^2\alpha n}{\delta}\bigr)}$ queries to a $\smash{\frac{\delta}{33C^3n}}$-approximate best response oracle, \Clinch{} passes a strategy $\bv{x} \in \cX$ to \ConserveMass{} with
$v(\bv{x},\brr(\bv{x})) \leq v(\bv{x}^\star,\brr(\bv{x}^\star)) + \frac{\delta}{2C}$, where $\alpha = \max_{y \in \cY}\overline{x}_y - \underline{x}_y$.
\end{lemma}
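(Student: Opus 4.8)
The plan is to analyze the main \texttt{while} loop of \Clinch{} by maintaining the invariant $\underline{x} \le x^\star$ (coordinatewise, on the current lower bound) together with the geometry of the search region $S$, and to treat the accuracy guarantee and the query bound separately. The central deduction I would establish first is that whenever $y = \Oracle(x)$ for the centroid $x$ of $S$, the update $\underline{x}_y \gets x_y - C\eps$ preserves $\underline{x} \le x^\star$. Since $y \in \BestResp^\eps(x)$, we have $v^y(x_y) \ge v(x,\brr(x)) - \eps \ge w^\star - \eps$, where $w^\star := v(x^\star,\brr(x^\star)) = \min_{x\in\cX} v(x,\brr(x))$ by \Cref{prop:ssg-solution}. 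Splitting into the same three cases as in the proof of \Cref{lem:ssg-optimality} (according to whether $v^y(0)$ is $>$, $<$, or $=$ to $w^\star$) and using the slope lower bound $1/C$ on $v^y$, each case yields $x^\star_y \ge x_y - C\eps$, which both justifies the update and maintains the invariant inductively. I would pair this with a second structural fact about coordinate locking: when $y$ is removed from $\cR$ because $\underline{x} + \mathbf{e}_y\lambda \notin S$, downward-closedness of $\cX$ forces the $y$-extent of $S$ to be below $\lambda$, and hence $x^\star_y < \underline{x}_y + \lambda$. (If some point of $S$ had $y$-coordinate $\ge \underline{x}_y + \lambda$, downward closure would place $\underline{x}+\mathbf{e}_y\lambda$ in $\cX$; applying the same comparison to $x^\star \ge \underline{x}$ gives the bound on $x^\star_y$.)

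For correctness, at termination the returned query point $x$ satisfies $\Oracle(x) = y$ for a \emph{locked} $y$, so $\overline{x}_y = \underline{x}_y$ forces $x_y = \underline{x}_y$, and $y \in \BestResp^\eps(x)$ gives $v(x,\brr(x)) \le v^y(x_y) + \eps = v^y(\underline{x}_y) + \eps$. Combining the locking bound $\underline{x}_y > x^\star_y - \lambda$ with the slope upper bound $C$ yields $v^y(\underline{x}_y) \le v^y(x^\star_y) + C\lambda \le w^\star + C\lambda$. With $\lambda = \frac{\delta}{4C^2}$ and $\eps = \frac{\delta}{33C^3 n} \le \frac{\delta}{4C}$, this gives $v(x,\brr(x)) \le w^\star + C\lambda + \eps \le w^\star + \frac{\delta}{2C}$, as required.

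The main obstacle is the query bound, because the coordinate-locking step changes the ambient dimension of $S$, so a naive $k$-dimensional volume potential (with $k = |\cR|$) is not monotone across locks. The plan is a potential argument on $V := \vol_k(S)$. Each \emph{productive} query (centroid of $S$, active queried target) cuts $S$ to $S \cap \{x' : x'_y \ge x_y - C\eps\}$; by Grünbaum's inequality the halfspace $\{x'_y \le x_y\}$ through the centroid has volume $\ge V/e$, and since active coordinates have extent $\ge \lambda \gg C\eps$, the offset slab of width $C\eps$ is only a small constant fraction of $V$ (bounded via $\max_t A(t)/V \le k/h_y$, the cross-sectional estimate from Brunn--Minkowski, with $h_y \ge \lambda$ the extent in direction $y$). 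Hence the cut removes a constant fraction and $V$ drops by a factor $\beta < 1$. The delicate point is the refund at a lock, which I would control using downward-closedness: the cross-sections $A(t)$ of $S$ along the locked coordinate are \emph{non-increasing}, giving two consequences at once. First, the marginal mean lies in the lower half of the extent, so an active coordinate's extent shrinks by at most a factor of two per query and is therefore still $\ge \lambda/2$ when it first drops below $\lambda$ and triggers a lock. Second, the concavity bound $\int A \ge A(\underline{x}_y)\,w/k$ shows the dimension drop multiplies $V$ by at most $k/w^{\mathrm{lock}} \le 2k/\lambda \le 2n/\lambda$.

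Finally, since $\underline{x} + \mathbf{e}_y\lambda \in S$ for every active $y$, convexity gives the simplex floor $V \ge \lambda^k/k!$. There are at most $n$ locks, so summing the log-volume changes over the run---initial $\log V \le n\log\alpha$, floor $\log V \ge n\log\lambda - \log n!$, and total refund $\le n\log(2n/\lambda)$---yields $P\log(1/\beta) \le n\log\alpha - 2n\log\lambda + O(n\log n)$, bounding the number of productive (hence total) queries by $O\!\left(\frac{1}{\log(1/\beta)}\,n\log\frac{C^2\alpha n}{\delta}\right) = O\!\left(n\log\frac{C^2\alpha n}{\delta}\right)$. I expect the refund-across-locks accounting, and in particular extracting the two non-increasing-cross-section consequences from downward-closedness, to be the crux of the argument; the accuracy bound and the per-query Grünbaum estimate are comparatively routine.
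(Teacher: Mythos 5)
Your proposal follows essentially the same route as the paper's proof: the invariant $\underline{x} \le x^\star$, the accuracy argument at termination (a locked response coordinate satisfies $x^\star_y < \underline{x}_y + \lambda$, whence the slope bounds give $v(x,\brr(x)) \le v(x^\star,\brr(x^\star)) + C\lambda + \eps \le v(x^\star,\brr(x^\star)) + \frac{\delta}{2C}$), and the volume potential with a constant-factor drop per query, a bounded refund per coordinate lock, and the simplex floor $\lambda^k/k!$ are all exactly the paper's scheme, with matching constants. The only methodological difference is how the two geometric estimates are established: for the per-query drop you combine exact Gr\"unbaum with the Brunn--Minkowski bound $\max_t A(t) \le kV/h$ on cross-sections, where the paper proves a self-contained approximate Gr\"unbaum inequality (\Cref{lem:volume-blowup}) via a cone construction exploiting downward closure; for the lock refund you use non-increasing cross-sections plus the concavity bound, where the paper lower-bounds $\vol(S_i)$ by the volume of the convex hull of the slice with the witness points $\underline{x}+\lambda\mathbf{e}_z$. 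These are interchangeable.

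There is, however, one genuine soft spot, precisely at the step you flag as the crux. Your justification that the extent at lock time is at least $\lambda/2$ (``the marginal mean lies in the lower half of the extent'') covers only the coordinate that was just cut. But because $\cX$ is not a box, a cut on $y_i$ can trigger locks on \emph{other} coordinates: on $\Delta^\leq_n$, for instance, raising $\underline{x}_{y_i}$ can make $\underline{x} + \lambda\mathbf{e}_{y'} \notin \cX$ for $y' \ne y_i$, and several coordinates can lock in the same iteration. For such $y'$ the factor-two claim is still true, but it needs a separate convexity argument (move along the segment from the point of $S$ maximizing $z_{y'}$ toward the point maximizing $z_{y_i}$: since the cut retains at least half of $y_i$'s extent, the segment re-enters the cut region after losing at most half of $y'$'s extent), and for simultaneous locks your iterated slicing must re-certify an extent lower bound after each slice, which your write-up does not do. The cleanest repair---in effect what the paper's hull argument accomplishes in one step---is to slice at the \emph{old} lower-bound values, where every coordinate $z$ that was active at query time has extent at least $\lambda$ because the witness $\underline{x}+\lambda\mathbf{e}_z$ lies in $S$, and to observe that these witnesses survive each successive slice (their other coordinates sit at the lower bounds); one then passes from the old to the new lower bound using monotonicity of cross-sections under downward closure. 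With that repair your proof is complete and yields the stated query and accuracy bounds.
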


\begin{lemma}[Conserve]
\label{lem:stabilization}
Fix $\lambda\in (0,1]$, $\bv{x} \in \cX$, and $\underline{\bv{x}} \in \R^n$ with $\underline{\bv{x}} \leq x$. Then $\Stabilize$ returns $\hat{\bv{x}} \in \cX$ with $\hat{x}_y > \underline{x}_y$ only for $y \in \BestResp^{3C\lambda}(\hat{\bv{x}})$ and such that $v(\hat{\bv{x}},\brr(\hat{\bv{x}})) \leq v(\bv{x},\brr(\bv{x})) + 2C\lambda$, using $O(n \log\frac{\alpha}{\lambda})$ queries to a $\frac{\lambda}{C}$-approximate best response oracle, where $\alpha = \max_{y \in \cY} x_y - \underline{x}_y$.
\end{lemma}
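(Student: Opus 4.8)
The plan is to analyze the $n$ coordinate-wise binary searches of \ConserveMass{} independently and then stitch them together through the scalar best-response value. The query complexity is immediate: for each target $y$, the loop maintains an interval $[\ell,u] \subseteq [\underline{x}_y, x_y]$ whose length at most halves per iteration and which terminates once the length drops below $\lambda$; since the initial length is at most $\alpha$, this takes $O(\log(\alpha/\lambda))$ iterations, giving $O(n\log(\alpha/\lambda))$ queries total, each at accuracy $\eps := \lambda/C$. Every queried point $q = x + [m-x_y]\mathbf{e}_y$ and the output $\hat{x}$ are coordinatewise $\leq x \in \cX$, so downward closure of $\cX$ gives $q,\hat{x}\in\cX$. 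The one fact I would extract from the oracle, for fixed $y$, uses $M_y := \max_{y''\neq y} v^{y''}(x_{y''})$, so that $w := v(x,\brr(x)) = \max_{y'} v^{y'}(x_{y'}) = \max\{v^y(x_y), M_y\} \geq M_y$. Reading off the query $q$ (where $q_y = m$ and $q_{y''}=x_{y''}$ otherwise): if $\Oracle(q)=y$ then $y \in \BestResp^\eps(q)$ forces $v^y(m) \geq M_y - \eps$; if $\Oracle(q)=y'''\neq y$ then since $q_{y'''}=x_{y'''}$ we get $v^y(m) \leq v^{y'''}(x_{y'''}) + \eps \leq M_y + \eps$. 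At termination $\hat{x}_y = \ell$ with $u - \ell < \lambda$.

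For the utility bound I would argue $v^y(\hat{x}_y) \leq w + 2C\lambda$ coordinatewise. The final upper endpoint satisfies $v^y(u) \leq w + \eps$: either $u = x_y$ was never updated, giving $v^y(u) = v^y(x_y) \leq w$, or $u$ was last set by an $\Oracle \neq y$ response, giving $v^y(u) \leq M_y + \eps \leq w + \eps$. Since $\hat{x}_y = \ell \leq u$ with $u - \ell < \lambda$, the slope bound yields $v^y(\hat{x}_y) \leq v^y(u) + C(u-\ell) < w + \eps + C\lambda \leq w + 2C\lambda$, using $\eps = \lambda/C \leq C\lambda$ as $C \geq 1$. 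Taking the maximum over $y$ gives $v(\hat{x},\brr(\hat{x})) = \max_y v^y(\hat{x}_y) \leq v(x,\brr(x)) + 2C\lambda$.

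For the near-conservativeness guarantee, suppose $\hat{x}_y > \underline{x}_y$; then $\ell$ was raised at least once, so its final value was set by an $\Oracle = y$ response at $m = \hat{x}_y$, giving $v^y(\hat{x}_y) \geq M_y - \eps$. Combining this with $v^y(\hat{x}_y) \geq v^y(x_y)$ (monotonicity, since $\hat{x}_y \leq x_y$) yields $v^y(\hat{x}_y) \geq \max\{v^y(x_y), M_y\} - \eps = w - \eps$. Then for every $y'\neq y$, the utility bound gives $v^{y'}(\hat{x}_{y'}) \leq w + 2C\lambda \leq v^y(\hat{x}_y) + \eps + 2C\lambda \leq v^y(\hat{x}_y) + 3C\lambda$, and this holds trivially for $y'=y$ too, so $v^y(\hat{x}_y) \geq \max_{y'} v^{y'}(\hat{x}_{y'}) - 3C\lambda$, i.e.\ $y \in \BestResp^{3C\lambda}(\hat{x})$.

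The main obstacle is exactly the mismatch exploited in the last step: each binary search is performed with the other coordinates pinned at their \emph{original} values $x_{y''}$, whereas the returned $\hat{x}$ has \emph{all} coordinates simultaneously reduced, so a best-response guarantee established at a query point (where only coordinate $y$ is moved) does not transfer verbatim to $\hat{x}$. The resolution is to route everything through the scalar value $w$: monotonicity ($\hat{x}\leq x$) upgrades the per-coordinate lower bound $v^y(\hat{x}_y)\geq M_y-\eps$ into $v^y(\hat{x}_y)\geq w-\eps$, while the utility bound simultaneously caps every competing coordinate at $w+2C\lambda$, and the two estimates differ by only $3C\lambda$. I would present Claim 2 before Claim 1 since the former is reused in the latter.
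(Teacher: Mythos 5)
Your proof is correct and takes essentially the same route as the paper's: where the paper phrases the binary-search analysis through thresholds $r_y, s_y, t_y$, you track the last oracle responses that moved $\ell$ and $u$, but both arguments rest on the same two steps---the slope bound converting the width-$\lambda$ final interval into a $2C\lambda$ utility error, and transferring the single-coordinate best-response guarantee at the query point to the fully-reduced $\hat{x}$ via monotonicity ($\hat{x} \leq x$) combined with the utility bound. If anything, your direct endpoint bookkeeping makes the coordinate-mismatch transfer (which the paper's closing sentences state somewhat tersely) fully explicit.
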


To prove \Cref{lem:minimization} (\Cref{prf:minimization}), we show that the volume of $S$ decreases by a constant factor in each round unless a target is removed from $\cR$, in which case $S$ loses a dimension but still has lower-dimensional volume not too much larger than before. The volume decrease claim follows by an approximate version of Gr\"unbaum's inequality \citep{grunbaum1960partitions}; we show that a half-space nearly passing through the centroid of $S$ splits the region into roughly balanced halves. For \Cref{lem:stabilization}, the $n$ binary searches ensure that each coordinate $\hat{x}_y$ is within $O(\lambda)$ of the threshold $\sup \{ p \in [\underline{x}_y,x_y] : y \in \BestResp((x_1,\dots,x_{y-1},p,x_{y+1},\dots,x_n)) \}$. In \Cref{app:stabilize}, we use this to show that each target with substantial coverage under $\hat{\bv{x}}$ is an approximate best response, with the agent's utility in best response to $\hat{\bv{x}}$ nearly matching that of $\bv{x}$.

Equipped with these results, we prove that \Clinch{} finds a $\delta$-approximation for $\bv{x}^\star$ in $\widetilde{O}(n \log \frac{1}{\delta})$ queries, and strengthen the guarantee when the provided bounding box is small. We note that this complexity cannot be improved in general beyond logarithmic factors; when $\cX = [0,1]^n$, $n \log \frac{1}{\delta}$ bits are needed to specify $\bv{x}^\star$ up to $\ell_\infty$-precision $\delta$, but each query provides only $\log n$ bits of information.

\begin{theorem}
\label{thm:clinch}
Fix $0 < \delta \leq 1$ and $\underline{\bv{x}},\overline{\bv{x}} \in \R^n$ with $\underline{\bv{x}} \leq \bv{x}^\star \leq \overline{\bv{x}}$. Then \Clinch{} finds $\hat{\bv{x}} \in \cX$ with $\|\hat{\bv{x}}-\bv{x}^\star\|_\infty\leq\delta$ using $O\bigl(n\log\frac{C^2\alpha n}{\delta}\bigr)$ queries to a $\frac{\delta}{33C^3n}$-approximate best response oracle, where $\alpha=\max_{y \in \cY} \overline{x}_y-\underline{x}_y$. In particular, fixing $\underline{\bv{x}} = \bv{0}_n$ and $\overline{\bv{x}} = \bv{1}_n$ gives query complexity $O\big(n \log \frac{C n}{\delta}\big)$.
\end{theorem}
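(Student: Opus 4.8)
The plan is to read off \Clinch{} as the composition of its two analyzed phases and then translate the resulting \emph{value} approximation into an $\ell_\infty$ \emph{coordinate} approximation using the monotonicity and slope structure of the agent payoffs. Concretely, \Clinch{} runs its main loop and then calls \ConserveMass{} with the accuracy $\lambda = \frac{\delta}{4C^2}$ set in Step~1. By \Cref{lem:minimization}, the strategy $x$ handed to \ConserveMass{} satisfies $v(x,\brr(x)) \le v(x^\star,\brr(x^\star)) + \frac{\delta}{2C}$, and by \Cref{lem:stabilization} the returned $\hat x$ satisfies $v(\hat x,\brr(\hat x)) \le v(x,\brr(x)) + 2C\lambda$ together with the approximate-conservativeness guarantee that $\hat x_y > \underline x_y$ only when $y \in \BestResp[3C\lambda](\hat x)$. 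Writing $w^\star \coloneqq v(x^\star,\brr(x^\star))$ and $\hat w \coloneqq v(\hat x,\brr(\hat x))$, the two bounds chain (using $2C\lambda = \frac{\delta}{2C}$) to give $\hat w \le w^\star + \frac{\delta}{C}$; and since \Cref{prop:ssg-solution} states that $x^\star$ globally minimizes $v(\cdot,\brr(\cdot))$ over $\cX$, we also have $\hat w \ge w^\star$. This sandwich $w^\star \le \hat w \le w^\star + \frac{\delta}{C}$ is the crux.

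The main step is the coordinate-wise conversion, which I would carry out per target $y$ and in two directions, recalling that each $v^y$ is strictly decreasing with $v^y(s)-v^y(t) \ge \frac{1}{C}(t-s)$ for $s<t$, that $v^y(x^\star_y) \le w^\star$ in all cases (with equality when $x^\star_y>0$ by conservativeness of $x^\star$), and that $\underline x \le x^\star$ is maintained throughout the main loop so that $\hat x_y \ge \underline x_y$. For the upper bound $\hat x_y \le x^\star_y + \delta$: if $\hat x_y = \underline x_y$ then $\hat x_y \le x^\star_y$ directly; otherwise $y \in \BestResp[3C\lambda](\hat x)$ gives $v^y(\hat x_y) \ge \hat w - 3C\lambda \ge w^\star - 3C\lambda \ge v^y(x^\star_y) - 3C\lambda$, and the slope bound converts this to $\hat x_y - x^\star_y \le 3C^2\lambda = \frac{3\delta}{4} \le \delta$. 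For the lower bound $\hat x_y \ge x^\star_y - \delta$: when $x^\star_y = 0$ it is immediate from $\hat x_y \ge 0$; when $x^\star_y>0$ we have $v^y(\hat x_y) \le \hat w \le w^\star + \frac{\delta}{C} = v^y(x^\star_y) + \frac{\delta}{C}$, and the slope bound again converts this to $x^\star_y - \hat x_y \le \delta$. Taking the maximum over $y$ yields $\|\hat x - x^\star\|_\infty \le \delta$.

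For the query count, I would simply add the two phases: \Cref{lem:minimization} costs $O(n\log\frac{C^2\alpha n}{\delta})$ queries, and \Cref{lem:stabilization} costs $O(n\log\frac{\alpha}{\lambda}) = O(n\log\frac{C^2\alpha}{\delta})$ queries (using $\lambda = \frac{\delta}{4C^2}$ and that its $\alpha = \max_y x_y - \underline x_y$ is at most the original $\alpha$), so the total is $O(n\log\frac{C^2\alpha n}{\delta})$. I would also check that the single oracle accuracy stated in the theorem suffices for both phases: $\frac{\delta}{33C^3n} \le \frac{\lambda}{C} = \frac{\delta}{4C^3}$ for $n \ge 1$, so the $\frac{\delta}{33C^3n}$-oracle meets \Cref{lem:stabilization}'s weaker $\frac{\lambda}{C}$-requirement. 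The ``in particular'' clause follows by setting $\alpha = 1$ and absorbing $\log C$ into $\log\frac{Cn}{\delta}$ since $C \ge 1$.

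The routine phase-composition and query bookkeeping are straightforward; the part demanding care is the value-to-coordinate conversion. The subtlety there is that the two directions rely on different pieces of information --- the lower bound on each $\hat x_y$ uses the global near-optimality $\hat w \le w^\star + \frac{\delta}{C}$, whereas the upper bound must instead invoke the approximate-conservativeness $y \in \BestResp[3C\lambda](\hat x)$ to certify that $\hat x_y$ is not too large --- and that the calibration $\lambda = \frac{\delta}{4C^2}$ together with the factor-$C^2$ slope loss is exactly what makes both directions land at $\delta$ rather than a larger multiple.
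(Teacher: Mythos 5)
Your proposal is correct and follows essentially the same route as the paper's proof: compose \Cref{lem:minimization} and \Cref{lem:stabilization} (with $\lambda = \frac{\delta}{4C^2}$), then convert the value bound into a coordinate bound, using the near-minimality $v(\hat x,\brr(\hat x)) \le v(x^\star,\brr(x^\star)) + \delta/C$ for the lower bound $\hat x_y \ge x^\star_y - \delta$ and the approximate conservativeness $y \in \BestResp[3C\lambda](\hat x)$ for the upper bound $\hat x_y \le x^\star_y + 3C^2\lambda < x^\star_y + \delta$. Your case split on $x^\star_y > 0$ versus $x^\star_y = 0$ is equivalent to the paper's split on $y \in \BestResp(x^\star)$ (by conservativeness of $x^\star$), and your explicit bookkeeping of query counts and oracle accuracies matches what the paper leaves implicit.
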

\begin{proof}
Combining \Cref{lem:minimization} and \Cref{lem:stabilization} (with $\lambda = \frac{\delta}{4C^2}$), we find that the strategy $\hat{\bv{x}} \in \cX$ returned by \Clinch{} satisfies $v(\hat{\bv{x}},\brr(\hat{\bv{x}})) \leq v(\bv{x}^\star,\brr(\bv{x}^\star)) + \delta/C$, and $\hat{x}_y > \underline{x}_y$ only for $y \in \smash{\BestResp^{3C\lambda}(\hat{x})}$, using $\smash{O(n\log\frac{C^4 \alpha n}{\delta})}$ queries. First, the approximate minimization guarantee requires that $\hat{x}_y \geq x^\star_y - \delta$ for all $y \in \cY$. Indeed, if $y \not\in \BestResp(\bv{x}^\star)$, then $\bv{x}^\star_y = 0$ and the claim holds trivially. Otherwise, we have
\begin{align*}
    v^y(\hat{x}_y) \leq v(\hat{\bv{x}},\brr(\hat{\bv{x}})) \leq v(\bv{x}^\star,\brr(\bv{x}^\star)) + \delta/C = v^y(\bv{x}^\star_y) + \delta/C.
\end{align*}
Monotonicity of $v^y$ and the lower slope bound of $1/C$ then imply that $\hat{x}_y \geq x^\star_y - \delta$.
Next, we show that approximate conservativeness implies $\hat{x}_y \leq x^\star_y + \delta$ for all $y \in \cY$. Indeed, if $y \in \smash{\BestResp^{3C\lambda}(\hat{\bv{x}})}$, 
\begin{align*}
\label{eq:separation}
    v^y(\hat{x}_y) \geq v(\hat{\bv{x}},\brr(\hat{\bv{x}})) - 3C\lambda \geq v(\bv{x}^\star,\brr(\bv{x}^\star)) - 3C\lambda \geq v^y(x^\star_y) - 3C\lambda,
\end{align*}
and so monotonicity and our slope bound imply that $\hat{x}_y \leq x^\star_y + 3C^2\lambda < x^\star_y + \delta$. Otherwise, we must have $\hat{x}_y = \underline{x}_y \leq x^\star_y$. All together, we have $\|\hat{\bv{x}} - \bv{x}^\star\|_\infty \leq \delta$, as desired.
\end{proof}

\begin{remark}
The minimization stage of \Clinch{} is connected to classic cutting-plane methods. Indeed, feedback $y \in \BestResp^\eps(\bv{x})$ for a query $\bv{x}$ implies $x^\star_y \geq x_y - C\eps$, so the cut $\{ \bv{z} \in \R^n : z_y = x_y \}$ nearly separates $\bv{x}$ from $\bv{x}^\star$. Treating the agent as a noisy separation oracle, \Clinch{} mirrors the query selection of center-of-gravity methods \citep{levin1965,newman1965}. Unlike this setting, we lack an evaluation oracle for the objective $v(\bv{x},\brr(\bv{x}))$ and apply modern adjustments.
Specifically, \Clinch{} adapts the multidimensional binary search \textsc{ProjectedVolume} algorithm  of \cite{lobel2018search} to our noisy axis-aligned setting, with coordinate locking at Step~\ref{step:clinch-coordinate-locking} serving as an analogue of their cylindrification procedure and preventing a quadratic dependence on $n$.
\end{remark}

\subsection{Comparison to prior work with best-responding agents}
\label{ssec:ssg-comparison}

\paragraph{The classical approach: multiple LPs.}
To compare \Clinch{} with prior approaches for SSGs, we recall the standard method for the full-information problem, which requires linear agent utility functions $v^y$. Each best response region $K_y$ is convex as the intersection of $\cX$ with $n$ half-spaces, and we may rewrite the Stackelberg benchmark as a set of $n$ optimization problems, with the $y$-th optimizing $u_y$ over $K_y$: 
\begin{equation*}
    \max_{\bv{x} \in \cX} u(\bv{x},\brr(\bv{x})) = \max_{y\in\cY}\max_{\substack{\bv{x} \in \cX \\ y\in \BestResp(\bv{x})}} u^y(x_y) = \max_{y\in\cY} \max_{\bv{x}\in K_y} u^y(x_y).
\end{equation*}
These inner problems can be solved efficiently, since each $K_y$ is convex and each objective $u^y$ is quasi-convex by monotonicity. This approach was originally developed by \citet{conitzer2006computing} for the case when $\cX$ is a polytope and each $u^y$ is linear, where it is termed ``multiple LPs''.

When $v$ is unknown but the agent is myopic, previous works \citep{letchford2009learning, blum2014, peng2019learning} use exact best response feedback to learn each $K_y$ and apply multiple LPs. Most recently, \cite{peng2019learning} provide an algorithm that finds an optimal strategy using $O(n^3 L)$ best response queries when agent utilities are linear with coefficients specified by $L$ bits each.

\paragraph{Our improvements.} In contrast, \Clinch{} applies in more general environments and provides stronger query complexity guarantees. Even for the full information problem, our method only requires monotonicity of agent utilities and works when these are non-linear, as depicted in \Cref{fig:ssg-non-linear-example}. On the other hand, linearity was necessary for inducing convex best response regions that were crucial for previous methods. 
For the learning problem, our results do not require the stronger $L$-bit precision assumption and extend seamlessly to the approximate best response regime needed to handle non-myopic agents. When specializing to the setting of prior work, we show in \Cref{app:ssg-exact-search} that \Clinch{} finds an \emph{exact} optimizer using $O(n^2L)$ best response queries, improving upon the state-of-the-art $O(n^3L)$ complexity. For $\delta$-\emph{approximate} search with $\delta^{-1} = \poly(n)$, our query complexity  of $\widetilde{O}(n L)$ improves quadratically over prior guarantees (for exact search).

\begin{figure}
  \begin{minipage}[c]{0.8\textwidth}
    \caption{
The principal strategy space $\cX = \{ (x_1,x_2)\in[0,1]^2 : x_1^2 + x_2^2 \leq 1 \}$ for a two-target game with non-convex best response regions $K_1$ and $K_2$, induced by agent payoffs $v^1(x_1) = 1 - \frac{3}{4}(1+e^{5-15x_1})^{-1}$ and $v^2(x_2) = 1-x_2$. Observe that the unique conservative minimizer $\bv{x}^\star$ lies in the intersection $K_1 \cap K_2$.} \label{fig:ssg-non-linear-example}
  \end{minipage}\hfill
    \begin{minipage}[c]{0.15\textwidth}
    \includegraphics[width=\textwidth]{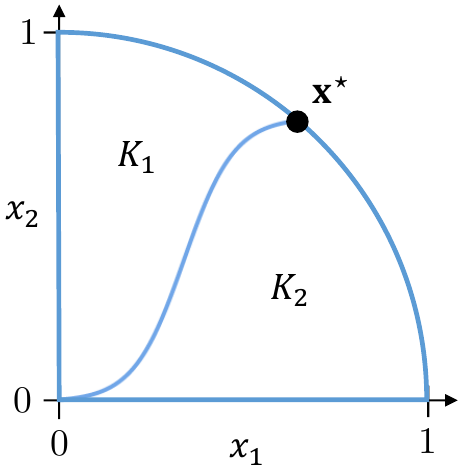}
  \end{minipage}\hfill
\end{figure}

We attribute our improved bounds to two main factors. First, previous works \citep{letchford2009learning, blum2014,peng2019learning} fix a target $y$ and use best responses primarily as a \emph{membership oracle} for the best response polytope $K_y$ (since $y \in \BestResp(\bv{x})$ if and only if $\bv{x} \in K_y$). In contrast, we use this feedback to simulate a \emph{separation oracle}, incorporating more information per query to obtain faster convergence. Since the objective $v(\bv{x},\brr(\bv{x}))$ is quasi-convex, as the maximum of monotonic functions in each coordinate, cutting plane methods can be adapted to obtain $\widetilde{O}(n\log\frac{1}{\delta})$ search complexity.
Second, existing methods solve an LP for each $K_y$, while our structural results imply that $\bv{x}^\star$ belongs to all non-empty best response regions. (Indeed, if $y \not\in \BestResp(\bv{x}^\star)$, then $x^\star_y = 0$ and $v^y(0) < v(\bv{x}^\star,\brr(\bv{x}^\star)) \leq v(\bv{x},\brr(\bv{x}))$ for all $\bv{x} \in \cX$, so $y$ is never a best response.) Hence, it suffices to solve $\max_{\bv{x} \in K_y}x_y$ for \emph{any} non-empty $K_y$ and then eliminate wasteful defensive resources, giving an immediate $n$-factor improvement.

To empirically demonstrate the effectiveness of \Clinch{}, we compare our approach to the \textsc{SecuritySearch} algorithm of \citet{peng2019learning} in Figure~\ref{fig:clinch_vs_securitysearch}. In two simulated settings with best-responding agents, we find that \Clinch{} is more efficient in terms of both the implicit constant factor and asymptotic scaling, outperforming the prior state-of-the-art by several orders of magnitude in sample complexity. In both settings, the principal's strategy space is the simplex and both players' utilities are linear. The first setting uses a fixed, symmetric choice of utility coefficients, while the second is averaged over three choices of coefficients selected uniformly at random. Full details are provided in Section~\ref{app:clinch-experiments}, and code for both implementations is available at \url{https://github.com/sbnietert/learning-stackelberg-games}.

\begin{figure}
    \centering
\subfloat{\includegraphics[width=2.7in]{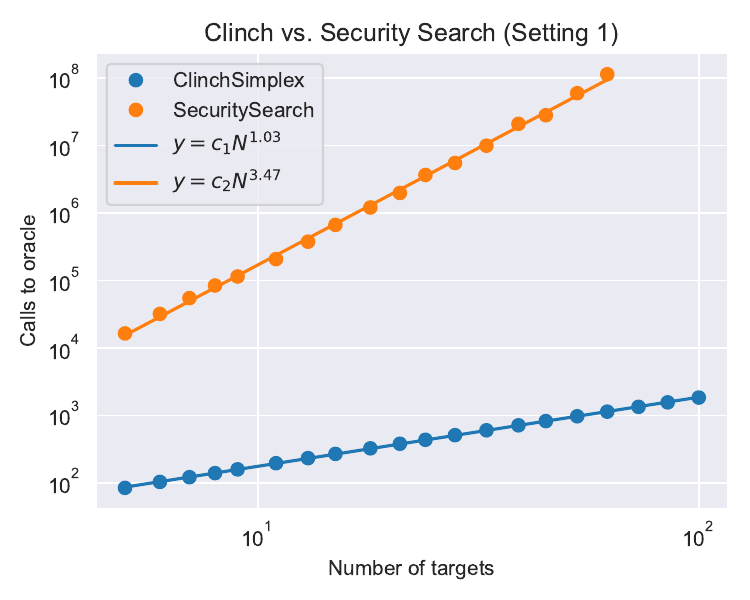}}
    \quad
\subfloat{\includegraphics[width=2.7in]{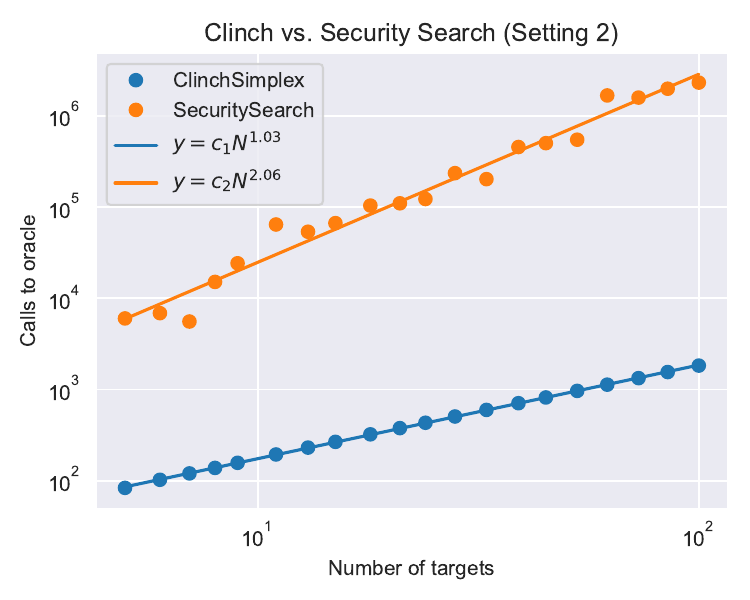}}
    \caption{Query complexity of \Clinch{} versus \SecuritySearch{} in two settings. The $y$-axis shows the number of calls the principal must make to the best response oracle. Both sets of axes are displayed on a log-log scale. Dashed lines depict power law fits of the scaling curves, obtained from log-log linear regression.}
\label{fig:clinch_vs_securitysearch}
\end{figure}

\section{Extending \Clinch{} to Non-Myopic Agents}
\label{sec:non-myopic-clinch}

For the non-myopic setting, we first provide an efficient batched algorithm, \BatchedClinch{}, that translates \Clinch{}'s query complexity bound into an $\widetilde{O}(n(\log T + T_\gamma))$ regret bound against $\gamma$-discounting agents (see \Cref{thm:batched-clinch}).
For unknown discount factor, we develop a $\gamma$-agnostic policy \MultiThreadedClinch{} with regret $\widetilde{O}(n \log T(\log T + T_\gamma))$. 
We test the performance of simplified variants of both policies via numerical simulations against a restricted class of non-myopic agents.

\subsection{A batched algorithm for known discount factor}
\label{subsec:ssg-nonmyopic}

The search guarantee of \Clinch{} translates to a single-round regret bound against approximately best-responding agents, after a small perturbation. Given any sufficiently precise estimate $\hat{\bv{x}}$ for $\bv{x}^\star$, we prove (\Cref{prf:perturbation}) that the principal can identify the true best response set $\BestResp(\bv{x}^\star)$ of $\bv{x}^\star$ and should slightly reduce the weight placed on a target $\hat{y}$ in this set which maximizes $u(\hat{\bv{x}},\hat{y})$.

\begin{lemma}
\label{lem:perturbation}
Fix $0 < \lambda \leq 1$ and suppose $\hat{\bv{x}} \in \cX$ with $\|\hat{\bv{x}} - \bv{x}^\star\|_\infty \leq \frac{W\lambda}{6C^2}$. Then the perturbed strategy $\tilde{\bv{x}} = \Perturb(\hat{\bv{x}},\lambda) \coloneqq \hat{\bv{x}} - \frac{W\lambda}{2} \mathbf{e}_{\hat{y}}$, where $\hat{y} \in \argmax_{y \in \cY:\hat x_y > W/2} u(\hat{\bv{x}}, y)$, belongs to $\cX$ and satisfies $u(\tilde{\bv{x}},y) \geq u(\bv{x}^\star,\brr(\bv{x}^\star)) - \lambda$ whenever $y \in \BestResp^\eps(\tilde{\bv{x}})$ for $\eps = \frac{W\lambda}{200C^5n}$.
\end{lemma}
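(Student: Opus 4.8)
The plan is to pin down the perturbation target $\hat y$, show that the downward nudge on coordinate $\hat y$ makes $\hat y$ the \emph{unique} $\eps$-approximate best response (so the approximate agent is funneled onto a high-utility target), and finish with a short slope-bound estimate of the utility loss. Throughout, write $y^\star = \brr(x^\star)$ and $w^\star = v(x^\star,\brr(x^\star))$.

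First I would establish a structural fact: every best-response target of $x^\star$ carrying positive weight is covered by at least $W$. Indeed, for any attackable $y$ (i.e.\ $K_y \neq \emptyset$) the width assumption gives some $x' \in K_y$ with $x'_y \geq W$; since $y \in \BestResp(x')$ and $x^\star$ minimizes $v(\cdot,\brr(\cdot))$ by \Cref{prop:ssg-solution}, we get $v^y(W) \geq v^y(x'_y) \geq w^\star$, and comparing with the conservativeness identity $v^y(x^\star_y) = w^\star$ (valid when $x^\star_y > 0$) monotonicity of $v^y$ forces $x^\star_y \geq W$. Because $\|\hat x - x^\star\|_\infty \leq \tfrac{W\lambda}{6C^2} \leq \tfrac{W}{6}$, this has two consequences: the set $\{y : \hat x_y > W/2\}$ is nonempty (it contains $y^\star$, whose coverage is $\geq W$), so $\hat y$ is well defined; and every $y$ in this set has $x^\star_y > W/3 > 0$, hence lies in $\BestResp(x^\star)$ with $v^y(x^\star_y) = w^\star$ and $x^\star_y \geq W$. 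Feasibility $\tilde x \in \cX$ is then immediate, since $\tilde x_{\hat y} = \hat x_{\hat y} - \tfrac{W\lambda}{2} > 0$ and $\tilde x \leq \hat x$, so downward closure of $\cX$ applies.

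The crux is showing $\BestResp[\eps](\tilde x) = \{\hat y\}$, i.e.\ the perturbation opens a $v$-gap in favor of $\hat y$ that dwarfs $\eps$. Using the lower slope bound on the decrease of $v^{\hat y}$ together with $v^{\hat y}(\hat x_{\hat y}) \geq w^\star - \tfrac{W\lambda}{6C}$, I would bound $v^{\hat y}(\tilde x_{\hat y}) \geq v^{\hat y}(\hat x_{\hat y}) + \tfrac{W\lambda}{2C} \geq w^\star + \tfrac{W\lambda}{3C}$. For every $y \neq \hat y$ the coordinate is untouched, so $v^y(\tilde x_y) = v^y(\hat x_y) \leq v^y(x^\star_y) + \tfrac{W\lambda}{6C} \leq w^\star + \tfrac{W\lambda}{6C}$. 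Thus $\hat y$ strictly maximizes $v(\tilde x,\cdot)$ with margin at least $\tfrac{W\lambda}{6C}$, which exceeds $\eps = \tfrac{W\lambda}{200C^5 n}$ since $C \geq 1$ and $n \geq 1$; hence no $y \neq \hat y$ can be an $\eps$-approximate best response.

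It remains to bound the utility at $\hat y$. Chaining three slope estimates: the deliberate perturbation costs at most $\tfrac{W\lambda}{2}$ in $u^{\hat y}$ (using that the defender payoff has slope at most $1$ on $[0,1]$, as in the linear SSG payoffs of interest noted after the slope bound; the general case merely rescales $\lambda$ by a $O(CW)$ factor); the argmax defining $\hat y$ gives $u^{\hat y}(\hat x_{\hat y}) \geq u^{y^\star}(\hat x_{y^\star})$ because $y^\star$ lies in the maximizing set; and $u^{y^\star}(\hat x_{y^\star}) \geq u^{y^\star}(x^\star_{y^\star}) - \tfrac{W\lambda}{6}$ by the slope bound and $\|\hat x - x^\star\|_\infty \leq \tfrac{W\lambda}{6C^2}$. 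Adding these and using $W \leq 1$ yields $u(\tilde x,\hat y) \geq u(x^\star,\brr(x^\star)) - \lambda$, and since $\BestResp[\eps](\tilde x) = \{\hat y\}$ this proves the claim for every $y \in \BestResp[\eps](\tilde x)$. I expect the main obstacle to be the previous paragraph: one must verify that a \emph{single}-coordinate nudge funnels \emph{all} $\eps$-best responses onto $\hat y$, rather than leaving room for some other target of far smaller $u$-value; the structural bound $x^\star_{\hat y} \geq W$ is exactly what supplies enough perturbation headroom to beat $\eps$ while keeping $\tilde x$ feasible.
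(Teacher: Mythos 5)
Your proposal follows essentially the same route as the paper's proof: the same structural fact (every $y$ with $x^\star_y > 0$ has $x^\star_y \geq W$, hence $\{y : \hat{x}_y > W/2\} = \BestResp(x^\star)$ and $\hat{y}$ is well defined), the same feasibility argument via downward closure, the same two-sided slope estimates giving $v^{\hat{y}}(\tilde{x}_{\hat{y}}) \geq w^\star + \frac{W\lambda}{3C}$ while $v^y(\tilde{x}_y) \leq w^\star + \frac{W\lambda}{6C}$ for $y \neq \hat{y}$, hence $\BestResp^\eps(\tilde{x}) = \{\hat{y}\}$, and the same final chain of utility estimates. Your first two paragraphs are correct and in fact supply details the paper only asserts: the paper states $x^\star_y > 0 \Rightarrow x^\star_y \geq W$ without the comparison against a width-achieving point $x' \in K_y$ that you spell out, and your detour through $y^\star$ (using $u^{\hat{y}}(\hat{x}_{\hat{y}}) \geq u^{y^\star}(\hat{x}_{y^\star})$ plus a $\frac{W\lambda}{6C}$ closeness term) correctly handles the fact that $\hat{y}$ maximizes $u^y(\hat{x}_y)$ rather than $u^y(x^\star_y)$, a point the paper elides.

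The place where you hedge---the cost of the deliberate $\frac{W\lambda}{2}$ perturbation under the upper slope bound---is not a gap in your write-up relative to the paper; it is exactly where the paper's own proof is too glib. The paper's final chain asserts $u^{\hat{y}}(\tilde{x}_{\hat{y}}) \geq u^{\hat{y}}(x^\star_{\hat{y}}) - C|x^\star_{\hat{y}} - \tilde{x}_{\hat{y}}| \geq u(x^\star,\brr(x^\star)) - \frac{W\lambda}{6C}$, but $|x^\star_{\hat{y}} - \tilde{x}_{\hat{y}}| \geq \frac{W\lambda}{2} - \frac{W\lambda}{6C^2} \geq \frac{W\lambda}{3}$, so the subtracted term is at least $\frac{CW\lambda}{3}$, not at most $\frac{W\lambda}{6C}$. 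Under the stated assumptions alone the loss really can be $\Theta(CW\lambda)$: take a game with a single nonempty best-response region of width $W = 1$ (e.g., $v^1(t) = 1 - t/2$, $v^2(t) = 0.4(1-t)$, so $K_1 = \cX$ and $K_2 = \emptyset$) and a principal utility $u^1$ with slope close to $C > 2$ near $x^\star_1 = 1$; shaving $\frac{W\lambda}{2}$ off coordinate $1$ then costs roughly $\frac{C\lambda}{2} > \lambda$. So your version---a clean $-\lambda$ bound when the $u$-slopes are at most $1$ (as for linear utilities, per the paper's footnote), and $-O(CW)\lambda$ in general---is the honest statement of what this argument (yours and the paper's alike) proves; recovering $-\lambda$ for general $C$ requires a modification such as shrinking the perturbation to $\frac{W\lambda}{2C}$, with the constants in the rest of the lemma (in particular the $v$-gap versus $\eps$) retuned accordingly.
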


For $\gamma$-discounting agents, fix $\lambda = nT_\gamma/T$ and consider the policy which runs \Clinch{} with accuracy $\delta = \frac{W\lambda}{6C^2}$, na\"ively repeating each query $\smash{D = T_\gamma \log \frac{T_\gamma}{\eps}}$ times to induce $\eps=\frac{W\lambda}{200C^5n}$-approximate best responses (only using feedback from each batch's first query). By Proposition~\ref{prop:delayed-feedback} and Theorem~\ref{thm:clinch}, the returned strategy $\hat{\bv{x}}$ satisfies $\|\hat{\bv{x}} - \bv{x}^\star\|_\infty \leq \frac{W\lambda}{6C^2}$. By Lemma~\ref{lem:perturbation}, we can commit to $\tilde{\bv{x}} = \Perturb(\hat{\bv{x}},\lambda)$ for the remaining rounds, incurring at most $O(Dn\log\frac{CT}{WT_\gamma} + T_\gamma n) = O(nT_\gamma \log^2 \frac{CT}{W})$ total regret. However, this approach is 
wasteful, as it does not exploit any knowledge gained until $\tilde{\Omega}(n T_\gamma )$ rounds have passed. Instead, we should use the extra delay rounds for exploitation.

Specifically, we introduce \BatchedClinch{} (\Cref{alg:batched-clinch}), which maintains a running estimate $\tilde{\bv{x}}$ of the current best strategy. For each epoch $\phi=1,2,\dots$, this algorithm runs \Clinch{} with accuracy $\lambda=2^{-\phi}$ in batches of size $O\big(T_\gamma \log \frac{T_\gamma C n}{W \lambda}\big)$, initialized with $\underline{\bv{x}}$ and $\overline{\bv{x}}$ set to bounds implied by the previous search. We update \Clinch{} during the first round of each batch, playing $\tilde{\bv{x}}$ for the others, and then setting $\tilde{\bv{x}}$ to the estimate returned by \Clinch{} (after perturbation). When all epochs have completed, the final estimate $\tilde{\bv{x}}$ is played for the remaining rounds. For ease of presentation, we identify entry-wise lower and upper bounds with their corresponding axis-aligned bounding box.\medskip

\begin{algorithm}[H]
\caption{\BatchedClinch{}}\label{alg:batched-clinch}
\DontPrintSemicolon
\SetAlgoNoLine
    $\tilde{\bv{x}} \gets (0,\dots,0) \in \R^n, \: B \gets [0,1]^n$\;
    \For{epoch $\phi=1,\dots,\lceil \log_2 T \rceil$}{
        Initialize \Clinch{} with accuracy $\delta =\frac{W\lambda}{6C^2}$ for $\lambda=2^{-\phi}$ and axis-aligned bounding box $B$\;
        \While{\Clinch{} has not terminated}{
            Simulate query/response for \Clinch{} with next $\bv{x}^{(t)},y_t$ pair \label{step:batched-clinch-explore} \tcp*{Explore}
            Play strategy $\tilde{\bv{x}}$ for next $\left\lceil T_\gamma \log \frac{200T_\gamma C^5 n}{W\lambda}\right\rceil$ rounds \label{step:batched-clinch-exploit} \tcp*{Exploit}
        }
        $\tilde{\bv{x}} \gets \Perturb(\hat{\bv{x}},\lambda)$ for $\hat{\bv{x}}$ returned by \Clinch{}\;
        $B \gets \big\{ \bv{x} \in \R^d : \|\bv{x} - \hat{\bv{x}}\|_\infty \leq \delta \big\}$ \label{step:batched-clinch-bound-updates} \tcp*{Update bounding box}
    }
    Play strategy $\tilde{\bv{x}}$ for remaining rounds \label{step:batched-clinch-commit-final}\tcp*{Exploit}
\end{algorithm}

\begin{theorem}
\label{thm:batched-clinch}
\BatchedClinch{} incurs regret $O\left( n\log\frac{Cn}{W}\log T + n T_\gamma \log^2 \frac{CnT_\gamma}{W}\right)$ against $\gamma$-discounting agents, for known discount factor $\gamma \in (0,1)$.
\end{theorem}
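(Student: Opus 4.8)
The plan is to bound the total Stackelberg regret of \BatchedClinch{} by separately accounting for the exploration rounds, the exploitation rounds within each epoch, and the final commitment phase, and then invoking \Cref{prop:delayed-feedback} to translate the $\eps$-approximate best response guarantees into the non-myopic setting. First I would verify that the batching structure satisfies the delay requirement: within each epoch, each \Clinch{} query in Step~\ref{step:batched-clinch-explore} is followed by $\lceil T_\gamma \log \frac{200 T_\gamma C^5 n}{W\lambda} \rceil$ exploitation rounds in Step~\ref{step:batched-clinch-exploit}, so the effective delay before the principal reacts to any single query's feedback is $D = \Theta\!\left(T_\gamma \log \frac{T_\gamma C^5 n}{W\lambda}\right)$. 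By \Cref{prop:delayed-feedback}, this delay suffices to guarantee that the agent plays an $\eps$-approximate best response with $\eps = \frac{W\lambda}{200 C^5 n}$ in each exploration round, matching the oracle precision required by \Cref{lem:perturbation} (and by the chain \Cref{lem:minimization}--\Cref{lem:stabilization}--\Cref{thm:clinch} feeding into it).

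Next I would tally the regret epoch by epoch. In epoch $\phi$ (with $\lambda = 2^{-\phi}$), \Clinch{} runs to accuracy $\delta = \frac{W\lambda}{6C^2}$, which by \Cref{thm:clinch} takes $O\!\left(n \log \frac{C^2 \alpha n}{\delta}\right)$ exploration queries; since the bounding box $B$ is reinitialized at Step~\ref{step:batched-clinch-bound-updates} to side length $\delta_{\phi-1} = \frac{W 2^{-(\phi-1)}}{6C^2}$ from the previous epoch, the factor $\alpha$ is $O(\lambda)$ rather than $O(1)$, so the per-epoch query count collapses to $O\!\left(n \log \frac{Cn}{W}\right)$. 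Each such query is accompanied by $O(D) = O\!\left(T_\gamma \log \frac{CnT_\gamma}{W}\right)$ rounds (one exploration round plus the exploitation block), and each of those rounds contributes bounded per-round regret. Crucially, during the exploitation rounds the principal plays the running estimate $\tilde{x}$ from the \emph{previous} epoch's output, which by \Cref{lem:perturbation} guarantees per-round regret at most the previous accuracy $\lambda_{\phi-1} = 2\lambda$; summing the geometric series in $\lambda$ over the $\lceil \log_2 T \rceil$ epochs keeps the exploitation contribution controlled. Multiplying the $O\!\left(n \log \frac{Cn}{W}\right)$ queries per epoch by the $\log T$ epochs and by the delay factor $O\!\left(T_\gamma \log \frac{CnT_\gamma}{W}\right)$ yields the two claimed terms: the exploration-count term $O\!\left(n \log \frac{Cn}{W} \log T\right)$ and the delay-inflated term $O\!\left(n T_\gamma \log^2 \frac{CnT_\gamma}{W}\right)$.

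The final commitment phase at Step~\ref{step:batched-clinch-commit-final} must be handled as well: after $\lceil \log_2 T\rceil$ epochs the accuracy reaches $\lambda = 2^{-\lceil \log_2 T\rceil} \leq 1/T$, so playing the final $\tilde{x}$ for all remaining (up to $T$) rounds incurs per-round regret at most $\lambda = O(1/T)$, contributing only $O(1)$ total regret and thus absorbing into the stated bound. I would also confirm that the total number of rounds consumed by all epochs does not exceed $T$, so that the schedule is well-defined — this follows because each epoch uses $O\!\left(n \log \frac{Cn}{W} \cdot T_\gamma \log \frac{CnT_\gamma}{W}\right)$ rounds and there are $O(\log T)$ epochs, matching the regret budget, with any slack reassigned to the commitment phase.

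The main obstacle I anticipate is the bookkeeping around the warm-started bounding box: one must argue rigorously that feeding the previous epoch's estimate $\hat{x}$ (guaranteed $\|\hat{x} - x^\star\|_\infty \leq \delta$ by \Cref{thm:clinch}) as the new box $B$ of side $\delta$ keeps $x^\star$ inside $B$ \emph{and} shrinks $\alpha$ geometrically, so that \Cref{lem:minimization}'s query count does not reaccumulate a $\log \frac{1}{\lambda}$ factor each epoch. A subtle point is that \Cref{lem:perturbation} requires the estimate precision $\delta \leq \frac{W\lambda}{6C^2}$ to certify the true best response set $\BestResp(x^\star)$, and the perturbation $\tilde{x} = \hat{x} - \frac{W\lambda}{2}\mathbf{e}_{\hat{y}}$ must remain a valid strategy in $\cX$ and keep $x^\star$ within the \emph{next} epoch's reinitialized box; I would check that the box update uses $\hat{x}$ (pre-perturbation) rather than $\tilde{x}$, consistent with Step~\ref{step:batched-clinch-bound-updates}. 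Everything else is a routine summation of geometric series, so once the warm-start invariant $\underline{x} \leq x^\star \leq \overline{x}$ is maintained across epochs, the two-term bound follows directly.
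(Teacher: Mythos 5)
Your proposal is correct and follows essentially the same route as the paper's proof: delay via exploitation blocks plus \Cref{prop:delayed-feedback} to induce $\eps$-approximate best responses, the bounding-box invariant maintained by the $\ell_\infty$ guarantee of \Cref{thm:clinch} so that warm-starting collapses the per-epoch query count, geometric decay of per-round exploitation regret via \Cref{lem:perturbation}, and an $O(1)$ final-commit term. The only cosmetic difference is that the paper tracks the epoch-dependent block size $O\bigl(T_\gamma\log\frac{T_\gamma C n 2^{\phi}}{W}\bigr)$ explicitly and absorbs the extra $\phi$ via $\sum_i i2^{-i}=O(1)$, which your geometric-series remark already accounts for.
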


\begin{proof}
During and after epoch $\phi$, when $\delta =\frac{W \lambda}{6C^2}$ for $\lambda = 2^{-\phi}$, the principal's feedback delay incentivizes the agent to $\frac{W\lambda}{200C^5}$-approximately best respond, by \Cref{prop:delayed-feedback}. Moreover, the entry-wise lower and upper bounds are trivially valid at the start and remain valid due to the $\ell_\infty$ guarantee of \Cref{thm:clinch}.
By the same theorem, epoch $\phi=1$ terminates after $O\big(n \log \frac{C n}{W}\big)$ batches of size $\smash{O\big(T_\gamma \log \frac{T_\gamma C n}{W}\big)}$ and epoch $\phi > 1$ terminates after $O\big(n \log C n\big)$ batches of size $\smash{O\big(T_\gamma \log((T_\gamma C n 2^i)/W))}$. The resulting strategy $\tilde{x}$ incurs regret at most $2^{-i}$ when played in a future round (by the initial observation and \Cref{lem:perturbation}). We bound total regret obtained during exploration (Step~\ref{step:batched-clinch-explore}) by
     $O\left(n \log \frac{Cn}{W} + \log T \cdot n \log C n\right) = O\left(n\log\frac{Cn}{W}\log T\right)$.
The regret from exploitation rounds (Step~\ref{step:batched-clinch-exploit}) is at most
    $O\left(n T_\gamma \sum_{i=1}^{\lceil \log_2 T\rceil} i 2^{-i} \log \frac{C n}{W} \log \frac{T_\gamma C n}{W}\right)= O\left(n T_\gamma \log \frac{Cn}{W} \log\frac{CnT_\gamma}{W} \right)$,
where the equality uses that $\sum_{i=1}^m i2^{-i} = O(1)$. Finally, Step~\ref{step:batched-clinch-commit-final} contributes $O(1)$ regret.
\end{proof}

\subsection{A multi-threaded algorithm for unknown discount factor}

\BatchedClinch{} requires the principal to know the discount factor~$\gamma$. When $\gamma$ is unknown, we adapt the multi-layer approach of \cite{LykourisMirrokniPaesLeme18}, running $\log T$ copies of \Clinch{} in parallel threads, where thread $r$ experiences delay $2^r$ between queries. Each thread's delay corresponds to a guess $\hat{\gamma}$ for $\gamma$, with search guarantees only holding if $\hat{\gamma} \geq \gamma$ and sufficiently accurate best response feedback is induced. 
As thread $r$ performs a search to accuracy $O(1/T)$, it maintains a shrinking bounding box $B_\mathrm{priv}^{(r)}$ around $\bv{x}^\star$. To maintain appropriate feedback delay, this private state is only published to other threads, as $\smash{B_\mathrm{pub}^{(r)}}$, after $2^r$ rounds have passed. When search completes, it performs exploitation by perturbing a strategy in the intersection of the public boxes for threads $\geq q$, for the smallest $q$ such that this intersection is non-empty. By design, $\cap_{p \geq q} B^{(p)}_\mathrm{{pub}}$ contains $\bv{x}^\star$ and is contained by $B^{(r^\star)}_\mathrm{{pub}}$,
corresponding to the correct guess for $\gamma$. The resulting $\gamma$-agnostic policy \MultiThreadedClinch{} (\Cref{alg:multi-threaded-clinch}) achieves the following bound. 

\begin{theorem}
\label{thm:multi-threaded-clinch}
\MultiThreadedClinch{} incurs regret $O\big(n \log \frac{Cn}{W} \log^2 T + nT_\gamma \log \frac{Cn}{W} \log\frac{CnT_\gamma T}{W}\big)$ against $\gamma$-discounting agents, for unknown discount factor $\gamma \in (0,1)$
\end{theorem}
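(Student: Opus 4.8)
The plan is to adapt the multi-layer corruption-robust framework of \citet{LykourisMirrokniPaesLeme18} to the \Clinch{} search, treating each thread $r$ as a guess $\hat\gamma_r$ for the true discount factor through its delay $2^{r-1}$. The key conceptual point is identical to the known-$\gamma$ analysis of \Cref{thm:batched-clinch}: a delay of $D$ rounds, via \Cref{prop:delayed-feedback}, incentivizes the agent to $\eps$-approximately best respond for $\eps \approx \gamma^D/(1-\gamma)$, so a thread whose delay $2^{r-1}$ is at least the ``correct'' delay $D^\star = \Theta(T_\gamma \log(T_\gamma T))$ will receive best-response feedback accurate enough (namely $\eps \leq \frac{W}{200C^5 n T}$, matching \Cref{lem:perturbation} with $\lambda = 1/T$) for its \Clinch{} copy to be correct. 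Let $r^\star = \lceil \log_2 D^\star \rceil = O(\log(T_\gamma T))$ be the smallest thread index with this property. I would first establish the \emph{validity invariant}: for every thread $r \geq r^\star$, every bounding box $B^{(r)}$ it ever produces contains $x^\star$. This follows by induction on the restarts of $\cA^{(r)}$, using the $\ell_\infty$ guarantee of \Cref{thm:clinch} that each completed search returns $\hat x$ with $\|\hat x - x^\star\|_\infty \leq \delta^{(r)}$, so the box of radius $\delta^{(r)}$ around $\hat x$ again traps $x^\star$. Threads with $r < r^\star$ may receive corrupted feedback and produce invalid boxes, but the intersection rule at Step~\ref{step:multi-threaded-clinch-intersection} is designed to discard them.

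Next I would analyze the exploitation step. Since $B = \bigcap_{p \geq q} B^{(p)}$ for the smallest $q$ making the intersection nonempty, and thread $r^\star$'s box always contains $x^\star$, taking $q \leq r^\star$ is always feasible; hence $x^\star \in B$ and, crucially, $B \subseteq B^{(r^\star)}$, so $B$ has $\ell_\infty$-radius at most $\delta^{(r^\star)}$. Once thread $r^\star$ has finished its searches down to accuracy $O(1/T)$, the set $B$ satisfies $\|\E_{x\sim\Unif(B)}[x] - x^\star\|_\infty = O(1/T)$, and so the perturbed play $\Perturb(\hat x, 1/T)$ incurs per-round regret $O(1/T)$ by \Cref{lem:perturbation}, contributing $O(1)$ total over the exploitation rounds. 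The only delicate point here is the window before thread $r^\star$ has completed enough searches: while its box is still large (radius $\delta^{(r^\star)}$ not yet $O(1/T)$), exploitation rounds can incur up to constant per-round regret, but this happens only during $r^\star$'s search phase, whose total length I bound below.

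For the regret accounting, I would separate exploration and exploitation. \emph{Exploration:} thread $r^\star$ runs \Clinch{} searches at accuracies $\delta = \frac{W}{12C^2} 2^{-j}$ for $j = 0,\dots,O(\log T)$; by \Cref{thm:clinch} each such search uses $O(n \log \frac{Cn}{W})$ queries (the first) and $O(n\log(Cn 2^j))$ for refinements, and each query on thread $r$ is spaced $2^{r-1}$ rounds apart. Summing the query counts over all $O(\log(T_\gamma T))$ threads $r \geq r^\star$ that successfully search, and noting the exploration rounds themselves each cost $O(1)$ regret, gives the $O\big(n\log\frac{Cn}{W}\log^2 T\big)$ term. \emph{Exploitation:} the dominant cost comes from rounds spent exploiting while thread $r^\star$'s estimate is still coarse. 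Because thread $r^\star$ advances only once every $2^{r^\star-1} = \Theta(T_\gamma \log(T_\gamma T))$ rounds, completing its $O(\log T)$ refinement searches (each $O(n\log\frac{Cn}{W})$ queries) takes $O\big(n T_\gamma \log\frac{Cn}{W}\log\frac{CnT_\gamma T}{W}\big)$ rounds, and each such round contributes $O(1)$ regret in the worst case before the estimate sharpens; this yields the second term. I would organize this exactly as a geometric sum $\sum_j 2^{-j}$ over refinement levels weighted by the per-level round count, mirroring the $\sum_i i 2^{-i} = O(1)$ bookkeeping in the proof of \Cref{thm:batched-clinch}.

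The main obstacle I anticipate is rigorously controlling the exploitation regret during the interleaved warm-up of thread $r^\star$ \emph{before} its box has shrunk to radius $O(1/T)$: one must argue that whenever the intersection $B$ is still large, the chosen $q$ cannot be ``fooled'' into using only low-index (corrupted) threads in a way that places $\hat x$ far from $x^\star$ for a large number of rounds. The clean resolution is that $B \subseteq B^{(r^\star)}$ always, so the per-round exploitation regret is governed entirely by the current accuracy of thread $r^\star$; since that thread advances deterministically on a $2^{r^\star-1}$-round schedule and halves its accuracy each completed search, the number of rounds at each accuracy level is fixed, and the weighted geometric sum closes to the stated bound. Verifying that the intersection is never empty for $q \leq r^\star$ (so that exploitation always has a valid box and never stalls) is the one invariant I would state and prove most carefully.
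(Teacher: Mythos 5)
Your proposal follows the paper's own proof almost step for step: the same definition of $r^\star$ as the first thread whose delay induces $\frac{W}{\Theta(C^5 n T)}$-approximate best responses, the same validity invariant that every box $B^{(r)}$ with $r \geq r^\star$ traps $x^\star$, the same consequence that the minimal feasible $q$ satisfies $q \leq r^\star$ so that $B \subseteq B^{(r^\star)}$ at all times, and the same exploration/exploitation split with a geometric sum over refinement levels multiplied by the $2^{r^\star} = \Theta\big(T_\gamma \log \frac{CnT_\gamma T}{W}\big)$ spacing of thread $r^\star$'s queries. (One cosmetic point: ``$x^\star \in B$'' does not actually follow when the chosen $q$ is strictly smaller than $r^\star$, since boxes of corrupted threads then enter the intersection; but what you actually use --- that the centroid of $B$ is within $O(\delta^{(r^\star)})$ of $x^\star$ because $B \subseteq B^{(r^\star)} \ni x^\star$ --- is correct, and the paper states the same slightly loose inclusion.)

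There is one genuine gap: your exploration accounting sums ``over all threads $r \geq r^\star$ that successfully search'' and never charges the exploration rounds of the low-delay threads $r < r^\star$. These threads receive corrupted feedback, but they still explore --- thread $1$ plays on half of all rounds --- and your argument gives no bound on how long their \Clinch{} runs last. If a corrupted thread's search never satisfied the termination condition, its accuracy $\delta^{(r)}$ would never drop below $\frac{W}{12C^2T}$ and it would explore forever, contributing up to $\Omega(T)$ regret and destroying the bound. The fix, which the paper invokes for \emph{all} threads, is that the termination/query-complexity analysis of \Clinch{} in \Cref{lem:minimization} is purely geometric: the update $\underline{x}_y \gets x_y - C\eps$ always cuts near the centroid of $S$, so the volume of the search region shrinks by a constant factor (or a dimension is flattened) \emph{whatever} target the agent reports, honest or not. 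Hence every thread --- corrupted or not --- finishes each run within the query bounds of \Cref{thm:clinch} and terminates exploration after $O\big(n\log\frac{Cn}{W}\log T\big)$ updates, so total exploration regret over all $O(\log T)$ threads is $O\big(n\log\frac{Cn}{W}\log^2 T\big)$. With that observation added, your accounting closes to the stated bound.
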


\begin{proof}
By design, thread $r$ runs on rounds $2^{r-1}(2k-1)$ for $k=1,2,\dots$ and does not update its public state until $2^r$ rounds have passed. Hence, the agent experiences feedback delay $2^r$ on rounds during which thread $r$ is active. Write $r^\star = \big\lceil \log_2 \big(T_\gamma \log \frac{400 C^5 n T_\gamma T}{W}\big) \big\rceil$ for the index of the first thread whose delay induces $\frac{W}{400 C^5 n T}$-approximate best responses, using \Cref{prop:delayed-feedback} (we can assume that $r^\star \leq \lfloor \log T \rfloor + 1$; otherwise the regret bound holds trivially). By \Cref{thm:clinch}, feedback for each thread $r \geq r^\star$ is sufficiently accurate so that the bounding boxes $B^{(r)}_\mathrm{pub}, B^{(r)}_\mathrm{priv}$ always contain $\bv{x}^\star$. Consequently, $q$ the confidence set $B$ selected at Step~\ref{step:multi-threaded-clinch-intersection} always contains $\bv{x}^\star$. 

To bound regret, we note that all threads terminate exploration after $O(n \log\frac{Cn}{W} \log T)$ updates, using the same bound as in the proof \Cref{thm:batched-clinch}; this gives a total exploration cost of $O(n \log\frac{Cn}{W} \log^2 T)$. For exploitation, we consider the separate runs of $\cA^{(r^\star)}$; by \Cref{thm:clinch}, the first run takes $O(n \log \frac{Cn}{W})$ queries, and the remaining $O(\log T)$ runs take $O(n \log Cn)$ queries each. While the $i$th run is in progress, corresponding to $\delta = \frac{W}{6C^2 2^i}$, \Cref{lem:perturbation} implies that no exploit round from \emph{any} thread can incur regret more than $2^{2-i}$. Indeed, $\bv{x}^\star \in B \subseteq B^{(r^\star)}_\mathrm{pub}$ and $B^{(r^\star)}_\mathrm{pub}$ has side width at most $4\delta$ during this run.  Consequently, we bound exploitation regret by \begin{equation*}
    O\left(2^{r^\star} \left[n \log \frac{Cn}{W} + \sum_{i=1}^{\log T} n \log(Cn) 2^{2-i}\right]\right) = O\left(n T_\gamma \log \frac{Cn}{W} \log \frac{Cn T_\gamma T}{W}\right).
    \qedhere
\end{equation*}
\end{proof}

\begin{algorithm}[t]
\caption{\MultiThreadedClinch}\label{alg:multi-threaded-clinch}
\DontPrintSemicolon
\SetAlgoNoLine
    \For{thread $r=1,\dots,\lfloor \log T \rfloor + 1$}{
        Initialize copy $\cA^{(r)}$ of \Clinch{} with accuracy $\delta^{(r)} \gets \frac{W}{12C^2}$\;
        Initialize public and private bounding boxes $B^{(r)}_\mathrm{pub}, B^{(r)}_\mathrm{priv} \gets [0,1]^n$\;
    }
    \For{round $t = 1,\dots,T$}{
        $r \gets \argmax\{k \in \mathbb{N}_{>0} : 2^{k-1} \text{ divides } t \}, \;B^{(r)}_\mathrm{pub} \gets B^{(r)}_\mathrm{priv}$
        
        \If(\tcp*[f]{Explore}){$\delta^{(r)} > \frac{W}{12C^2T}$}{
            Simulate query/response for $\cA^{(r)}$ using $\bv{x}^{(t)}, y_t$\;
            \If{$\cA^{(r)}$ has terminated, with output $\tilde{\bv{x}} \in \cX$}{
                Restart $\cA^{(r)}$ with $\delta^{(r)} \gets \delta^{(r)}/2$ and $B^{(r)}_\mathrm{priv}\!\gets\!\big\{\bv{x} \in \R^d : \|\bv{x} - \tilde{\bv{x}}\|_\infty \leq \delta^{(r)}\big\}$
            }
        }
        \Else(\tcp*[f]{Exploit}){
            $\hat{\bv{x}} \gets \E_{\bv{x} \sim \Unif(B)}[\bv{x}]$ where $B = \cap_{p \geq q} B^{(p)}_\mathrm{pub}$ for min $q$ such that intersection non-empty\; \label{step:multi-threaded-clinch-intersection}
            Play $\bv{x}^{(t)} \gets \Perturb(\hat{\bv{x}},1/T)$
        }
    }
\end{algorithm}

\begin{remark}
As we see in Sections \ref{ssec:demand} and \ref{ssec:finite-games}, this approach for handling unknown discount factors extends beyond SSGs to settings where we wish to estimate a \emph{ground truth} and there exists an algorithm for the myopic setting that maintains and aggressively shrinks confidence sets about this quantity. This is analogous to the settings where an unknown number of adversarial corruptions can be handled by the multi-threading technique \citep{LykourisMirrokniPaesLeme18,ChenKrishnamurthyWang23,LykourisSimchowitzSlivkinsSun23,KrishnamurthyLP21,GolrezaeiManSchSek22,ChenWang22}.
\end{remark}

\subsection{Experiments with simulated discounting agents}
\label{ssec:non-myopic-clinch-experiments}

We now empirically test the performance of these algorithms. Since exact simulation of $\gamma$-discounting agents is computationally intractable, we instead simulate a restricted class of discounting agents with bounded rationality. At each round $t$, our simulated agent picks target $y \in \cY$ maximizing its future $\gamma$-discounted utility if it plays $y$ at round $t$ and best responds in future rounds. While less complex, this agent model still poses major obstacles to any principal algorithm expecting best responses, and our regret bounds still apply (since Proposition~\ref{prop:delayed-feedback} holds). We also opt for slightly simplified implementations of \BatchedClinch{} and \MultiThreadedClinch{}, which are faster to simulate and still satisfy logarithmic regret bounds. Full details for the algorithms and agent model are provided in \cref{app:non-myopic-clinch-experiments}.

\begin{figure}
    \centering
    \includegraphics[width=0.9\linewidth]{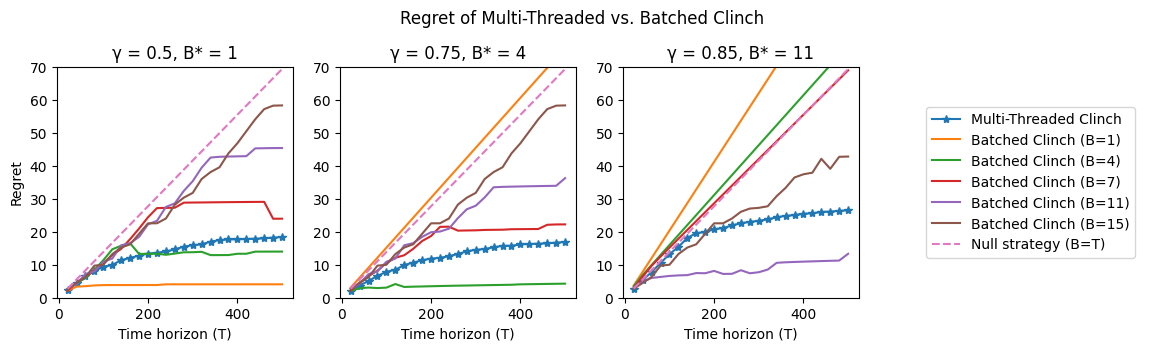}
    \caption{Regret achieved by batched and multi-threaded variants of \Clinch{} against a simulated $\gamma$-discounting agent on a random SSG instance. For each discount factor, we note the optimal batch size $B^\star$ at $T=500$. We note that $B=1$ corresponds to the baseline $\Clinch{}$ algorithm designed for myopic agents.}
    \label{fig:multi-threaded-simulations}
\end{figure}

The batched algorithm accepts a batch size $B$, which must be taken to scale with $T_\gamma$ to ensure sublinear regret.
In Figure~\ref{fig:multi-threaded-simulations}, we compare the performance of multi-threaded versus batched \Clinch{} with varied $B$ on a randomly sampled SSG simplex instance with $n=3$ targets. Each plot shows regret as a function of time horizon $T \in [20, 500]$ for a different choice of $\gamma \in \{0.5, 0.75, 0.85\}$. These results demonstrate that batched \Clinch{} can achieve sublinear regret, but that its performance deteriorates if $B$ is too small or too large. On the other hand, multi-threaded \Clinch{} incurs a mild overhead over the best batch size $B^\star = B^\star(\gamma)$ with no tuning. Appendix~\ref{ssec:non-myopic-clinch-extra-experiments-geometric} includes full details and plots for four more random SSGs, all of which exhibit these trends. We also provide plots for an extended set of simulations with $n=10$. Here, the overhead of multi-threading is more pronounced in some regimes but still distinctly sub-linear, compared to the linear regret suffered by small batch sizes. Code is provided at \url{https://github.com/sbnietert/learning-stackelberg-games}.

\begin{remark}[Non-geometric discounting]
\label{rem:hyperbolic-discounting}
Our multi-threaded approach does not rely on the specifics of geometric discounting; we only require that delays of sufficiently high-indexed threads induce approximate best responses from the agent. In Appendix~\ref{ssec:non-myopic-clinch-extra-experiments-hyperbolic}, we corroborate this empirically with an additional set of experiments against a simulated hyperbolically-discounting agent.
\end{remark}

\section{Applications beyond SSGs}
\label{sec:beyond-ssgs}

We now apply our framework to settings beyond SSGs. We treat demand learning in Section~\ref{ssec:demand}, general finite Stackelberg games in Section~\ref{ssec:finite-games}, and strategic classification in Section~\ref{ssec:strategic-classification}.

\subsection{Pricing with an unknown demand curve}\label{ssec:demand}
In the demand learning problem \citep{kleinberg03value}, a price-setting principal seeks to maximize revenue from selling a good to a returning buyer with demand curve induced by an unknown value distribution. Each round, the principal posts a price and the buyer decides whether to purchase based on their realized value. This problem was among the first examined with non-myopic agents \citep{amin2013learning,mohri2014}, as it arises naturally in settings like online advertising where strategic buyers may try to trick the seller into providing low prices.

When the buyer's value is fixed, our learning task mirrors binary search, and we adapt the batching method of \BatchedClinch{} to obtain a policy \BatchedBinarySearch{} with regret $\widetilde{O}(\log T + T_\gamma)$, improving upon the state-of-the-art when $T_\gamma = \Omega(\log T)$.
For stochastic values, the problem reduces naturally to an instance of stochastic multi-armed bandits---typically solved by adaptive exploration rather than explore-then-commit---and so a different approach is required. Fortunately, a classic policy \SE{} extends seamlessly to the delayed feedback setting and exhibits natural robustness to bounded adversarial perturbations. This policy achieves regret $\widetilde{O}(\sqrt{T} + T_\gamma)$, with only an additive overhead in $T_\gamma$ compared to the standard $\widetilde{O}(\sqrt{T})$ bound.

\paragraph{Model and preliminaries.}
A \emph{posted-price single-buyer auction} is a Stackelberg game where the principal (``seller'') sets a price $p \in [0,1]$ of a single good and the agent (``buyer'') decides whether to buy ($a=1$) or not ($a=0$) at the posted price. The buyer has value $v \in [0,1]$ for the good and receives payoff $a(v-p)$, while the seller receives $pa$. In the stochastic setting ($v$ sampled from a distribution $\Dist$), the seller's expected revenue for posting price $p$ is $f(p) = p d(p)$, where $d(p) = \Pr_{v \sim \Dist}(v \geq p)$ is the buyer's \emph{demand curve}. 

We consider the repeated game where values $v_1,\dots,v_T$ of a returning buyer are sampled i.i.d.\ from $\Dist$, unknown to the seller, where $\Dist$ either (i) is supported on a single value or (ii) satisfies mild regularity assumptions described below.
Denoting the game's history by $\{(p_t,a_t)\}_{t=1}^T$, the seller seeks to maximize revenue $\sum_{t=1}^T p_t a_t$, while the buyer maximizes discounted profit $\sum_{t=1}^T \gamma^t a_t(v_t - p_t)$. Stackelberg regret for the seller is $T \max_p p d(p)  - \E\left[\sum_{t=1}^T p_t a_t\right]$, since a myopic agent buys the good when their value exceeds the posted price. As before, we write $\BestResp^\eps_t(p) = \big\{ a \in \{0,1\} : a (v_t-p)\ge \max\{v_t-p,0\}- \eps \big\}$ for the $\eps$-approximate best response set at time $t$.

\paragraph{Connection to multi-armed bandits.}
In case (ii), \citet{kleinberg03value} reduce this task to a \emph{stochastic multi-armed bandits} problem, a setting which we recall briefly (see \cite{slivkins2019introduction} for a textbook treatment). In this model, a principal interacts with a set of $K$ ``arms'' over $T$ rounds. The arms are indexed by $i\in [K]$, and each arm has an associated reward distribution $\Dist_i$ supported on $[0, 1]$. Write $\mu_i = \E_{r\sim\Dist_i}(r)$ and $\Delta_i = \max_{i'} \mu_{i'} - \mu_i$. During the $t$-th round, the principal must pull an arm $i_t\in [K]$; after doing so, they observe a reward $r_t\sim \Dist_i$. The performance of a policy is benchmarked against the best arm in expectation, with regret defined as $\E\left[\sum_{t=1}^T \Delta_{i_t}\right]$. To view pricing through this lens, \cite{kleinberg03value} discretize the space of possible prices into the set $\big\{\frac {i}{K} : 1\le i\le K\big\}$. Each of these $K$ possible prices can then be thought of as a bandit arm that the seller can pull, where pulling the $i$-th arm corresponds to posting a price of $\frac {i}{K}$, and the two notions of regret coincide up to a small difference in benchmarks due to discretization.

\paragraph{Regularity assumptions.}
For case (ii), we assume that the demand curve $d(p) = \Pr_{v \sim \Dist}(v \geq p)$ is $L$-Lipschitz, which is standard for this setting (see, e.g., \cite{amin2013learning}). Moreover, following \cite{kleinberg03value}, we assume that $f(p)$ achieves its maximum for a unique $p^\star \in (0,1)$ with $f''(p^\star) < 0$. This implies that there exist constants $C_1,C_2$ with $C_1(p^\star - p)^2 \leq f(p^\star) - f(p) \leq C_2(p^\star - p)^2$. Our work requires knowledge of $L$ (this can be relaxed, see Remark~\ref{rem:unknown-params-pricing}) but not $C_1$ and $C_2$. \cite{kleinberg03value} show the following (corollary 3.13 and theorem 3.14 therein).

\begin{lemma}
\label{lemma:discretization}
The discretization error $f(p^\star) - \max_i f(i/K)$ is at most $\frac{C_2}{K^2}$.
\end{lemma}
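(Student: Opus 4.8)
The plan is to combine the quadratic comparison inequality already supplied by the regularity assumption, namely $f(p^\star) - f(p) \le C_2(p^\star - p)^2$, with the trivial geometric fact that the discretization grid $\{i/K : 1 \le i \le K\}$ has spacing $1/K$ and therefore contains a point within distance $1/K$ of any interior maximizer $p^\star \in (0,1)$. Once a nearby grid point is located, the revenue gap at that point is at most $C_2/K^2$, and since the grid maximum dominates the value at any single grid point, the same bound passes to $f(p^\star) - \max_i f(i/K)$.

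Concretely, I would first exhibit a good grid index. Setting $i^\star = \lceil K p^\star \rceil$, the hypothesis $p^\star \in (0,1)$ gives $K p^\star \in (0,K)$, so $i^\star \in \{1,\dots,K\}$ is a valid index, and by construction $0 \le i^\star/K - p^\star < 1/K$, whence $|p^\star - i^\star/K| \le 1/K$. (Here it is important that $0$ is not a grid point, so for $p^\star$ very close to $0$ the nearest grid point can be a full $1/K$ away; the looser bound $1/K$ is exactly what the statement requires.) Then I would apply the upper quadratic bound at $p = i^\star/K$ and use that the maximum over the grid dominates this one value:
\begin{equation*}
    f(p^\star) - \max_{i} f(i/K) \le f(p^\star) - f(i^\star/K) \le C_2 \left(p^\star - i^\star/K\right)^2 \le \frac{C_2}{K^2},
\end{equation*}
which is precisely the claimed bound.

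There is no substantive obstacle: the result is a one-line consequence of the quadratic comparison inequality, and the argument does not even require the matching lower bound $C_1(p^\star - p)^2$. The only point that genuinely needs checking is that the chosen index $i^\star$ lies in the admissible range $\{1,\dots,K\}$ rather than falling outside the grid, which is exactly where the assumption $p^\star \in (0,1)$ (strict, rather than $p^\star \in [0,1]$) is used.
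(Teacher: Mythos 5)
Your proof is correct and is essentially the only (and standard) argument here: the paper does not prove this lemma itself but imports it by citation to \cite{kleinberg03value} (Corollary 3.13), and the underlying proof is exactly your one-liner---take the grid point $i^\star/K$ with $i^\star = \lceil K p^\star \rceil$, which lies within $1/K$ of $p^\star$ since $p^\star \in (0,1)$, and apply the quadratic upper bound $f(p^\star) - f(p) \leq C_2(p^\star - p)^2$ from the regularity assumption. Your care about the missing grid point at $0$ (which is why the honest distance bound is $1/K$ rather than $1/(2K)$) is exactly the right detail to check.
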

\begin{lemma}
\label{lemma:gaps}
The sum of inverse gaps $\sum_{\Delta_i > 0} \frac{1}{\Delta_i}$ is at most $\frac{6K^2}{C_1}$.
\end{lemma}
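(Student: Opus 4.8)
The goal is to bound $\sum_{\Delta_i > 0} \frac{1}{\Delta_i}$ for the $K$-armed bandit instance arising from discretizing prices into the grid $\{i/K : 1 \le i \le K\}$, where arm $i$ has mean reward $\mu_i = f(i/K)$. My plan is to relate the gaps $\Delta_i$ directly to the distance of each grid price from the optimum $p^\star$, using the quadratic growth bounds on $f$ around its maximizer, and then bound the resulting sum by comparing it to a sum of $1/k^2$ over integer offsets from the grid point nearest $p^\star$.

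First I would pin down the gaps. Let $p_i = i/K$ and let $i^\star$ index a grid price closest to $p^\star$, so that the best arm has mean $\max_i f(p_i) = f(p_{i^\star})$. For each arm $i$, the gap is $\Delta_i = f(p_{i^\star}) - f(p_i)$. The key estimate is to bound $\Delta_i$ from below. Using the lower bound $f(p^\star) - f(p) \ge C_1 (p^\star - p)^2$ together with the upper bound $f(p^\star) - f(p_{i^\star}) \le C_2 (p^\star - p_{i^\star})^2 \le C_2 / (2K)^2$ (since the nearest grid point is within $1/(2K)$ of $p^\star$), I can write
\begin{equation*}
\Delta_i = \bigl(f(p^\star) - f(p_i)\bigr) - \bigl(f(p^\star) - f(p_{i^\star})\bigr) \ge C_1 (p^\star - p_i)^2 - \frac{C_2}{4K^2}.
\end{equation*}
This shows $\Delta_i$ grows quadratically in the distance of $p_i$ from $p^\star$, at least once $p_i$ is a few grid steps away from $p^\star$. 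Writing $|i - i^\star| \approx |p_i - p^\star| K$, the gap for an arm $k$ steps away from optimum scales like $C_1 k^2 / K^2$, so $\frac{1}{\Delta_i} \lesssim \frac{K^2}{C_1 k^2}$.

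Next I would sum over all arms with $\Delta_i > 0$. Grouping arms by their integer offset $k = |i - i^\star|$ from the optimal grid index, there are at most two arms at each offset $k \ge 1$, and each contributes at most $O(K^2 / (C_1 k^2))$ to the sum. Summing over $k \ge 1$ and using $\sum_{k \ge 1} 1/k^2 = \pi^2/6$ gives a bound of the form $O(K^2 / C_1)$; the explicit constant $6$ claimed in the lemma presumably comes from tracking the quadratic-growth constant carefully and arranging the offset bookkeeping so the prefactor works out. For the few arms nearest $p^\star$ where the crude lower bound on $\Delta_i$ could be negative or vanishing, one handles them separately—either they are zero-gap (and excluded from the sum) or their contribution is absorbed into the constant.

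The main obstacle is getting the constant exactly right, which requires care in two places: (i) precisely controlling the offset of the best grid arm from $p^\star$ and the near-optimal arms where the quadratic lower bound is loose, and (ii) converting the continuous quadratic bounds $C_1(p^\star - p)^2$ into a clean discrete sum $\sum 1/k^2$ without loss. Since this lemma is quoted directly from \citet{kleinberg03value}, I expect the intended proof simply to invoke the quadratic curvature assumption and carry out this discrete-sum comparison, with the factor $6$ emerging from their specific bookkeeping rather than from any conceptual difficulty; the substantive content is entirely the quadratic growth estimate on $\Delta_i$ established in the first step.
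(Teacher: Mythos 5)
You should first be aware that the paper contains no proof of this statement at all: it is imported verbatim from Kleinberg--Leighton (``within Theorem 3.14''), so your reconstruction can only be judged against the argument that result is known to rest on. Your overall mechanism --- quadratic growth of $f$ around $p^\star$, grid offsets, and an inverse-square sum --- is indeed the right one, and it correctly handles every arm that is more than $O\bigl(\sqrt{C_2/C_1}\bigr)$ grid steps from $p^\star$. The genuine gap is the sentence where you dismiss the remaining arms: ``either they are zero-gap (and excluded from the sum) or their contribution is absorbed into the constant.'' That is precisely the step that fails, and it cannot be repaired by bookkeeping, because with the paper's definition $\Delta_i = \max_{i'}\mu_{i'} - \mu_i$ (gap to the best \emph{arm}, not to $f(p^\star)$), an arm can be nearly tied with the best arm: its gap is positive but arbitrarily small, so its single term $1/\Delta_i$ already exceeds $6K^2/C_1$.

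To see this concretely, let the buyer's value be $v \sim \Unif[0,1-2\varepsilon']$, so that $f(p) = p(1-2\varepsilon'-p)/(1-2\varepsilon')$ is exactly quadratic around $p^\star = \tfrac12 - \varepsilon'$ with $C_1 = C_2 = (1-2\varepsilon')^{-1}$, and take $K$ odd, so the two nearest grid points sit at distances $\tfrac{1}{2K}-\varepsilon'$ and $\tfrac{1}{2K}+\varepsilon'$ on either side of $p^\star$. The second-best arm then has gap
\begin{equation*}
\Delta \;=\; C_1\Bigl[\bigl(\tfrac{1}{2K}+\varepsilon'\bigr)^2 - \bigl(\tfrac{1}{2K}-\varepsilon'\bigr)^2\Bigr] \;=\; \frac{2C_1\varepsilon'}{K},
\end{equation*}
so $\sum_{\Delta_i>0} 1/\Delta_i \geq K/(2C_1\varepsilon') > 6K^2/C_1$ as soon as $\varepsilon' < 1/(12K)$. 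So the statement as literally written is false, and no proof attempt along your lines (or any other) can close the gap without changing the statement. A symptom is already visible in your own estimate: the lower bound $\Delta_i \geq C_1(p^\star-p_i)^2 - C_2/(4K^2)$ gives \emph{no} positive lower bound for the $O(\sqrt{C_2/C_1})$ arms closest to $p^\star$, and a single such arm suffices to break the claim. The way this issue is actually resolved in gap-based analyses of discretized pricing --- and implicitly in how this paper uses the lemma, where near-tied arms cost at most $T\Delta_i \approx 0$ regret and are never pulled to the gap-dependent extent --- is that the inverse-gap sum is only ever needed over arms with $\Delta_i \gtrsim C_2/K^2$; restricted to those arms, your argument does go through. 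A correct rendering of the lemma must carry such a restriction (or be phrased directly as a regret bound, with constants depending on both $C_1$ and $C_2$), rather than bounding the full sum $\sum_{\Delta_i>0} 1/\Delta_i$.
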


\paragraph{Pricing with a fixed value.}
Returning to case (i) where $\Dist$ is concentrated on an unknown $v \in [0,1]$, feedback $\mathds{1}\{v \geq p_t\}$ from a myopic agent is sufficient to perform binary search, implying an explore-then-commit $O(\log T)$ regret bound. Looking closely, we can reinterpret the problem as a security game with two targets representing the buyer's purchase choices. Indeed, we can set $\cY = [2]$, map $a$ to $y = a+1$, and take $\cX = \Delta_1^{\leq}$, where price $p$ is mapped to $\bv{x} = (1-p,p)$. The principal's utility $ap$ is then given by $u(\bv{x},y) = u(x_y,y) = (y-1)x_y$, and the agent's by $(y-1)(v-x_y)$; both are appropriately monotonic in $x_y$ and satisfy the slope condition with $C = 1$. Finally, the optimal principal strategy is $\bv{x}_\star = (1-v,v)$, so the SSG regret coincides with that for dynamic pricing.

This motivates us to implement delays via the batching approach of \BatchedClinch{}, lengthening batches as search progresses and committing to prices with low regret during non-exploration rounds. We simplify this approach to a policy \BatchedBinarySearch{} for the present setting.\medskip

\begin{algorithm}[H]
\DontPrintSemicolon
\SetAlgoNoLine
\caption{\BatchedBinarySearch}\label{alg:batched-binary-search}
$\ell \gets 0, u \gets 1, \hat{v} \gets 0$\;
\While{$u - \ell > 1/T$}{
    Set $\eps = (u - \ell)/4$ and post price $p = (\ell+u)/2$\;
    \textbf{if} agent buys good \textbf{then} $\ell \gets p - \eps$ \textbf{else} $u \gets p + \eps$\;
    Post price $\hat{v}$ for next $T_\gamma \log \frac{T_\gamma}{\eps}$ rounds\; \label{step:batched-binary-search-delay}
    $\hat{v} \gets \max\{\ell - \eps,0\}$
}
Post price $\hat{v}$ for remaining rounds \label{step:batched-binary-search-commit}
\end{algorithm}

\begin{theorem}
\BatchedBinarySearch{} incurs regret $O(\log T + T_\gamma \log T_\gamma)$ against $\gamma$-discounting agents with a fixed value.
\end{theorem}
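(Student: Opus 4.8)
The plan is to analyze \BatchedBinarySearch{} by separating its total regret into three components---exploration rounds, exploitation (delay) rounds interleaved with search, and the final commit phase---and bounding each. The key structural fact is that this algorithm is essentially a one-dimensional specialization of \BatchedClinch{}, so the argument mirrors the proof of \Cref{thm:batched-clinch} but is cleaner because there is a single coordinate to search. First I would invoke \Cref{prop:delayed-feedback}: Step~\ref{step:batched-binary-search-delay} plays each explored price for $T_\gamma \log(T_\gamma/\eps)$ rounds, enforcing a delay $D = \lceil T_\gamma \log(T_\gamma/\eps)\rceil$, which guarantees that the agent $\eps$-approximately best responds during exploration, i.e., $a_t \in \BestResp_t^\eps(p_t)$. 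Since the buyer has a fixed value $v$, an $\eps$-approximate best response means the buyer buys whenever $v - p > \eps$ and refuses whenever $p - v > \eps$, with arbitrary behavior only in the ambiguous band $|v-p|\le\eps$.

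\textbf{Correctness of the search.} The crucial invariant to establish is that the interval $[\ell,u]$ always contains $v$ (up to the slack $\eps$). When the algorithm posts $p=(\ell+u)/2$ with $\eps=(u-\ell)/4$, if the agent buys then we set $\ell \gets p-\eps$; I must check that $v \ge \ell$ still holds. If the buy were a genuine best response then $v \ge p > p - \eps = \ell$; if it were only an $\eps$-approximate best response then $v \ge p - \eps = \ell$, so the invariant is preserved in either case. The symmetric check handles the ``not buy'' branch: $v \le p + \eps = u$. Thus $v \in [\ell,u]$ throughout, and since each iteration shrinks $u-\ell$ by a factor of $3/4$ (the new width is $u - (p-\eps) = (u-\ell)/2 + (u-\ell)/4 = \tfrac34(u-\ell)$ in the buy case, and symmetrically), the loop runs $O(\log T)$ times before $u - \ell \le 1/T$. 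At termination $\hat{v}=\max\{\ell-\eps,0\}$ is within $O(1/T)$ below $v$, so posting $\hat{v}$ loses at most $O(1/T)$ revenue per round against the optimal price $v$ (posting just below $v$ guarantees a sale, and the price gap is $O(1/T)$).

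\textbf{Regret accounting.} I would bound the three pieces as follows. Exploration rounds (one per loop iteration) number $O(\log T)$ and each contributes at most $1$ regret, giving $O(\log T)$. For the commit phase and the exploit blocks, the key observation---as in \Cref{thm:batched-clinch}---is that at loop iteration $i$ the committed price $\hat{v}=\max\{\ell-\eps,0\}$ incurs per-round regret at most $O(u-\ell) = O((3/4)^i)$, since $\hat v$ lies within the current interval's width of the true value and the buyer purchases at $\hat v$. The exploit block at iteration $i$ has length $\lceil T_\gamma \log(T_\gamma/\eps_i)\rceil$ with $\eps_i = (u-\ell)/4 = \Theta((3/4)^i)$, so it contributes regret $O\big((3/4)^i \cdot T_\gamma \log(T_\gamma/\eps_i)\big) = O\big(T_\gamma (3/4)^i (i + \log T_\gamma)\big)$. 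Summing over $i$, the geometric-times-linear series $\sum_i (3/4)^i (i + \log T_\gamma)$ converges to $O(\log T_\gamma)$ (using $\sum_i i(3/4)^i = O(1)$ and $\sum_i (3/4)^i = O(1)$), yielding total exploit regret $O(T_\gamma \log T_\gamma)$. The final commit (Step~\ref{step:batched-binary-search-commit}) uses $\hat v$ with $u-\ell \le 1/T$, so per-round regret is $O(1/T)$ over at most $T$ rounds, contributing $O(1)$. Adding the three pieces gives $O(\log T + T_\gamma \log T_\gamma)$, as claimed.

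\textbf{The main obstacle} I anticipate is handling the interaction between the $\eps$-slack and the benchmark carefully in the regret-per-round bound: I must verify that committing to $\hat{v}=\max\{\ell-\eps,0\}$ indeed induces a \emph{purchase} (so the seller collects revenue $\hat v$ rather than zero) and that the gap $v - \hat v$ is genuinely $O(u-\ell)$ rather than something larger once the $\eps$-band ambiguity is accounted for. Since $\hat v \le \ell - \eps < v$ strictly below the value by more than $\eps$, the buyer's \emph{only} $\eps$-approximate best response is to buy, so revenue $\hat v$ is collected and the regret per round is exactly $v - \hat v = O(u - \ell)$; making this last deduction rigorous---that pushing the committed price an extra $\eps$ below $\ell$ removes all ambiguity---is the one spot requiring genuine care, and it is precisely why the algorithm subtracts $\eps$ in the definition of $\hat v$.
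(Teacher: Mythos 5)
Your overall architecture (delay $\Rightarrow$ approximate best responses at exploration rounds via \Cref{prop:delayed-feedback}, the invariant $v \in [\ell,u]$ with $3/4$ shrinkage, and the exploration/exploit/commit decomposition summed geometrically) is the same as the paper's. The gap is exactly at the step you flagged as the crux: why the buyer purchases at the committed price $\hat{v}$. You argue $\hat{v} \le \ell - \eps$ is ``strictly below the value by more than $\eps$,'' so buying is the \emph{only} $\eps$-approximate best response. But the invariant only yields $v \ge \ell$, hence $v - \hat{v} \ge \eps$, not $v - \hat{v} > \eps$, and equality is reachable: an $\eps$-approximate best responder may buy at a price $p = v + \eps$ (deficit exactly $\eps$), which sets $\ell = p - \eps = v$. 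In that configuration, declining to buy at $\hat{v} = \ell - \eps$ is also an $\eps$-approximate best response (again a deficit of exactly $\eps$), and a worst-case agent in $\mathfrak{B}^\eps$ who declines forces per-round regret $v$, which can be $\Omega(1)$, over exploit rounds whose total number is $\Omega(T_\gamma \log^2 T)$ --- destroying the claimed bound. (Separately, your inequality $\hat{v} \le \ell - \eps$ fails when $\ell < \eps$, where $\hat{v} = 0$; that subcase turns out to be harmless since then $v \le u = O(\eps)$ even with no sale, but it must be treated.)

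The paper closes this hole with an observation that does not follow from the $\eps$-approximate-best-response abstraction alone: \BatchedBinarySearch{} completely \emph{ignores} the agent's responses during the exploit and commit rounds (Steps~\ref{step:batched-binary-search-delay} and \ref{step:batched-binary-search-commit}), so those responses have zero influence on any future price. A rational $\gamma$-discounting agent therefore has no incentive to deviate from an exact best response in those rounds, and since $\hat{v} < v$ strictly (unless $v=0$, where any policy suffices), the agent buys. Alternatively, your argument can be repaired within your own framework by tracking indices: the price posted during the exploit block of iteration $i$ was computed at the end of iteration $i-1$ with the \emph{larger} slack $\eps_{i-1} = \tfrac{4}{3}\eps_i$, so $v - \hat{v} \ge \eps_{i-1} > \eps_i$, and an $\eps_i$-approximate best responder must buy; but as written, with a single $\eps$ playing both roles, the strict inequality you need is precisely the one the invariant does not provide.
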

\begin{proof}
First, we note that $v$ always lies in the interval $[\ell,u]$, and that this interval shrinks by a factor of 3/4 between iterations. Indeed, when this interval has width $4\eps$, our feedback delay ensures that the buyer’s decision is a best response for some perturbed value $v'$ with
$|v'-v| \leq \eps$ (by \Cref{prop:delayed-feedback}), and so the updates are sound. Consequently, we always have $\hat{v} < v$ (unless $v = 0$, in which case any policy suffices), and so the buyer purchases the good at Steps~\ref{step:batched-binary-search-delay} and \ref{step:batched-binary-search-commit}, since there is no incentive to deviate from best response during these rounds.
Hence, we incur at most $4 \cdot \frac{4}{3}\eps$ regret for each round of Step~\ref{step:batched-binary-search-delay} and at most regret 1 after search concludes, giving a total bound of
\begin{equation*}
    \sum_{i=1}^{\left\lceil\log_{4/3} T\right\rceil} \left(1 + 4 \cdot 0.75^{i-1} T_\gamma \log \left(T_\gamma 1.4^i \right)\right) = O(\log T + T_\gamma \log T_\gamma).\qedhere
\end{equation*}
\endproof
\end{proof}

\begin{remark}[Comparison to prior work]
Noting that over-pricing is costlier than under-pricing for the seller, \cite{kleinberg03value} beat binary search with a policy attaining regret $O(\log\log T)$ for the myopic setting. With non-myopic agents, a line of work \citep{amin2013learning,mohri2014,drutsa2017horizon} has brought regret down to $O(T_\gamma \log T_\gamma \log\log T)$ via a delayed search policy of \cite{drutsa2017horizon}, compared to a lower bound of $\Omega(\log\log T + T_\gamma)$ implied by \cite{kleinberg03value} and \cite{amin2013learning}. When $T_\gamma = O(1)$, this regret is optimal; however, for $T_\gamma = \Omega(\log T)$, our bound of $O(T_\gamma \log T_\gamma)$ is a $\log \log T$ improvement.
\end{remark}

\paragraph{Pricing with stochastic values.}
To address case (ii), we first consider stochastic bandits with perturbed and delayed feedback. Formally, we say that (potentially adversarially adaptive) feedback $r_1,\dots,r_T$ is \emph{$\delta$-perturbed} from that of the original bandits problem if, conditioned on any arm sequence $i_1, \dots, i_T$ with positive probability, there exist independent random intervals $\{[\ell_t,u_t]\}_{t=1}^T$ such that each $r_t$ lies in $[\ell_t,u_t]$ almost surely and that $\mu_{i_t} - \delta \leq \E[\ell_t] \leq \E[u_t] \leq \mu_{i_t} + \delta$. (Note that this definition of perturbations via couplings strictly generalizes the setting where each reward is shifted by $\pm \delta$ prior to observation.) Recall that classic algorithms for standard stochastic bandits like \UCB{} \citep{auer02finite} and \SE{} \citep{evendar06action} achieve regret $O\big(\sum_{\Delta_i > 0} \frac{\log T}{\Delta_i}\big)$. Here, we apply a simple variant \SEDelayed{} (\Cref{alg:SE-delayed}) of \SE{}, first analyzed by \cite{lancewicki2021stochastic} for a broader class of delays and without perturbations. Each phase of this policy pulls all arms, updates confidence intervals for reward means based on $D$-delayed feedback, and removes suboptimal arms. Our regret bound incurs overhead $\delta T$ from perturbations and $D \log K$ from delays.

\begin{lemma}
\label{prop:robust-delayed-bandits}
For $K$-armed stochastic bandits with $\delta$-perturbed rewards and $D$-delayed feedback, \SEDelayed{} achieves regret $O\left(\sum_{\Delta_i > 0} \frac{\log T}{\Delta_i} + \delta T + D \log K\right)$.
\end{lemma}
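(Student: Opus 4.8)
The plan is to follow the classical successive-elimination analysis, augmented to absorb the two sources of overhead. I would fix a \emph{good event} $\mathcal G$ on which every confidence interval produced by \SEUpdate{} brackets the true mean, $\mathrm{LCB}_i \le \mu_i \le \mathrm{UCB}_i$, simultaneously for all arms $i$ and all sample counts. On $\mathcal G$, standard arguments show that the best arm is never eliminated, that each suboptimal arm with a large enough gap is eliminated quickly, and that the delay only postpones eliminations by a controlled amount; off $\mathcal G$, which I arrange to have probability $O(1/T)$, the trivial bound of $T$ on total regret contributes only $O(1)$.

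Establishing $\mathcal G$ is where the perturbation is handled. Conditioning on the realized arm sequence, the definition of $\delta$-perturbed feedback supplies independent intervals $[\ell_t,u_t] \ni r_t$ with $\E[\ell_t],\E[u_t] \in [\mu_{i_t}-\delta, \mu_{i_t}+\delta]$. For the first $m$ pulls of arm $i$ I would sandwich the empirical mean, $\frac1m\sum \ell_\tau \le \hat\mu_i \le \frac1m\sum u_\tau$, and apply Hoeffding to the two independent endpoint sequences: each endpoint average concentrates within $\sqrt{2\log T/m}$ of its expectation, which itself lies within $\delta$ of $\mu_i$. Hence $|\hat\mu_i - \mu_i| \le \sqrt{2\log T/m} + \delta$, exactly matching the $\pm(\sqrt{2\log T/n}+\delta)$ half-width used in \SEUpdate{}. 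A union bound over the $K$ arms and the $\le T$ possible counts, applied uniformly over arm sequences, gives $\Pr[\mathcal G] \ge 1 - O(1/T)$.

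On $\mathcal G$ I would split the arms by gap. Because active arms share a common pull count $n$ in \SE{}, comparing the bracket of a suboptimal arm $i$ against that of the best arm shows $i$ is eliminated once $4\sqrt{2\log T/n} + 4\delta < \Delta_i$; so any arm with $\Delta_i > 8\delta$ survives at most $\tilde n_i = O(\log T/\Delta_i^2)$ pulls, contributing $\sum_i \Delta_i \tilde n_i = O\big(\sum_{\Delta_i>0}\log T/\Delta_i\big)$. Arms with $\Delta_i \le 8\delta$ may never separate, but since the total number of pulls is $T$ and each such pull costs at most $8\delta$ in regret, their combined contribution is $O(\delta T)$. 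For the delay, the feedback triggering arm $i$'s elimination arrives $D$ rounds late, during which $i$ is pulled once per phase; a phase pulls every active arm, and all arms with gap at most $\Delta_i$ are still active then, so the phase length---and hence the number of extra pulls of $i$---is controlled. Ordering the suboptimal arms by decreasing gap, the arm of rank $j$ is eliminated while at least $K-j+1$ arms remain active, yielding at most $D/(K-j+1)$ extra pulls and total delay regret $\sum_j \Delta_j \cdot D/(K-j+1) \le D\sum_j 1/(K-j+1) = O(D\log K)$, using $\Delta_j \le 1$. Summing the three contributions and the $O(1)$ off-$\mathcal G$ term gives the claim.

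The main obstacle is the delay term: getting $O(D\log K)$ rather than the naive $O(DK)$ relies on the structural observation that larger-gap arms are eliminated earlier, precisely when the active set (and thus each phase) is large, so their delay-induced overpulls are divided by a large active-set size, and the resulting harmonic sum over ranks produces the $\log K$. A secondary care point is making the concentration defining $\mathcal G$ rigorous despite the adversarially adaptive, data-dependent arm sequence, which the conditioning-then-uniform-union-bound over arm sequences resolves.
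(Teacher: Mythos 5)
Your proposal follows essentially the same route as the paper's appendix proof: a clean event obtained by sandwiching the empirical mean between averages of the perturbation-interval endpoints and applying Hoeffding conditionally on the arm sequence; the same $8\delta$ threshold splitting arms into those eliminated after $O(\log T/\Delta_i^2)$ pulls and those contributing $O(\delta)$ regret per pull; and the same round-robin/harmonic-sum accounting for the delay overhead. The concentration, gap-splitting, and $\delta T$ portions of your argument are correct and match the paper's (including the constant-level details, e.g.\ elimination once $4\sqrt{2\log T/n}+4\delta < \Delta_i$).

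However, one step in your delay accounting fails as stated. You order the suboptimal arms by decreasing gap and claim that ``the arm of rank $j$ is eliminated while at least $K-j+1$ arms remain active,'' justified by ``all arms with gap at most $\Delta_i$ are still active then.'' This is not implied by the clean event: the clean event guarantees that the optimal arm survives and that each arm is eliminated \emph{by} a certain pull count, but it does not prevent a small-gap arm from being eliminated \emph{early}. Concretely, take $\mu^\star = 0.9$, $\mu_a = 0.5$, $\mu_b = 0.65$, and confidence half-width $0.2$ at the current pull count; the empirical means $\hat{\mu}_a = 0.7$, $\hat{\mu}_b = 0.45$, $\hat{\mu}^\star = 1.0$ are all consistent with the clean event, yet arm $b$ (gap $0.25$) satisfies $\mathrm{UCB}_b = 0.65 < 0.8 = \mathrm{LCB}^\star$ and is eliminated, while arm $a$ (gap $0.4$) has $\mathrm{UCB}_a = 0.9 > 0.8$ and survives. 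So your premise about active-set sizes under the gap ordering can fail. The repair is to order arms by their actual \emph{last-pull round} instead: when an arm receives its final pull, every arm whose final pull occurs later is still active, as is the optimal arm, so the arm whose final pull is $j$-th from last sees an active set of size at least $j+1$; since $\Delta_i \leq 1$, the delay regret is at most $D\sum_{j\geq 1} 1/(j+1) = O(D\log K)$. This is precisely how the paper's inequality $\sum_i 1/m_i \leq \sum_{i=1}^K 1/i$ (with $m_i$ the number of arms remaining at arm $i$'s last pull) is justified. With this one-line change your argument is complete and coincides with the paper's.
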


\let\oldnl\nl%
\newcommand{\nonl}{\renewcommand{\nl}{\let\nl\oldnl}}%

\begin{center}
\begin{minipage}{0.52\textwidth}
\begin{algorithm}[H]
\caption{\SEDelayed{}}\label{alg:SE-delayed}
\DontPrintSemicolon
\SetAlgoNoLine
\SetKwInOut{Input}{input}
\Input{arm count $K$, delay $D$, error bound $\delta$}
    $S\gets\{1,\dots,K\}$; $t \gets 1$\;
    \While{$t < T$}{
        Pull each arm $i \in S$ and observe feedback\;
        $t \gets t + |S|$\;
        $\SEUpdate{}(S,t-D,\delta)$\;
        $S \gets \{ i \in S : \mathrm{UCB}_i \geq \mathrm{LCB}_j \text{ for all } j \in S\}$\; \label{step:SE-eliminate}
    }
\end{algorithm}
\end{minipage}
\hfill
\begin{minipage}{0.47\textwidth}
\begin{algorithm}[H]
\caption{\SEUpdate}\label{alg:se-update}
\DontPrintSemicolon
\SetAlgoNoLine
\SetKwInOut{Input}{input}
\SetKwInOut{Output}{output}
\nonl\textbf{input:} arm set $S$, time $t$, error bound $\delta$\;
\Output{$\mathrm{LCB}_i$ and $\mathrm{UCB}_i \:\forall i \in S$}
    \For{arm $i\in S$}{
        $n \gets \max\{\sum_{\tau=1}^{t}\mathds{1}\{i_\tau = i\},1\}$\;
        $\hat{\mu}_i \gets \frac{1}{n}\sum_{\tau=1}^{t} \mathds{1}\{i_\tau = i\} r_\tau$\;
        $\mathrm{LCB}_i \gets \hat{\mu} - \sqrt{2\log(T)/n} - \delta$\;
        $\mathrm{UCB}_i \gets \hat{\mu} + \sqrt{2\log(T)/n} + \delta$
    }
\end{algorithm}
\end{minipage}
\end{center}

\begin{proof}[Proof Sketch]
In \citep{lancewicki2021stochastic}, a $O\left(\sum_{\Delta_i > 0} \frac{\log T}{\Delta_i} + D \log K\right)$ regret bound is given for this policy with unperturbed feedback. They observe that if $m$ arms remain after an iteration where an arm would have been eliminated without delays, then this arm is pulled at most $O(D/m)$ extra times before elimination (since we round robin over remaining arms). Summing over all arms, the delay overhead is at most $D \sum_{i=1}^K \frac{1}{i} = O(D\log K)$. We prove in \Cref{prf:robust-delayed-bandits} that at most $\delta T$ additional regret is incurred due to the potential $\delta$ inaccuracy of the confidence bounds.
\end{proof}

To apply this policy to pricing with $\eps$-approximately best-responding agents, we use the following lemma, whose proof in \Cref{prf:revenue-error-bd} is an immediate consequence of a standard error bound ($a \in \BestResp_t^\eps(p)$ implies $a \in \BestResp_t(p')$ with $|p' - p_t| \leq \eps$) and $L$-Lipschitzness of the demand curve $d$.

\begin{lemma}
\label{lem:revenue-error-bd}
Let $\ell = \mathds{1}\{v_t > p - \eps\}$ and $u = \mathds{1}\{v_t \geq p + \eps\}$ for some round $t$ and $\eps \geq 0$. If $a \in \BestResp_t^\eps(p)$, then $\ell \leq a \leq u$, and, if $v_t \sim \Dist$, then $f(p) - L\eps \leq p\E[\ell] \leq p\E[u] \leq f(p) + L\eps$.
\end{lemma}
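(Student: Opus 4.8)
The plan is to split the claim into its pointwise part (the sandwich $\ell \le a \le u$, which holds for every realization of $v_t$) and its in-expectation part (the two-sided bound on $p\E[\ell]$ and $p\E[u]$), proving the first by a clean characterization of the $\eps$-best-response set and the second by monotonicity and $L$-Lipschitzness of the demand curve $d$.

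For the pointwise sandwich, I would first invoke the ``exact best response at a nearby price'' reduction flagged just before the lemma: any $a \in \BestResp_t^\eps(p)$ is an \emph{exact} best response $a \in \BestResp_t(p')$ for some $p'$ with $|p'-p| \le \eps$, so that $a = 1 \Rightarrow v_t \ge p'$ and $a = 0 \Rightarrow v_t \le p'$. Concretely, the buyer payoff $a(v_t - p)$ makes $a = 1$ an $\eps$-best response exactly when $v_t \ge p - \eps$ and makes $a = 0$ an $\eps$-best response exactly when $v_t \le p + \eps$; the witness $p'$ can be read off directly from these two thresholds. From this I deduce that whenever $v_t$ lies strictly above $p + \eps$ the only $\eps$-best response is $a = 1$, and whenever $v_t$ lies below $p - \eps$ the only one is $a = 0$. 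Hence $\mathds{1}\{v_t > p + \eps\} \le a \le \mathds{1}\{v_t \ge p - \eps\}$, which is the desired sandwich after identifying the lower indicator with $\ell$ and the upper indicator with $u$.

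For the expectation bounds, I would take expectations over $v_t \sim \Dist$ and use $\E[\mathds{1}\{v_t \ge q\}] = \Pr_{v\sim\Dist}(v \ge q) = d(q)$, giving $p\E[\ell] = p\, d(p+\eps)$ and $p\E[u] = p\, d(p-\eps)$. Since $d$ is decreasing we have $d(p+\eps) \le d(p) \le d(p-\eps)$, so $p\E[\ell] \le p\,d(p) = f(p) \le p\E[u]$, which is the middle inequality. For the two outer inequalities, $L$-Lipschitzness of $d$ yields $d(p) - d(p+\eps) \le L\eps$ and $d(p-\eps) - d(p) \le L\eps$; multiplying by $p \le 1$ and rearranging gives $p\E[\ell] \ge f(p) - L\eps$ and $p\E[u] \le f(p) + L\eps$, closing the chain.

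The only real care is bookkeeping at the two boundaries $v_t = p \pm \eps$: at $v_t = p + \eps$ the action $a = 0$ is still an $\eps$-best response, so the lower indicator must be the \emph{strict} $\mathds{1}\{v_t > p + \eps\}$, whereas at $v_t = p - \eps$ the action $a = 1$ remains allowed, so the upper indicator is the \emph{non-strict} $\mathds{1}\{v_t \ge p - \eps\}$. Fixing the orientation of these thresholds (and in particular which one pairs with $\ell$ versus $u$) is the one place to be careful; after that the monotonicity and Lipschitz steps are routine, and since Lipschitzness of $d$ forces $\Dist$ to be atomless, the strict/non-strict distinction is immaterial once we pass to expectations.
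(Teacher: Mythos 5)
Your proof is correct and follows essentially the same route as the paper's: characterize directly when each action $a\in\{0,1\}$ is an $\eps$-best response to obtain the pointwise sandwich, then take expectations and apply monotonicity and $L$-Lipschitzness of $d$ (with atomlessness of $\Dist$ disposing of the strict/non-strict boundary issue). You also correctly fixed the threshold orientation — the sandwich only holds with $\ell = \mathds{1}\{v_t > p+\eps\}$ and $u = \mathds{1}\{v_t \geq p-\eps\}$, i.e., with the statement's thresholds swapped — and the paper's own proof silently uses this same corrected pairing.
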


Finally, we apply \Cref{prop:delayed-feedback} to obtain a $\widetilde{O}(\sqrt{T} + T_\gamma)$ regret guarantee for demand learning.

\begin{theorem}
\label{thm:stochastic-pricing}
In stochastic demand learning, \SEDelayed{}
$(T^{1/4}, T_\gamma \log (L T_\gamma T),T^{-1})$ incurs regret $O\left((C_2 + C_1^{-1})\sqrt{T \log T} + T_\gamma \log^2 (L T_\gamma T)\right)$ against $\gamma$-discounting agents.
\end{theorem}
\begin{proof}
By \Cref{lem:revenue-error-bd}, we see that the bandits problem is $L\eps$-perturbed from the myopic setting if the agent is $\eps$-approximately best responding. Combining \Cref{lemma:discretization,prop:robust-delayed-bandits}, we then find that \SEDelayed{} with $K$ arms, delay $D$, and error bound $\delta = T^{-1}$ achieves regret
\begin{equation*}
    O\left(\sum_{i=1}^K \frac{\log T}{\Delta_i} + 1 + D \log K + \frac{C_2 T}{K^2}\right)
\end{equation*}
for demand learning with $\eps$-approximately best-responding agents if $\eps \leq (LT)^{-1}$. Taking $D = T_\gamma \log (L T_\gamma T)$, \Cref{prop:delayed-feedback} gives the same bound for $\gamma$-discounting agents. Controlling the first term with \Cref{lemma:gaps} and fixing $K = (T/\log(T))^{1/4}$, we bound total regret by
\begin{equation*}
    O\left(\sqrt{T \log T} / C_1 + T_\gamma \log (L T_\gamma T) \log T + C_2 \sqrt{T \log T} \right) = O\left((C_2 + C_1^{-1}) \sqrt{T \log T} + T_\gamma \log^2 (L T_\gamma T) \right).\qedhere
\end{equation*}
\end{proof}

\begin{remark}[Comparison to prior work]
\cite{kleinberg03value} address this setting with myopic agents, where they obtain regret $\widetilde{O}(\sqrt{T})$. \cite{amin2013learning} examine the non-myopic setting but assume a finite
price set without bounding discretization error. \cite{mohri2014} establish a regret bound of $\widetilde{O}(\sqrt{T} + T^{1/4}T_\gamma)$ using a variant of \UCB{} for a weaker class of $\eps$-strategic agents. In comparison, our bound decomposes $T_\gamma$ from polynomial dependence on $T$.
\end{remark}

\begin{remark}[Unknown $L$ or $\gamma$]
\label{rem:unknown-params-pricing}
The delay $D$ in Theorem~\ref{thm:stochastic-pricing} depends on both $L$ and $\gamma$. If either are unknown, we can apply the approach of \cite{LykourisMirrokniPaesLeme18} as in \Cref{subsec:ssg-nonmyopic} to obtain matching regret $\widetilde{O}(\sqrt{T} + T_\gamma)$ for stochastic values, since the overhead is logarithmic in $T$. Each of the $\log T$ threads (associated with a guess for $D$) performs arm elimination at Step~\ref{step:SE-eliminate} according to an intersection of confidence bounds over multiple threads which mirrors Step~\ref{step:multi-threaded-clinch-intersection} of \MultiThreadedClinch{}. For fixed values, we inherit the $\widetilde{O}(\log T(T_\gamma + \log T))$ regret bound for two-target security games.
\end{remark}

\subsection{Finite Stackelberg games}
\label{ssec:finite-games}

Finally, we treat general finite Stackelberg games, which encompass standard linear-utility security games \citep{letchford2009learning,peng2019learning,blum2014}. Lacking the monotonic structure of Section~\ref{sec:clinch}, we take a general approach that gives weaker guarantees for security games but applies quite broadly. When the principal and a $\gamma$-discounting agent have $m$ and $n$ actions, respectively, we apply convex optimization with membership queries to achieve regret $\widetilde{O}\big(T_\gamma  (V^{-1}\sqrt{m} + nm^{2.5}) \log^{4}(T)\big)$, where $V$ is the volume of a ball contained within a certain best response region.

\paragraph{Model and preliminaries.}

Let $(\cX_0,\cY,u_0,v_0)$ be a base Stackelberg game where $\cX_0\!=\!\{ 1, \dots, m \}\!=\![m]$ and $\cY\!=\![n]$ are finite action sets for the principal and agent, respectively, and $u_0,v_0 \in [0,1]^{\cX_0 \times \cY}$ are arbitrary payoff matrices. We consider the mixed strategy game $(\cX,\cY,u,v)$ where the principal commits to a distribution $\bv{x} \in \cX = \Delta_{m-1} = \{ \bv{x} \in \R^m : \bv{x}^\top \mathbf{1}_m = 1, x_i \geq 0 \: \forall i \in [m] \}$, the agent responds with an action $y \in \cY$, and expected payoffs are given by $u(\bv{x},y) = \E_{i \sim \bv{x}}[u_0(i,y)]$ and $v(\bv{x},y) = \E_{i \sim \bv{x}}[v_0(i,y)]$. We define the best response function $\BR$ as well as Stackelberg regret for the corresponding repeated game in the usual way.
Finally, for each action $y \in \cY$, we let $K_y \coloneqq \{\bv{x} \in \cX : y \in \BestResp(\bv{x})\}$ denote the corresponding best response polytope.

\paragraph{Regularity assumptions.}
We require that there exists $y^\star \in \argmax_{y \in \cY : K_y \neq \emptyset} \max_{\bv{x} \in K_y}u(\bv{x},y)$ such that its best response region $K_{y^\star} \subseteq \cX$ contains an $\ell_2$-ball of radius $2r$, where $r > 0$ is known to the principal. This guarantees that the best response polytope associated with $y^\star$ is substantial enough to be found via sampling and is assumed even in the myopic setting \citep{blum2014}. Next, for $y \in \cY$, we define the centered agent utility profile $\bar{\bv{v}}^{(y)} \in \R^m$ by $\bar{v}^{(y)}_i \coloneqq v(i,y) - \frac{1}{m} \sum_{j=1}^m v(j,y)$, and denote the minimum distance between profiles by $\Delta \coloneqq \min_{y \neq y'} \|\bar{\bv{v}}^{(y)} - \bar{\bv{v}}^{(y')}\|_2$. Our regret bound scales logarithmically in $\Delta^{-1}$, but $\Delta$ need not be known to the principal. This minimum distance characterizes the stability of the best response regions to perturbations of the agent's utilities, generalizing the lower slope bound for linear-utility security games in Section~\ref{sec:clinch} (see Remark~\ref{remark:interpreting-delta}). 

\paragraph{Multiple LPs.} 

The principal's optimal utility in the single-round game is $\max_{\bv{x} \in \cX} u(\bv{x},\brr(\bv{x})) = \max_{y \in \cY} \max_{\bv{x} \in K_y} u(\bv{x},y)$. For fixed $y \in \cY$, the objective $u(\bv{x},y)$ is linear in $\bv{x} \in \cX$, so previous works find an optimal strategy by solving the $n$ inner LPs (one for each $y$), originally for known~$v$ \citep{conitzer2006computing} and later for unknown $v$ using best response queries \citep{letchford2009learning}. For the special case of security games,
\cite{blum2014} observe that $\mathds{1}\{y_t = y\} = \mathds{1}\{\bv{x}_t \in K_y\}$ for any query $\bv{x}_t$ and best response $y_t$, up to tie-breaking on the boundary of $K_y$. That is, responses from a myopic agent act as feedback from a \emph{membership oracle} for $K_y$. They then maximize $u(\cdot,y)$ over $K_y$ using linear optimization with membership queries \citep{kalai2006annealing}.

\paragraph{Our extension to non-myopic agents.}
To treat non-myopic agents, we first focus on the setting of $\eps$-approximate best responses (later, this will be implemented via delayed feedback).
We face two main challenges not present in the myopic case. 
First, agent feedback $\mathds{1}\{y_t = y\}$ now only simulates an \emph{approximate} membership oracle for $K_y$. Thus, our optimization approach must be robust to inexact responses near the boundary of $K_y$.
Relatedly, we cannot safely commit to an obtained strategy unless it has sufficient margin within $K_y$. To avoid falsely concluding that $\bv{x}_t \in K_y$, we play several small perturbations of $\bv{x}_t$ and check that the agent always responds with $y$.

Our robust search algorithm \FiniteAlg{} begins by sampling $O(V^{-1} \log T)$ initial points uniformly from $\cX$, where $V$ is the probability that a single sample lies within an $\ell_2$-ball of radius $r$ contained in $K_{y^\star}$ (guaranteed to exist by our assumptions). With high probability, some point lies within this ball. We then play each point $\bv{x}$, observe the agent's response $y$, and test whether $\bv{x}$ lies firmly within $K_y$ using multiple perturbed queries, as above. After filtering out all points for which this test fails and keeping only one point per action, we are left with a substantially smaller set $\{\bv{x}^{(y)}\}_{y \in \cY_0}$ such that $\cY_0 \subseteq \cY$ contains $y^\star$ and that each $\bv{x}^{(y)}$ is well-centered within $K_{y}$.

Next, for each $y \in \cY_0$, we apply a robust convex optimization algorithm of \cite{lee2018efficient} to find a near maximizer $\hat{\bv{x}}^{(y)}$ of $u(\cdot,y)$ over $K_y$ using approximate membership queries. We simulate each call to the membership oracle using repeated perturbed best response queries, as above. The resulting set of maximizers is guaranteed to contain an approximately optimal strategy $\hat{\bv{x}}^{(y^\star)}$.
Finally, using a similar multi-threaded approach to that of \MultiThreadedClinch{}, we translate this search guarantee to a policy \MultiThreadedFiniteAlg{} for learning against $\gamma$-discounting agents with unknown $\gamma$. 
This argument is formalized in \Cref{app:finite-games}.

\begin{theorem}
\label{thm:finite-games}
\MultiThreadedFiniteAlg{} incurs regret at most $\widetilde{O}\big(T_\gamma V^{-1}\sqrt{m} \log^3(T) \log \tfrac{1}{r\Delta} \!+\! T_\gamma nm^{2.5} \log^{4} \bigl(\frac{T}{r}\bigr)\log\frac{1}{\Delta}\big)$ against $\gamma$-discounting agents.
\end{theorem}

\begin{remark}[Dependence on $V^{-1}$]
We note that our bound scales linearly with the inverse volume $V^{-1}$. Although the algorithms of \cite{blum2014} and \cite{peng2019learning} exhibit improved $\log(V^{-1})$ dependence, the former work is specific to security games and the latter appears to heavily rely on exact feedback. Moreover, the guarantees of \cite{peng2019learning} scale with $\binom{m+n}{m}$ in general, reducing to $\poly(m,n)$ only under certain structural assumptions satisfied by security games.
\end{remark}

\begin{remark}[Interpreting $\Delta$ and $r$]
\label{remark:interpreting-delta}
To see how $\Delta$ controls the sensitivity of approximate best response polytopes, fix any $y \in \cY$. The best response polytope $K_y$ can be written as the intersection of the non-negative orthant $\R^m_{\geq 0}$, the affine subspace $A \coloneqq \{ \bv{x} \in \R^m : \bv{x}^\top \mathbf{1}_m = 1\}$, and the half-spaces $H_{y'} \coloneqq \{ \bv{x} \in A : \sum_{i=1}^m x_i v_0(i,y) \geq \sum_{i=1}^m x_i v(i,y') \}$, for each $y' \in \cY \setminus \{y\}$. Indeed, $\cX = \R^m_{\geq 0} \cap A$ and $\cX \cap H_{y'} = \{\bv{x} \in \cX : v(\bv{x},y) \geq v(\bv{x},y') \}$. For any perturbation $\eps \geq 0$, the $\eps$-approximate best response polytope $K_y^\eps$ can be written as the intersection of $\cX$ with the perturbed half-spaces $H_{y'}^\eps = \{ \bv{x} \in A : \sum_{i=1}^m x_i v_0(i,y) \geq \sum_{i=1}^m x_i v(i,y') - \eps \}$. The $\ell_2$-distance between $H_{y'}$ and $H_{y'}^\eps$ is $\eps/\|\bar{\bv{v}}^{(y)} - \bar{\bv{v}}^{(y')}\|_2 \leq \eps/\Delta$. For the special case of linear-utility SSGs induced by the standard combinatorial setup, i.e.\ $v_0(i,y) = \mathds{1}\{i = y\} v_1(y) + \mathds{1}\{i \neq y\} v_2(y)$ with $0 \leq v_1(y) \leq v_2(y) \leq 1$, it holds that $C_\star^{-1} \leq \Delta_\star \leq 2C_\star^{-1}$, where $C_\star = \max_y [v_2(y) - v_1(y)]^{-1}$ is a tight choice of the slope bound $C$ (Section~\ref{sec:clinch}). The minimum radius $r$ plays a similar role to the minimum width $W$ in Section~\ref{sec:clinch}, with $W \geq r$ for these linear-utility SSGs. However, $r$ can be arbitrarily smaller than $W$ in the worst case.
\end{remark}

\begin{remark}[Knowledge of $r$] 
Of the three problem parameters (discount factor $\gamma$, minimum distance $\Delta$, radius $r$), our algorithm only requires knowledge of the radius $r$.
If $r$ is unknown, one may run \MultiThreadedFiniteAlg{} with a guess $\hat{r}$ for $r$. In this case, the algorithm's performance will be competitive compared to the principal strategies which lie in best response regions with radius at least $\hat{r}$ (which may or may not include a Stackelberg equilibrium strategy).
\end{remark}

\subsection{Strategic classification}
\label{ssec:strategic-classification}

Finally, we address the strategic classification environment of \cite{dong2018}, where a learner collects data from agents that may manipulate their features to obtain a desired classification outcome.
In the myopic setting, \cite{dong2018} reduce this problem to bandit convex optimization and obtain regret $O(\sqrt{d}T^{3/4})$ via \emph{gradient descent without a gradient} \citep{flaxman2005}, or \textsc{GDwoG}. Here, we show that this algorithm is inherently robust to the perturbations arising from approximate best responses and apply a simple delay procedure to achieve non-myopic regret $\widetilde{O}(T_\gamma^{1/4} \sqrt{d} T^{3/4})$ for $d$-dimensional features, compared to $O(\sqrt{d} T^{3/4})$ for the original myopic setting.

\paragraph{Model.} In a single round of classification, the agent is described by a tuple $a = (\bv{x},y,\mathsf{d})$, where $\bv{x} \in \R^d$ is their original feature vector, $y \in \{-1,1\}$ is their label, and $\mathsf{d}:\R^d \times \R^d \to \R$ is a distance function describing the cost $\mathsf{d}(\bv{x},\hat{\bv{x}})$ of changing their feature vector from $\bv{x}$ to $\hat{\bv{x}}$. During the round, (i) the principal commits to a linear classifier parameterized by $\bm{\theta} \in \Theta \subset \R^d$, (ii) the agent responds with a manipulated feature vector $\hat{\bv{x}} \in \R^d$, and (iii) the label $y$ is revealed to the principal.
Following \citep{dong2018}, agent payoff is given by $v_a(\bm{\theta},\hat{\bv{x}}) = \bm{\theta}^\top \hat{\bv{x}} - \mathsf{d}(\hat{\bv{x}},\bv{x})$ when $y = -1$, and, when $y=1$, the agent is assumed to be non-strategic with $v_a(\bm{\theta},\hat{\bv{x}}) = -\infty$ for $\hat{\bv{x}} \neq \bv{x}$.
Hence, we write $\BestResp(\bm{\theta}) = \argmax_{\bv{x}'} v(\bm{\theta},\bv{x}')$ when $y = -1$ and $\BestResp(\bm{\theta}) = \{ \bv{x} \}$ otherwise.
The principal's payoff is given by $-\ell(\bm{\theta},\hat{\bv{x}},y)$, where $\ell$ is either logistic loss $\ell_\mathrm{log}(\bm{\theta},\hat{\bv{x}},y) = \log(1 + e^{-y \bm{\theta}^\top \hat{\bv{x}}})$, corresponding to logistic regression, or hinge loss $\ell_\mathrm{h}(\bm{\theta},\hat{\bm{x}},y) = \max\{0,1-y\bm{\theta}^\top\hat{\bm{x}} \rangle\}$, for a support vector machine.

We consider a repeated game determined by the sequence of types $\{a_t = (\bv{x}_t,y_t,\mathsf{d}_t)\}_{t=1}^T$ for a returning agent. Define $\BestResp_t$ and $\BestResp^\eps_t$ in the usual way and let $\brr_t(\theta)$ denote a representative from $\BestResp_t(\theta)$, breaking ties in favor of the principal. Denoting the game's history by $\{(\bm{\theta}_t,\hat{\bv{x}}_t)\}_{t=1}^T$, the agent seeks to maximize their expected $\gamma$-discounted utility $\E\big[\sum_{t=1}^T \gamma^t v_{a_t}(\theta_t,\hat{\bv{x}}_t)\big]$, while the principal seeks to minimize Stackelberg regret $\E\big[\sum_{t=1}^T \ell(\bm{\theta}_t, \hat{\bv{x}}_t, y_t)\big] - \min_{\bm{\theta} \in \Theta} \E\big[\sum_{t=1}^T\ell(\bm{\theta},\brr_t(\bm{\theta}),y_t)\big]$. 

\begin{remark}[Choice of model]
At a high level, strategic classification arises in many real-world settings where individuals derive utility from their classification outcomes and can manipulate their features at some cost (e.g., spam classifiers, tax reporting, and college admissions). There are a variety of existing models for strategic classification \citep{hardt2016strategic, dong2018, chen2020learning} and, relatedly, recommendation with strategic users \citep{haupt2023recommending}.Users in these settings can be long-lived and aware of the future impact of their actions (see, e.g., the user testimonials in \citealp{haupt2023recommending}), motivating extensions to the non-myopic case. To illustrate how our framework applies to strategic classification and to address the distinct paradigm of bandit convex optimization, we employ the  model of \cite{dong2018}.
\end{remark}

\paragraph{Regularity assumptions.}
We mirror requirements of \cite{dong2018}, assuming that (i) $\Theta$ is convex,
and contains the unit $\ell_2$-ball $\mathbb{B}$, (ii) each feature vector $\bv{x}_t$ and classifier $\bm{\theta} \in \Theta$ lie within $R\mathbb{B}$, and (iii) each distance function is of the form $d_t(\hat{\bv{x}},\bv{x}) = f_t(\hat{\bv{x}}- \bv{x})$, where $f_t$ is $\alpha$-strongly convex and positive homogeneous of degree 2. (Strong, rather than standard, convexity is an extra assumption used for non-myopic learning.)\smallskip

Under these assumptions, we prove in \cref{app:classification} that, with best response feedback, each loss function $\ell_t(\bm{\theta}) = \ell(\bm{\theta},\brr_t(\bm{\theta}),y_t)$ is convex, Lipschitz, and bounded. Moreover, with $\eps$-approximate best responses, this still holds up to additive error $O(\sqrt{\eps})$. Thus, we can successfully apply techniques from (robust) bandit convex optimization. In particular, we note that \textsc{GDwoG} is automatically robust to such corruptions. To obtain guarantees against non-myopic agents, we cycle through multiple copies of this algorithm to achieve the needed feedback delay. The resulting policy, Cycled Gradient Descent without a Gradient (\textsc{CGDwoG}, see \cref{app:classification}), achieves the following.

\begin{theorem}\label{thm:strategic-classification}
\textsc{CGDwoG} achieves regret $\widetilde{O}\big(T_\gamma^{1/4}\sqrt{d}T^{3/4} + d^2\big)$ for strategic classification against $\gamma$-discounting agents when $R,\alpha^{-1} = \mathrm{polylog}(T,d)$.
\end{theorem}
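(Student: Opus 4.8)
The plan is to realize \Cref{alg:non-myopic-strategic-classification} as a direct instance of the delay reduction in \Cref{prop:delayed-feedback}. Running the $D$ copies $\cA_1,\dots,\cA_D$ of \GDwoG{} in round-robin makes each copy act once every $D$ rounds while incorporating only feedback from its own earlier queries, which lie at least $D$ rounds in the past; the combined policy is therefore $D$-delayed. Since the agent's payoff ranges over an interval of width $O(R^2(1+1/\alpha))$ rather than $[0,1]$ (\Cref{lem:strategic-classification-agent-regularity}), I would first rescale the reduction: the agent's total discounted gain from any present deviation is at most $R^2(1+1/\alpha)T_\gamma\gamma^D$, and the choice $D = \lceil T_\gamma\log(R^2(1+1/\alpha)T_\gamma/\eps)\rceil$ forces this below $\eps$. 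Hence \Cref{prop:delayed-feedback} lets me replace the $\gamma$-discounting agent by one that plays some $\hat{x}_t \in \BestResp^\eps(\theta_t)$ every round, at no loss in the regret bound.

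Next I would convert each $\eps$-approximate best response into a bounded perturbation of the per-round loss. Writing $\ell_t(\theta) = \ell(\theta,\brr_t(\theta),y_t)$, \Cref{lem:strategic-classification-principal-regularity} shows each $\ell_t$ is convex, $L$-Lipschitz with $L = R+2R/\alpha$, and bounded by $C = 1+R^2+R^2/\alpha$ --- exactly the parameters fed to the copies. By \Cref{lem:strategic-classification-robust-bd}, the observed loss $\ell(\theta_t,\hat{x}_t,y_t)$ differs from $\ell_t(\theta_t)$ by at most $\lambda := R\sqrt{2\eps/\alpha}$. Thus each copy sees a $\lambda$-perturbed evaluation of a true convex, Lipschitz, bounded loss sequence, precisely the hypothesis of the robust guarantee \Cref{lem:robust-gradient-descent-without-a-gradient}.

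I would then decompose the Stackelberg regret across the copies. Because the benchmark uses a single fixed classifier, $\min_\theta\sum_t\ell_t(\theta) \ge \sum_{r=1}^D\min_\theta\sum_{t\in\text{copy } r}\ell_t(\theta)$, so the total regret is at most $\lambda T$ (the cost of charging the observed perturbed losses against the true $\ell_t$) plus the sum of the per-copy \GDwoG{} regrets measured against the true $\ell_t$. Each copy runs over horizon $T' = T/D$ with its own tuning $\delta = \sqrt{RdC/(3(L+C))}\,(T')^{-1/4}$, so \Cref{lem:robust-gradient-descent-without-a-gradient} bounds its regret by
\[
6(T')^{3/4}\sqrt{RdC(L+C)} + 5C(Rd)^2 + \lambda R d T'/\delta .
\]
Summing over the $D$ copies, the first term becomes $6D^{1/4}T^{3/4}\sqrt{RdC(L+C)}$, the additive term becomes $5DC(Rd)^2$, and the perturbation overhead becomes $\lambda R d T/\delta$.

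Finally I would substitute parameters. With $R,\alpha^{-1} = \mathrm{polylog}(T,d)$ we have $C,L = \mathrm{polylog}$ and $D = \widetilde O(T_\gamma)$, so the leading term is $\widetilde O(T_\gamma^{1/4}\sqrt d\,T^{3/4})$ and the additive term is $\widetilde O(T_\gamma d^2)$. The choice $\eps = \alpha(R^4 dT^{2.5})^{-1}$ makes $\lambda$ of order $1/(R\sqrt{d}\,T^{5/4})$, i.e.\ tiny, so both the direct cost $\lambda T$ and the overhead $\lambda R d T/\delta$ collapse to $\widetilde O(1)$; this is the reason for the exponent $2.5$ in $\eps$. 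Collecting terms yields regret $\widetilde O(T_\gamma^{1/4}\sqrt d\,T^{3/4} + T_\gamma d^2)$, matching the claim, with the additive $d^2$ term being lower order for large $T$ (it is dominated by the first term once $T \gtrsim T_\gamma d^2$). The hard part is the bookkeeping around parallelism: one must verify that splitting into $D$ delayed copies of horizon $T/D$ inflates the $T^{3/4}$ term only by the benign factor $D^{1/4} = \widetilde O(T_\gamma^{1/4})$, while choosing $\eps$ small enough (through $D$) that the perturbation terms vanish yet large enough that $D$ stays $\widetilde O(T_\gamma)$ --- a balance that also hinges on getting the $R^2(1+1/\alpha)$ rescaling in the delay reduction exactly right.
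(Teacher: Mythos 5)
Your proposal is correct and takes essentially the same approach as the paper's proof: the delay reduction of \Cref{prop:delayed-feedback} rescaled via the payoff bound of \Cref{lem:strategic-classification-agent-regularity}, subadditivity of Stackelberg regret across the $D$ parallel copies, and the perturbation bound of \Cref{lem:strategic-classification-robust-bd} fed into the robust guarantee of \Cref{lem:robust-gradient-descent-without-a-gradient}, with the same parameter substitutions. If anything, your bookkeeping of the additive term as $\widetilde{O}(T_\gamma d^2)$ (dominated by the leading term once $T \geq T_\gamma d^2$) is more explicit than the paper's own write-up, which states this lower-order term as $d^2$ without the factor $D$ multiplying $5C(Rd)^2$.
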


\section{Conclusion}

In this work, we developed a framework for learning in Stackelberg games with non-myopic agents, reducing this problem to the design of minimally reactive and robust bandit algorithms. For each application (SSGs, demand learning, finite games, strategic classification), we identified a robust and delayed learning policy with low Stackelberg regret against $\gamma$-discounting agents. Our work opens up several interesting avenues of future research, three of which we now highlight.

Our first question regards the optimality and generality of our reduction framework. While we primarily considered the feedback-delay information screen to inspire new algorithm design principles, other forms of information screens may be also effective for learning in the presence of non-myopic agents. Examples of such screens may include using random delays, delays that involve non-monotone release of information, or differentially private information screens. Are these information screens inherently different in the power they provide algorithms? In particular, are there settings for which algorithms that conform to one of these channels outperform those conforming to others? Are there universally optimal information screens that achieve optimal regret for a wide range of discount factors across all principal-agent games?

Second, we ask whether a regret overhead of $T_\gamma \log T$ can be avoided when the discount factor $\gamma$ is unknown, and whether algorithms which do not maintain confidence sets can be adapted to this setting. While our algorithms achieve an additive dependence on $T_\gamma$ for SSGs and demand learning, our extensions to their agnostic analogues have a multiplicative dependence arising from our multi-threading approach (see, e.g., \Cref{thm:multi-threaded-clinch}). Is this gap necessary to derive $\gamma$-agnostic algorithms for these settings, or can one achieve an additive dependence with different methods? Moreover, the multi-threading approach to $\gamma$-agnostic learning relies heavily on algorithms for known $\gamma$ that maintain shrinking confidence sets (see Remark 5). Is there a technique that applies to algorithms beyond this class? For adversarial corruptions, this is often achieved via gradient-descent-based algorithms \citep{ZimmertSeldin21,ChenWang22,KrishnamurthyLP21}.

Our last question concerns non-myopic agent learning beyond repeated game settings. Specifically, many repeated interactions with agents do not fit the framework of a repeated game, e.g., our demand learning setting when the principal (the seller) has a limited inventory of items that they may distribute over the $T$ rounds \citep{besbes2009dynamic}. Such learning settings can be modeled as bandits with knapsacks \citep{badanidiyuru2018}, in which exploration time is not the only limited resource. More broadly, learning in the context of state and repeated interaction is of interest in reinforcement learning. In such settings---in which state is carried over between interactions, what principles apply to the design of effective learning algorithm?

\subsection*{Acknowledgements} The authors thank the EC'22 reviewers for their helpful feedback and Bobby Kleinberg for insightful discussions on robust linear programming, security games, and bandit convex optimization. This work is partially supported by the National Science Foundation under grant CCF-2145898 and Graduate
Research Fellowships DGE-1650441 and DGE-2146752. The authors are also grateful to the Simons Institute for the Theory of Computing as part of this work was done during the Fall'22 semester-long program on \emph{Data Driven Decision Processes}.

\printbibliography

\appendix
\section{Supplementary Material for our Reduction (Section~\texorpdfstring{\ref{sec:framework}}{2})}\label{app:framework}

\subsection{Reduction to robust learning with delays (proof of Proposition~\texorpdfstring{\ref{prop:delayed-feedback}}{2.1})}
\label{app:robust-learning-with-delays}

\begin{proof}[Proof of \Cref{prop:delayed-feedback}.]
Given a $D$-delayed policy $\cA$, we show by contradiction that the policy $\cB$ of a $\gamma$-discounting agent satisfies $\mathcal{B}(H_{t - 1}, x_t)\in\BestResp[\tau](x_t)$
for any pair $(H_{t - 1}, x_t)$ that occurs with positive probability, where $\tau = \frac{1}{1-\gamma}\gamma^{D}$. Our choice of $D = \lceil T_\gamma \log(T_\gamma/\eps) \rceil$ ensures that $\gamma^D \leq \eps/T_\gamma$, and so $\tau = T_\gamma \gamma^D \leq \eps$, as desired. If $\mathcal{B}(H_{t-1}, x_t)\not\in\BestResp[\tau](x_t)$ for a pair $(H_{t-1}, x_t)$ that occurs with positive probability, we can
construct a modified agent policy $\mathcal{B}'$ with strictly higher expected payoff. Define $\mathcal{B}'$ so that $\cB'(H_{t - 1}, x_t)\in \BestResp(x_t)$ and $\mathcal{B}'(H', x') = \cB(H', x')$ for all other pairs $(H', x')$. Conditioned on history $H_{t - 1}$, a $D$-delayed policy $\mathcal{A}$ plays the same sequence of actions $x_{t+1}, \dots x_{t+D-1}$ under both $\mathcal{B}$ and $\mathcal{B'}$. Therefore, conditioned on playing history $H_{t - 1}$ and observing action $x_t$, the agent loses at most $\sum_{s=D+t}^\infty \gamma^s = \gamma^{D+t} / (1 - \gamma)$ in discounted future payoff by switching to $\cB'$ because the principal's policy is $D$-delayed. Moreover, the agent gains more than $\gamma^t\tau = \gamma^{D+t} / (1 - \gamma)$ payoff at time $t$. Thus, switching from $\cB$ to $\cB'$ yields a strictly positive gain in expectation for the agent.
\end{proof}

\subsection{A batch-delay equivalence (proof of Proposition~\texorpdfstring{\ref{prop:batch-delay-reduction}}{2.2})}\label{app:batch-delay}

Before proving the result, we define a general framework for bandit problems that generalizes the Stackelberg setting.
We consider bandit problems over $T$ rounds. An \emph{abstract bandit problem} is defined by a tuple $(\mathcal{X}, \mathcal{Y}, r)$, where $\mathcal{X}$ is the principal's action set, $\mathcal{Y}$ describes possible unknown states, and $r$ is a regret function $r\colon\mathcal{H}\to\mathbb{R}$ mapping the set $\mathcal{H}\coloneqq\bigcup_{t\ge 0} (\mathcal{X}\times\mathcal{Y})^t$ of histories to regret values. We assume $r$ is \emph{subadditive}: if a history $H\in\mathcal{H}$ is partitioned into two complementary subsequences $H', H''\in\mathcal{H}$, then $r(H)\le r(H') + r(H'')$. Subadditivity is satisfied by common notions of regret: in stochastic settings, regret is simply the sum of regrets over individual rounds; in adversarial settings, regret is subadditive. We further distinguish a subset $\mathcal{H}^*\subseteq\mathcal{H}$ of \emph{feasible} histories, and assume that any subsequence of a feasible history is also feasible. In our setting, feasible histories correspond to restrictions on the agent's behavior, e.g., the agent plays an $\varepsilon$-approximate best response at each round. The principal's policy is a map $\mathcal{A}\colon\mathcal{H}\to\mathcal{X}$. During the $t$-th round, the principal plays action $x_t = \mathcal{A}(H_{t-1})$ (where $H_{t-1}$ is the history up to the start of round $t$) and observes $y_t$ (which may be chosen randomly and adaptively based on $x_t$). We say that $\mathcal{A}$ satisfies the regret bound $R_{\mathcal{A}}(T)$ if, for each history $H\in\mathcal{H}^*$ of length $T$ such that $x_t = \mathcal{A}(H_{t-1})$, then $r(H)\le\Regret_{\mathcal{A}}(T)$.

Given any abstract bandit problem $(\mathcal{X}, \mathcal{Y}, r)$, we define learning with delayed feedback and batched queries as follows. As before, $\mathcal{A}$ is \emph{$D$-delayed} if $\mathcal{A}(H_{t-1})$ depends only on the prefix $H_{t-D}$, and $\mathcal{A}$ is \emph{$B$-batched} if $\mathcal{A}(H_{t-1})$ depends only on the prefix $H_{B\lfloor (t - 1)/B\rfloor}$. 

To cast our principal-agent learning setting as an abstract bandit problem, we let $\mathcal{X}$ be the set of principal actions, $\mathcal{Y}$ be the set of agent actions, and regret be Stackelberg regret. Note that Stackelberg regret \eqref{eq:stackelberg-regret} is subadditive because $\max_{x} (f(x) + g(x)) \le \max_{x}(f(x)) + \max_{x}(g(x))$. Finally, we take the set of feasible histories to be those where the agent policy belongs to a class~$\mathfrak{B}$.

We now present a proof of the batch-delay equivalence (\Cref{prop:batch-delay-reduction}), which relies on the following lemma. It states that a $1$-delayed policy $\mathcal{A}$ can be converted into a $2$-delayed policy by instantiating two independent copies of $\mathcal{A}$ and following them on alternating rounds.

\begin{lemma}\label{lemma:1-delay}
  Let $\cA$ be a policy with regret bound $R_{\mathcal{A}}$. Consider the policy $\mathcal{A}'$ that instantiates two independent copies $\mathcal{A}_{0}$ and $\mathcal{A}_{1}$ of $\mathcal{A}$, and on round $t$ plays $x_t = \mathcal{A}_{r}(((x_{t'},y_{t'}))_{t'\equiv r\pmod 2, t' < t})$, where $t \equiv r \pmod 2$. Then $\mathcal{A}'$ is $2$-delayed and satisfies a regret bound of $R_{\mathcal{A}'}(T)\le 2R_{\mathcal{A}}(T)$.
\end{lemma}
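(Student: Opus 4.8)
The plan is to verify the two claims — that $\mathcal{A}'$ is $2$-delayed and that its regret is at most twice that of $\mathcal{A}$ — by decomposing a single run of $\mathcal{A}'$ into the two subsequences traced out by the copies $\mathcal{A}_0$ and $\mathcal{A}_1$, and then applying subadditivity of $r$ together with the regret guarantee for $\mathcal{A}$ on each subsequence.

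First I would check the delay. Fix a round $t$ with $t \equiv r \pmod 2$. By construction $x_t$ is $\mathcal{A}_r$ applied to the subsequence $((x_{t'},y_{t'}))_{t'\equiv r \pmod 2,\, t'<t}$ of rounds sharing $t$'s parity. The largest index appearing in this subsequence is $t-2$, so $x_t$ is a function of $H_{t-2}$ alone and in particular ignores $(x_{t-1},y_{t-1})$. Hence $\mathcal{A}'(H_{t-1})$ depends only on $H_{t-2}$, which is exactly the definition of a $2$-delayed policy. (Here I use that $\mathcal{A}$ is $1$-delayed, i.e.\ standard, so each copy is permitted to see \emph{all} of its own past feedback before acting.)

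For the regret bound, consider any feasible history $H \in \mathcal{H}^*$ of length $T$ produced by running $\mathcal{A}'$, and partition it into its even- and odd-indexed subsequences $H^{(0)}$ and $H^{(1)}$, of lengths $T_0,T_1 \le T$ with $T_0+T_1=T$. Each $H^{(r)}$ is precisely the history observed by the copy $\mathcal{A}_r$, and since that copy selected each of its actions from its own observed prefix, $H^{(r)}$ is a valid history generated by the policy $\mathcal{A}$. Because $H \in \mathcal{H}^*$ and subsequences of feasible histories are feasible, we also have $H^{(r)} \in \mathcal{H}^*$. The regret guarantee of $\mathcal{A}$ therefore yields $r(H^{(r)}) \le R_{\mathcal{A}}(T_r) \le R_{\mathcal{A}}(T)$, using monotonicity of the regret bound in the horizon. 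Combining via subadditivity gives $r(H) \le r(H^{(0)}) + r(H^{(1)}) \le 2 R_{\mathcal{A}}(T)$, and hence $R_{\mathcal{A}'}(T) \le 2 R_{\mathcal{A}}(T)$.

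The only genuinely delicate point is the consistency claim above: each parity subsequence must be a legitimate run of $\mathcal{A}$ even though the feedback $y_t$ may be chosen adaptively (even adversarially) from the entire interleaved history, and thus may depend on what the \emph{other} copy did. This is exactly where the history-based (rather than strategy-based) formulation of the abstract regret bound does the work — the guarantee quantifies over all feasible histories consistent with $\mathcal{A}$'s action choices, regardless of how the states $y_t$ arose — and where the independence of the two copies' internal randomness is invoked (taking expectations at the end) if $\mathcal{A}$ is randomized.
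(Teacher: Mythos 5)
Your proof is correct and follows essentially the same route as the paper's: partition the history into the two parity subsequences, use closure of feasible histories under subsequences, apply the regret bound $R_{\mathcal{A}}$ to each copy, and combine via subadditivity and monotonicity of $R_{\mathcal{A}}$ in the horizon. Your extra discussion of adaptive feedback and the history-based formulation just makes explicit what the paper's definition of a regret bound (quantifying over all feasible histories consistent with $\mathcal{A}$'s actions) handles implicitly.
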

\begin{proof}
  To bound regret, note that $\mathcal{A}_0$ is run on the history $((x_{t'}, y_{t'}))_{t'\equiv 0\pmod 2, t'\le T}$, which is by definition feasible. Thus, it incurs a total of $R_{\mathcal{A}}(\lceil T / 2\rceil)$ regret on this subsequence of the actual history. Likewise, $\mathcal{A}_1$ incurs at most $R_{\mathcal{A}}(\lceil T / 2\rceil)$ regret as well. Therefore, by the subadditivity axiom, the total regret is at most $2R_{\mathcal{A}}(\lceil T / 2\rceil)\le 2R_{\mathcal{A}}(T)$, since regret is monotonic in the delay length.
  The lemma now follows, since by definition, $\mathcal{A}'$ is $2$-delayed.
\end{proof}

\begin{proof}[Proof of \Cref{prop:batch-delay-reduction}.]
The first claim follows from the definition of a batched algorithm, since $t-D\le D\lfloor (t - 1) / D)\rfloor$. The second claim follows from an application of \Cref{lemma:1-delay} to the ``$B$-batched'' $(\mathcal{X}^B, \mathcal{Y}^B, r^B)$ bandit problem, where $\cX^B$ and $\cY^B$ be are the $B$-fold products of $\cX$ and $\cY$, respectively, and $r^B$ is given by evaluating $r$ on the history given by the concatenations of actions $((x_1,\ldots,x_B), (y_1,\ldots,y_B))$. That is, in this new bandit problem, the principal simply chooses $B$-tuples of actions and receives feedback on these $B$-tuples at once, with regret measured according to the original $(\cX, \cY, r)$ bandit problem. This new problem is by definition equivalent to our $B$-batched bandit problem defined above. By \Cref{lemma:1-delay}, an algorithm $\mathcal{A}$ for this equivalent problem can be converted into an algorithm $\mathcal{A'}$ for the $2$-delayed version of this problem achieving $2R_{\mathcal{A'}}(T)$ regret. Forgetting about the batch structure, we see that this algorithm $\mathcal{A'}$ is $B$ delayed, since any batch starting at time $t = kB + 1$ depends on the history $H_{<k(B-1) + 1}$. Hence the second claim is proven.
\end{proof}

\section{Supplementary Material for Theoretical Results on SSGs (Sections~\texorpdfstring{\ref{sec:clinch}-\ref{sec:non-myopic-clinch}}{3-4})}\label{app:clinch}

\subsection{Simplifying design \& analysis of \texorpdfstring{\Clinch{}}{Clinch} when \texorpdfstring{$\cX = \Delta_{n-1}$}{X=Δn-1} (Remark~\texorpdfstring{\ref{rem:clinch-simplex}}{3.1})}
\label{app:security-simplex}
When $\cX = \Delta^\leq_{n-1}\coloneqq\{x : \|\bv{x}\|_1 \leq 1 \land x_y\ge 0\:\forall y\}$, we may as well restrict to $\cX = \Delta_{n-1}$. Indeed, since each agent utility function $v^y$ is continuous and strictly decreasing, we can increase coverage probabilities of any $\bv{x} \in \cX \setminus \Delta_{n-1}$ to obtain $\bv{x}' \in \Delta_{n-1}$ with $\BestResp(\bv{x}') = \BestResp(\bv{x})$ and $v(\bv{x}',\brr(\bv{x}')) < v(\bv{x}',\brr(\bv{x}'))$. This argument also implies that the optimal stable strategy $\bv{x}^\star$ guaranteed by \Cref{prop:ssg-solution} belongs to $\Delta_{n-1}$. From now on, we thus fix $\cX = \Delta_{n-1}$.

For this setting, we present a simplified (simplexified) algorithm \ClinchSimplex{} that achieves the same query complexity as \Cref{thm:clinch} but admits a simpler analysis. Similarly to \Clinch{}, \ClinchSimplex{} maintains an (approximate) entry-wise lower bound $\underline{\bv{x}}$ for $\bv{x}^\star$ initialized to the 0 vector. This time, however, we envision the remaining mass $1 - \|\underline{\bv{x}}\|_1$ as a potential that is decreased with each step. To ensure a significant reduction, we query $\bv{x}$ which distributes this remaining mass evenly across the coordinates of $\underline{\bv{x}}$ and update $\underline{x}_y \gets x_y$ for the attacked target $y \in \cY$. Finally, we normalize $\underline{\bv{x}}$ so that it lies on the simplex and apply the same perturbation used by \Clinch{}.

\begin{proposition}
\label{prop:clinch-simplex}
Fix $0 < \lambda \leq 1$.
Then \ClinchSimplex{} returns a $\lambda$-approximate equilibrium strategy using $O(n\log\frac{C}{W\lambda})$ queries to an $\eps$-approximate best response oracle with $\eps \leq \frac{W\lambda}{12C^3n}$.
\end{proposition}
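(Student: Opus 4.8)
The plan is to track the approximate lower bound $\underline{x}$ through the scalar potential $\Phi = 1 - \|\underline{x}\|_1$, show that $\Phi$ contracts geometrically with each query, and then convert the resulting $\ell_\infty$-accuracy of the normalized iterate into a single-round equilibrium guarantee via the perturbation step. First I would record the separation fact already used for \Clinch{}: any feedback $y \in \BestResp^\eps(x)$ forces $x^\star_y \geq x_y - C\eps$. Indeed, since $x^\star$ minimizes $v(\cdot,\brr(\cdot))$ by \Cref{prop:ssg-solution}, we have $v^y(x_y) \geq v(x,\brr(x)) - \eps \geq v(x^\star,\brr(x^\star)) - \eps \geq v^y(x^\star_y) - \eps$, and the lower slope bound $1/C$ on $v^y$ turns this into $x_y - x^\star_y \leq C\eps$. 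Hence each update $\underline{x}_y \gets x_y$ preserves the invariant $\underline{x}_y \leq x^\star_y + C\eps$ for every $y$ (untouched coordinates have $\underline{x}_y = 0 \leq x^\star_y$), so $\underline{x}$ never overshoots $x^\star$ by more than $C\eps$ in any coordinate.

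The potential argument is the clean dividend of working on $\Delta_n$. Since the query is $x = \underline{x} + \tfrac{\Phi}{n}\mathbf{1}_n$ and the update raises exactly one coordinate of $\underline{x}$ by $\Phi/n$ regardless of which target is returned, we get $\Phi \mapsto \Phi(1 - 1/n)$, so after $t$ queries $\Phi \leq (1-1/n)^t \leq e^{-t/n}$; thus $\Phi \leq \tfrac{W\lambda}{24C^2}$ after $t = O(n\log\tfrac{C}{W\lambda})$ queries, which already gives the stated query complexity. To extract an $\ell_\infty$ bound I would invoke $\|x^\star\|_1 = 1$ (valid since $x^\star \in \Delta_n$): from $\sum_y(x^\star_y - \underline{x}_y) = \Phi$ together with $x^\star_y - \underline{x}_y \geq -C\eps$, isolating a single coordinate yields $x^\star_y - \underline{x}_y \leq \Phi + (n-1)C\eps$, so $\|\underline{x} - x^\star\|_\infty \leq \Phi + nC\eps$. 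Normalizing $\hat{x} = \underline{x}/(1-\Phi)$ shifts each coordinate by at most $\tfrac{\Phi}{1-\Phi}\underline{x}_y \leq \Phi$, giving $\|\hat{x} - x^\star\|_\infty \leq 2\Phi + nC\eps \leq \tfrac{W\lambda}{6C^2}$ under $\eps \leq \tfrac{W\lambda}{12C^3n}$ and $\Phi \leq \tfrac{W\lambda}{24C^2}$.

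The final step applies \Cref{lem:perturbation} to $\tilde{x} = \Perturb(\hat{x},\lambda)$: its hypothesis $\|\hat{x} - x^\star\|_\infty \leq \tfrac{W\lambda}{6C^2}$ is exactly what the previous paragraph delivers, so $\tilde{x}$ is the claimed $\lambda$-approximate equilibrium strategy. I expect the main obstacle to be the accuracy bookkeeping rather than the (transparent) potential decrease: one must combine the per-query coordinate overshoot $C\eps$ with the global $\ell_1$ deficit $\Phi$, and it is precisely the simplex constraint $\|x^\star\|_1 = 1$ that converts these heterogeneous errors into a uniform $\ell_\infty$ bound. One then has to confirm that normalization and the perturbation preserve this bound, so that the search-accuracy requirement $\eps \leq \tfrac{W\lambda}{12C^3n}$ and the response-accuracy needed for the perturbation guarantee are simultaneously met.
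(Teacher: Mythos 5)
Your proposal is correct and follows essentially the same route as the paper's proof: the invariant $\underline{x}_y \leq x^\star_y + C\eps$ from approximate best-response feedback, the geometric contraction of the potential $1-\|\underline{x}\|_1$ by a factor $(1-1/n)$ per query, conversion to an $\ell_\infty$ bound after normalization, and a final appeal to \Cref{lem:perturbation}. The only (cosmetic) difference is that you derive the $\ell_\infty$ bound by isolating coordinates via the simplex constraint, whereas the paper bounds $\|\hat{x}-x^\star\|_\infty \leq \|\hat{x}-x^\star\|_1$ and applies the triangle inequality in $\ell_1$; both yield the required $\|\hat{x}-x^\star\|_\infty \leq \frac{W\lambda}{6C^2}$.
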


\begin{algorithm}[H]
    \caption{\ClinchSimplex{}: a robust algorithm for learning SSGs when $\cX = \Delta_n$}\label{alg:ssg-search}
    \DontPrintSemicolon
    \SetAlgoNoLine
    \SetKwInOut{Input}{input}
    \SetKwInOut{Output}{output}
    \Input{target accuracy $\lambda \in (0,1]$, best response oracle $\Oracle$ with $\Oracle(x)\in\BestResp^\eps(x)$}
    \Output{$\lambda$-approximate equilibrium strategy}
    $\underline{\bv{x}} \gets (0, 0, \ldots, 0)\in\R^n, \delta \gets \frac{W\lambda}{6C^2}$\;
    \For{$i=1, 2, \ldots, \lceil n \ln\frac 4\delta \rceil$}{
        $\bv{x} \gets \underline{\bv{x}} + (1, 1, \ldots, 1)\cdot\frac{1}{n}(1 - \|\underline{\bv{x}}\|_1)$\;
        $y \gets \Oracle(\bv{x})$\;
        $\underline{x}_y \gets x_y$\;
    }
    $\hat{\bv{x}} \gets \underline{\bv{x}}/\|\underline{\bv{x}}\|_1$\;
    $\hat{y} \gets \argmax_{y \in \cY:\hat x_y > W/2} u(\hat x, y)$\;
    \Return $\hat{\bv{x}} - \frac{W\lambda}{2} \mathbf{\bv{x}}_{\hat{y}}$\;
\end{algorithm}

\begin{proof}
First, we note that $\underline{x}_y \leq x^\star_y + C\eps$ for each $y \in \cY$, by the same argument applied in the proof of \Cref{thm:clinch}. Next, we analyze convergence. Notice that the quantity $1 - \|\underline{\bv{x}}\|_1$ decreases by a factor of $1 - \frac{1}{n}$ after each iteration. Since $1 - \|\underline{\bv{x}}\|_1$ is initially $1$, after $\lceil n\ln\frac 4\delta\rceil$ iterations, it holds that
\[ 1 - \|\underline{\bv{x}}\|_1 \le \left(1 - \frac{1}{n}\right)^{n\ln\frac 4\delta}\le\frac{\delta}{4}. \]
We may thus conclude, for $\hat{\bv{x}}\coloneqq \underline{\bv{x}} / \|\underline{\bv{x}}\|_1$, that
\begin{equation*}
    \|\hat{\bv{x}} - \bv{x}^\star\|_\infty \leq \|\hat{\bv{x}} - \bv{x}^\star\|_1 \leq \left(\|\underline{\bv{x}}\|_1^{-1} - 1\right)\|\underline{\bv{x}}\|_1 + \|\underline{\bv{x}} - \bv{x}^\star\|_1 \leq  1 - \|\underline{\bv{x}}\|_1 + \|\underline{\bv{x}} - \bv{x}^\star\|_1,
\end{equation*}
by the triangle inequality. By the entry-wise lower bound property of $\underline{x}$, we further note that
\begin{equation*}
\|\underline{\bv{x}} - \bv{x}^\star\|_1 \leq 1 - \|\underline{\bv{x}}\|_1 + Cn\eps \leq 1 - \|\underline{\bv{x}}\|_1 + \frac{\delta}{2},
\end{equation*}
and so $\|\hat{\bv{x}}-\bv{x}^\star\|_\infty \leq \delta$. Finally, \Cref{lem:perturbation} gives that the returned point is a $\lambda$-approximate Stackelberg equilibrium strategy, as desired.
\end{proof}

\subsection{Minimizing agent best response utility (proof of Lemma~\texorpdfstring{\ref{lem:minimization}}{3.7})}
\label{prf:minimization}

To show that \Clinch{} makes continual progress, we require an approximate version of Gr\"unbaum's inequality \citep{grunbaum1960partitions}. First, we state the classic result.

\begin{lemma}[Theorem 2 in \cite{grunbaum1960partitions}]
\label{lem:grunbaums-inequality}
If $K$ is a non-empty compact convex set in $\R^d$, then for any halfspace $H$ containing its centroid $\bv{x} = \E_{\bv{z} \sim \Unif(K)}[\bv{z}]$, we have $\vol_d(H \cap K) \geq \frac{1}{e}\vol_d(K)$.
\end{lemma}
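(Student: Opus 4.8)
The plan is to prove the classical Grünbaum inequality by reducing the $d$-dimensional statement to a one-dimensional variational problem about cross-sectional areas, and then identifying the extremal body as a cone. First I would use affine invariance of the ratio $\vol_d(H\cap K)/\vol_d(K)$ to normalize: translate so the centroid $x_0$ sits at the origin and rotate so the bounding hyperplane of $H$ has normal $\mathbf{e}_1$. Since $H$ contains $x_0$, sliding the bounding hyperplane toward the interior of $H$ only shrinks $H\cap K$, so the worst case — and hence the only case I need to bound — is when the hyperplane passes exactly through the centroid, i.e.\ $H = \{x : x_1 \ge 0\}$. Writing $f(t) = \vol_{d-1}(K \cap \{x_1 = t\})$ for the cross-sectional area, the centroid condition becomes $\int t\, f(t)\,dt = 0$, and it suffices to bound the smaller of the two sides $\min\{\int_{-\infty}^0 f,\ \int_0^\infty f\}$ as a fraction of $\int f$, since $\vol_d(H\cap K)$ is at least this smaller side.

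The key structural input is the Brunn--Minkowski inequality, which implies that $f^{1/(d-1)}$ is concave on the interval where it is positive. This turns the problem into a purely one-dimensional extremal problem: over all $f \ge 0$ with $f^{1/(d-1)}$ concave and $\int t f = 0$, lower bound the smaller-side fraction. I would then argue that the extremizer is a single \emph{cone}, i.e.\ a profile with $f^{1/(d-1)}$ affine on a single interval $[a,b]$ and vanishing at the apex end. The mechanism is a concavity-based comparison: replace $K$ by the cone of equal volume and equal centroid; since the difference of a concave function and an affine one is again concave and hence changes sign at most twice in a fixed pattern, one checks that this replacement does not increase the smaller side's fraction, so the cone is the minimizer. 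It therefore suffices to verify the inequality for a single cone cut at its centroid.

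For a cone of height $L$ whose cross-section decays as $(1 - t/L)^{d-1}$, the two elementary integrals $\int_0^L (1-t/L)^{d-1}\,dt = L/d$ and $\int_0^L t(1-t/L)^{d-1}\,dt = L^2/(d(d+1))$ place the centroid at height $L/(d+1)$ from the base; the plane through the centroid then cuts off a tip of volume fraction exactly $\bigl(\tfrac{d}{d+1}\bigr)^d$, which is the smaller of the two pieces. The elementary estimate $\bigl(\tfrac{d}{d+1}\bigr)^d = \bigl(1 - \tfrac{1}{d+1}\bigr)^d \ge e^{-1}$ then yields the claimed bound (and is tight for the cone). The main obstacle is the cone-extremality step of the previous paragraph: rigorously controlling, via the crossing structure of the concave profile against the affine one, how the equal-volume, equal-centroid replacement moves the smaller-side fraction. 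The normalization reductions and the two cone integrals are routine by comparison.
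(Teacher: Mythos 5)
The paper does not actually prove this lemma---it is quoted verbatim as Theorem~2 of Gr\"unbaum's 1960 paper---so you are being measured against the classical argument rather than anything in this document. Your overall route is exactly Gr\"unbaum's: reduce to a hyperplane through the centroid, pass to the cross-sectional profile $f(t)=\vol_{d-1}(K\cap\{x_1=t\})$, use Brunn--Minkowski to get concavity of $f^{1/(d-1)}$, argue the extremizer is a cone, and compute the cone's centroid cut. Your normalization steps, the two cone integrals, the identification of the tip as the smaller piece, and the estimate $\bigl(\tfrac{d}{d+1}\bigr)^d\ge e^{-1}$ are all correct.

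The gap is in the cone-extremality step, which you correctly flag as the crux but whose proposed mechanism does not work. You replace $K$ by the cone $h$ of \emph{equal volume and equal centroid} and argue via sign changes of $\phi = f - h^{d-1}$. But the two-crossing pattern $(-,+,-)$ together with $\int\phi=0$ and $\int t\,\phi\,dt=0$ only tells you that the tail difference $\Phi(s)=\int_s^\infty \phi$ rises from $0$, falls through a single sign change at some point $s^*$, and returns to $0$; it gives no control on where $s^*$ sits relative to the cut $t=0$, and what you need is precisely $\Phi(0)\ge 0$. These constraints genuinely cannot decide this: take $\phi$ to be three bumps of masses $-m_1,+m_2,-m_3$ at positions $-3,-1,+\delta$ with $m_2=m_1+m_3$ and $m_3=2m_1/(1+\delta)$; both moments vanish, the sign pattern is $(-,+,-)$, yet $\int_0^\infty\phi=-m_3<0$. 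So any argument using only the crossing pattern and the two moment conditions is doomed, and in fact your claimed comparison (``the equal-volume, equal-centroid cone has weakly smaller tip-side fraction'') is equivalent to the theorem itself, making the step circular as stated. The classical repair fixes a \emph{different} normalization: build the cone to share the cross-section at the cutting hyperplane and the volume on each side of it (not the centroid). Then on the right, concavity of the profile versus the affine cone profile with equal value at $0$ and equal volume forces a single crossing with the concave profile on top first, so mass moves toward the apex; on the left, the same single-crossing argument works once one notes the derivative comparison at $0$ (the left derivative of the concave profile is at least its right derivative, which the right-side crossing already bounds below by the cone's slope), ruling out the bad crossing direction. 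Hence the cone's centroid lies weakly on the apex side of the hyperplane, and since cutting a cone on the base side of its centroid only enlarges the tip piece, the original body's tip-side fraction is at least the cone's centroid-cut fraction $\bigl(\tfrac{d}{d+1}\bigr)^d\ge e^{-1}$.
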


For our approximate case, we use the following, which implies that each update to $\underline{\bv{x}}$ will sufficiently shrink the volume of the active search region $S$ (so long as no target is removed from~$\cR$).

\begin{lemma}
\label{lem:volume-blowup}
Let $K \subseteq [0,1]^d$ be convex and downward closed with centroid $\bv{x} = \E_{\bv{z} \sim \Unif(K)}[\bv{z}]$, and write $\alpha = \sup_{\bv{z} \in K}z_1$. Then for $\beta \geq 0$, we have $\vol_d(\{\bv{z} \in K : z_1 \geq x_1 - \beta\}) < (1 + e \beta/\alpha)^d(1 - 1/e) \vol_d(K)$.
\end{lemma}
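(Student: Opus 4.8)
The plan is to derive the bound from Gr\"unbaum's inequality (\Cref{lem:grunbaums-inequality}) applied to the half-space $\{z : z_1 \le x_1\}$, which contains the centroid $x$ and therefore captures at least a $1/e$ fraction of $\vol_d(K)$. Writing $B \coloneqq \{z \in K : z_1 \ge x_1\}$ for the cap above the centroid and $A \coloneqq \{z \in K : z_1 \ge x_1 - \beta\}$ for the (larger) cap we must control, this gives $\vol_d(B) \le (1 - 1/e)\vol_d(K)$ at once (the cap and its closed version differ only on a measure-zero hyperplane). It then suffices to establish the two-cap comparison $\vol_d(A) \le (1 + \beta/(\alpha - x_1))^d \vol_d(B)$ together with the centroid estimate $\alpha - x_1 \ge \alpha/2$, since $2 < e$ converts the prefactor into $(1 + e\beta/\alpha)^d$. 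Throughout I assume $\vol_d(K) > 0$, as otherwise the centroid and the claim are degenerate.

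For the cap comparison, the key idea is to contract $A$ into $B$ by an affine dilation toward the ``apex'' of $K$ in the first coordinate. Since we may pass to the closure of $K$ without changing its volume or centroid, $K$ is compact and $\alpha = \sup_{z \in K} z_1$ is attained at some $p \in K$. For $c = (\alpha - x_1)/(\alpha - x_1 + \beta) \in (0,1]$, define $T_c(z) \coloneqq (1-c)\,p + c\,z$. Because $p, z \in K$ and $K$ is convex, $T_c(z) \in K$; this is exactly the point where convexity keeps the image inside $K$, whereas a naive axis-aligned dilation would escape the downward-closed set. A one-line computation on the first coordinate, $T_c(z)_1 = \alpha - c(\alpha - z_1)$, shows that $z_1 \ge x_1 - \beta$ forces $T_c(z)_1 \ge x_1$, so $T_c(A) \subseteq B$. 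As $T_c$ is affine with Jacobian determinant $c^d$, I obtain $c^d\,\vol_d(A) = \vol_d(T_c(A)) \le \vol_d(B)$, i.e. $\vol_d(A) \le c^{-d}\vol_d(B) = (1 + \beta/(\alpha - x_1))^d\,\vol_d(B)$.

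It remains to bound the centroid's first coordinate, which is where downward closedness enters. Let $g(s) = \vol_{d-1}(\{(z_2,\dots,z_d) : (s,z_2,\dots,z_d) \in K\})$ be the slice profile. Downward closedness implies the slice at level $s$ is contained in the slice at any lower level $s' < s$, so $g$ is non-increasing on $[0,\alpha]$; for a non-increasing nonnegative density the mean lies at most at the midpoint of its support, giving $x_1 = \int_0^\alpha s\,g(s)\,ds\big/\int_0^\alpha g(s)\,ds \le \alpha/2$ (via the standard reflection argument comparing $g(s)$ with $g(\alpha-s)$). Hence $\alpha - x_1 \ge \alpha/2$ and, for $\beta > 0$, $\beta/(\alpha - x_1) \le 2\beta/\alpha < e\beta/\alpha$; chaining the three estimates yields the strict bound $\vol_d(A) < (1 + e\beta/\alpha)^d(1 - 1/e)\vol_d(K)$ (and the only relevant case in \Cref{lem:minimization} has $\beta = C\eps > 0$). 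I expect the main obstacle to be the cap-comparison step: finding the right affine contraction toward a boundary apex so that convexity pins the image inside $K$, rather than a coordinate dilation that would violate downward closedness. The centroid estimate is then a short consequence of the monotone slice profile.
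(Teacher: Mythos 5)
Your proof is correct, and it takes a genuinely different route from the paper's at both key steps. The shared starting point is Gr\"unbaum's inequality applied to the halfspace through the centroid, which bounds the volume of the cap $B=\{z \in K : z_1 \geq x_1\}$ by $(1-1/e)\vol_d(K)$. From there, the paper bounds the extra slab $\{z \in K : x_1 - \beta \leq z_1 < x_1\}$ by sandwiching it between two cones with apex $\alpha\mathbf{e}_1$ over the centroid slice $L_0 = \{z \in K : z_1 = x_1\}$ and comparing the cone volumes, and it obtains the needed centroid estimate $\alpha - x_1 \geq \alpha/e$ indirectly, by playing Gr\"unbaum's lower bound $\vol_d(B) \geq \frac{1}{e}\vol_d(K)$ against the downward-closure bound $\vol_d(B) \leq (1 - x_1/\alpha)\vol_d(K)$. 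You instead (i) compare the two caps in one stroke via the affine contraction $T_c(z) = (1-c)p + cz$ toward a maximizer $p$ of the first coordinate, whose Jacobian $c^d$ immediately gives $\vol_d(A) \leq (1+\beta/(\alpha-x_1))^d \vol_d(B)$ for the enlarged cap $A=\{z \in K : z_1 \geq x_1-\beta\}$, and (ii) bound the centroid directly, $x_1 \leq \alpha/2$, from the non-increasing slice profile forced by downward closure. Your route buys several things: it avoids the cone volume computations entirely (where the paper has an inconsequential $d!$-versus-$d$ slip), it proves the stronger constant $2$ in place of $e$, it needs no truncation of $\beta$ to $\min\{\beta,x_1\}$, and for $\beta > 0$ it delivers the claimed strict inequality from $2 < e$ rather than from an implicitly strict form of Gr\"unbaum. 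The one caveat, which you correctly flag, is shared with (indeed, worse in) the paper: at $\beta = 0$ the strict inequality requires the dimension-dependent Gr\"unbaum constant $(d/(d+1))^d > 1/e$, since the $1/e$ form as stated in \Cref{lem:grunbaums-inequality} only yields a non-strict bound; this is immaterial because \Cref{lem:minimization} invokes the lemma with $\beta = C\eps > 0$.
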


\begin{proof}[Proof of \Cref{lem:volume-blowup}.]
For convenience, assume that $K$ is closed; this does not affect the volumes. Similarly replace $\beta$ with $\min\{\beta,x_1\}$ so that $x_1-\beta \geq 0$. Proving the result with this update implies the original result.

Now define the halfspace $H \coloneqq \{ \bv{z} \in \R^d : z_1 \geq x_1 - \beta \}$. First, we observe that the related halfspace $H_0 = \{ \bv{z} \in \R^d : z_1 \geq x_1 \}$ satisfies $\frac{1}{e}\vol_d(K) \leq \vol_d(H_0 \cap K) < \left(1 - \frac{1}{e}\right)\vol_d(K)$ by Gr\"unbaum's inequality (\Cref{lem:grunbaums-inequality}). By downward closure, we have $\alpha\mathbf{e}_1 \in K$ and deduce that $\vol_d(H_0 \cap K) \leq \left(1 - \frac{x_1}{\alpha}\right)\vol_d(K)$. This requires that $x_1 \leq \bigl(1-\frac{1}{e}\bigr)\alpha$ to avoid violating the first inequality. 

Next, consider the $(d-1)$-dimensional intersection of $K$ with the hyperplane defining $H_0$, denoted by $L_0 \coloneqq \{ \bv{z} \in K : z_1 = x_1 \}$ for $H_0$. Convexity requires that $H_0 \cap K$ contain the convex hull of $L_0$ and $\alpha\mathbf{e}_1$, a cone we denote by $A$ with $\vol_d(A) = \frac{\alpha-x_1}{d!}\vol_{d-1}(L_0)$. Moreover, every point in $H \cap K \setminus H_0$ is outside of $A$ and connected to $\alpha\mathbf{e}_1$ by a line segment contained in $H$ and passing through $L_0$. Thus, $H \cap K \setminus H_0$ is disjoint from the cone $A$ but contained by the cone $B$ obtained by intersecting $H$ with the union of all rays emitted from $\alpha\mathbf{e}_1$ and passing through $L_0$. Similarly to $A$, we compute the volume of $B$ to be $\frac{\alpha-x_1+\beta}{d!}\bigl(\frac{\alpha-x_1 + \beta}{\alpha-x_1}\bigr)^{d-1}\vol_{d-1}(L_0)$. Consequently, we have
\begin{align*}
    \vol_d(H \cap K \setminus H_0) &\leq \vol_d(B) - \vol_d(A)\\
    &= \left[(\alpha-x_1+\beta)\left(\frac{\alpha-x_1 + \beta}{\alpha-x_1}\right)^{d-1} - (\alpha-x_1)\right]\frac{1}{d!}\vol_{d-1}(L_0)\\
    &\leq \left[\left(\frac{\alpha-x_1 + \beta}{\alpha-x_1}\right)^{d} -1 \right]\vol_{d}(H_0 \cap K)\\
    &= \left[\left(1 + \frac{\beta}{\alpha-x_1}\right)^{d} -1 \right]\vol_{d}(H_0 \cap K)\\
    &\leq \left[\left(1 + \frac{e\beta}{\alpha}\right)^{d} -1 \right]\vol_{d}(H_0 \cap K).
\end{align*}
Finally, we can bound
\begin{align*}
    \vol_d(H \cap K) &= \vol_d(H_0 \cap K) + \vol_d(H \cap K \setminus H_0)\\
    &\leq \left(1 + \frac{e\beta}{\alpha}\right)^{d} \vol_{d}(H_0 \cap K)\\
    &< \left(1 + \frac{e\beta}{\alpha}\right)^{d}\left(1 - \frac{1}{e}\right)\vol_d(K),
\end{align*}
as desired.
\end{proof}

We now prove the guarantee for the primary stage of \Clinch{}.

\begin{proof}[Proof of \Cref{lem:minimization}.]
First, we observe that $\underline{\bv{x}}$ always approximately lower bounds $\bv{x}^\star$ in each entry. Note that whenever $\underline x_y$ gets updated, we set $\underline x_y = x_y - C\eps$ for some $\bv{x}$ such that $y\in\BestResp^\eps(x)$. By monotonicity of $v^y$ and our slope bound, this implies $x^\star_y \geq x_y - C\eps = \underline{x}_y$, as desired.
  
Next, we will show that the termination condition at Step~\ref{step:clinch-terminate} is satisfied after at most $O(n\log\frac{n}{\delta})$ rounds, recalling that $0 < \delta \leq 1$ is our desired accuracy. To start, we establish a bit of notation to keep track of variables between iterations. For each round $i = 1,2,\dots$ before termination, we write $\cR_i$ for the remaining targets and $S_i$ for the active search region after Step~\ref{step:clinch-search-region}, $\bv{x}^{(i)}$ for the queried point at Step~\ref{step:clinch-oracle-query}, $y_i$ for the oracle response at Step~\ref{step:clinch-oracle-response}, and $\underline{\bv{x}}^{(i)}$ for the value of $\underline{\bv{x}}$ after Step~\ref{step:clinch-lower-bound-update}. Set $n_i = \dim(S_i)$, defined as the minimum dimension of the subspace spanned by $S_i - \bv{w}$ over some $\bv{w} \in S_i$. Finally, write $\lambda = \frac{\delta}{4C^2}$ for the threshold used to flatten $S$. Now, we fix ourselves at some round $i$ and consider two cases.
  
\paragraph{Case 1: $\cR_{i+1}=\cR_i$.}
In this case, no targets are removed from $\cR_i$ and $n_{i+1} = n_i$. Since we selected $\bv{x}^{(i)}$ as the centroid of $S_i$, we can apply \Cref{lem:volume-blowup}. Indeed, $S_i$ is convex, and its translation $K = S_i - \underline{\bv{x}}^{(i-1)}$ is downward closed. Moreover, $\sup_{\bv{z} \in K} z_{y_i} = \sup_{\bv{z} \in S_i} z_{y_i} - \underline{x}^{(i-1)}_{y_i} \geq \lambda$ (otherwise, the target $y_i$ would have been removed from $\cR_i$ to obtain $\cR_{i+1}$). Consequently, we have
\begin{align*}
     \vol_{n_{i+1}}(S_{i+1}) =  \vol_{n_{i}}(S_{i+1}) &= \vol_{n_i}\left(\left\{ \bv{z} \in S_i : z_{y_i} \geq x^{(i)}_{y_i} - C\eps\right\}\right)\\
     &\leq \left(1 + \frac{e \cdot C \eps}{\lambda}\right)^n \left(1-\frac{1}{e}\right) \vol_d(S_i)\\
     &\leq e^{1/3} \left(1-\frac{1}{e}\right) \vol_d(S_i) < \frac{9}{10}\vol_d(S_i),
\end{align*}
where the penultimate inequality uses that $\eps \leq \frac{\lambda}{3Cen} = \frac{\delta}{12C^3 en}$.

\paragraph{Case 2: $\cR_{i+1} \subset \cR_i$.} 
In this case, $n_{i+1} - n_i > 0$ targets are removed from $\cR_i$ in the next step. Writing $K_i = \{ \bv{x}' \in S_i : x_z' = \underline{x}^{(i-1)}_z \: \forall z \not\in R_{i+1}\}$ for the region which enforces the locked coordinates for the next step --- but not the updated lower envelope --- we (loosely) bound
\begin{equation*}
    \vol_{n_{i+1}}(S_{i+1}) \leq \vol_{n_{i+1}}(K_i) \leq \left(\frac{n}{\lambda}\right)^{n_{i}-n_{i+1}} \vol_{n_i}(S_{i}).
\end{equation*}
The first inequality uses that $S_{i+1} \subseteq K_i$. For the second, convexity requires that $S_i$ contains the convex hull of $K_i$ and the points $\{ \underline{x}^{(i-1)} + \lambda \mathbf{e}_z : z \in \cR_i \setminus \cR_{i+1}\}$, which has volume loosely bounded from below by $(\lambda/n)^{|n_{i+1}-n_i|}\vol_{n_{i+1}}(K_i)$. 

Combining these cases inductively, we deduce that
\begin{equation}
\label{eq:clinch-volume-reduction}
    \vol_{n_{i}}(S_{i}) < \left(\frac{n}{\lambda}\right)^n \left(\frac{9}{10}\right)^{i-n}\alpha^n.
\end{equation}
On the other hand, once $\vol_{n_{i}}(S_{i}) < \lambda^n / n!$, every coordinate must have slack less than $\lambda$, and the termination condition at Step~\ref{step:clinch-terminate} will be satisfied. Consequently, we compute that the outer loop must terminate after at most $15n\log\frac{2\alpha n}{\lambda} \leq 15n\log\frac{8C^2 \alpha n}{\delta}$ iterations. At this point, we have $y \in \BestResp^\eps(\bv{x})$ for $\bv{x} \in \cX$ with $\bv{x} \geq \underline{\bv{x}}$ and either $\underline{\bv{x}} + \lambda \mathbf{e}_y \not\in \cX$ or $\underline{x}_y + \lambda > \overline{x}_y$. In the former case, downward closure of $\cX$ implies $x^\star_y \leq \underline{x}_y + \lambda \leq x_y + \lambda$, and the same relations hold for the latter, since $x^\star_{y} \leq \overline{x}_y$ at this point from the input guarantee. Hence, we obtain
\begin{align*}
    v(\bv{x}^\star, \brr(\bv{x}_\star)) \geq v^y(\bv{x}^\star_y) &\geq v^y(x_y + \lambda)\\
    &\geq v^y(x_y) - C \lambda\\
    &\geq v(\bv{x},\brr(\bv{x})) - 2 C\lambda\\
    &= v(\bv{x},\brr(\bv{x})) - \frac{\delta}{2C}.\qedhere
\end{align*}
\end{proof}

\subsection{Mass conservation (proof of Lemma~\texorpdfstring{\ref{lem:stabilization}}{3.8})}
\label{app:stabilize}
\begin{proof}[Proof of \Cref{lem:stabilization}.]
Fixing $y \in \cY$, define the thresholds
\begin{align*}
    r_y &= \sup\left\{p \in [\underline{x}_y,x_y] : \BestResp^{\lambda/C}(\bv{x} + [p-x_y]\mathbf{e}_y) = \{y\}\right\},\\
    s_y &= \sup\left\{p \in [\underline{x}_y,x_y] : \BestResp(\bv{x} + [p-x_y]\mathbf{e}_y) = \{y\} \right\},\\
    t_y &=  \sup\left\{p \in [\underline{x}_y,x_y] : y \in \BestResp^{\lambda/C}(\bv{x} + [p-x_y]\mathbf{e}_y) \right\},
\end{align*}
where we define each to be $\underline{x}_y$ if the corresponding set is empty. (Note that the set for $s_y$ can contain at most one point by strict monotonicity of $v^y$.) By monotonicity of $v^y$ and our slope bound, we have $s_y - \lambda \leq r_y \leq s_y \leq t_y \leq s_y + \lambda$. By our choice of binary search, either $\hat{x}_y > x_y - \lambda$, or $\hat{x}_y > m - \lambda$ at some iteration for which $\Oracle{}(\bv{x} - [x_y - m]\mathbf{e}_y) \neq y$. In the latter case, monotonicity of $v^y$ and the slope bound require that $\hat{x}_y \geq r_y$, while, for the former, we have $\hat{x}_y > r_y - \lambda$.
Similarly, either $\hat{x}_y = \underline{x}_y$, or $\hat{x}_y < m$ for a search iteration during which $\Oracle{}(\bv{x} - [x_y - m]\mathbf{e}_y) = y$. In the latter case, monotonicity of $v^y$ and the slope bound require that $\hat{x}_y < t_y$, while, for the former, we have $\hat{x}_y \leq t_y$. Combining, we have that $\hat{x}_y \in (s_y - 2\lambda, s_y + \lambda]$.

By definition of $s_y$, we must have $v^y(s_y) \leq v(\bv{x},\brr(\bv{x}))$ (with equality unless $s_y=\underline{x}_y$). Hence, $v^y(\hat{x}_y) < v^y(s_y) + 2C\lambda \leq v(\bv{x},\brr(\bv{x})) + 2C\lambda$. Since this holds for all $y \in \cY$, we have $v(\hat{\bv{x}},\brr(\hat{\bv{x}})) \leq v(\bv{x},\brr(\bv{x})) + 2C\lambda$, proving the second part of the claim. Now, if $\hat{x}_y > \underline{x}_y$, we must have $t_y > \hat{x}_y > \underline{x}_y$, and so $y \in \BestResp^{\lambda/C}(\bv{x} + [\hat{x}_y - x_y]\mathbf{e}_y)$. The previous result then implies that $y \in \BestResp^{\lambda/C + 2C\lambda}(\bv{x} + [\hat{x}_y - x_y]\mathbf{e}_y)$. We bound $\lambda/C + 2C\delta \leq 3C\lambda$ for conciseness, and note that each binary search use $O\big(\log \frac{\alpha}{\lambda}\big)$ queries.
\end{proof}

\subsection{Perturbing estimates of \texorpdfstring{$\bv{x}^\star$}{x*} (proof of \texorpdfstring{Lemma~\ref{lem:perturbation}}{4.1})}
\label{prf:perturbation}

\begin{proof}[Proof of \Cref{lem:perturbation}.]
The error bound implies that $\hat{x}_y > W/2$ only if $x^\star_y > 0$. On the other hand, if $x^\star_y > 0$, then our regularity width assumption requires that $x^\star_y \geq W$, and so $\hat{x}_y \geq W - \frac{W\lambda}{6C^2} > W/2$. As noted in the proof of \Cref{prop:ssg-solution}, there are no $y \in \BestResp(\bv{x}^\star)$ with $x^\star_y = 0$, and so
\begin{equation*}
    \BestResp(\bv{x}^\star) = \{ y \in \cY : x^\star_y > 0 \} = \{ y \in \cY : \hat{x}_y > W/2 \}.
\end{equation*}

Now fix $\hat{y}$ as defined in Step 8, and consider the returned strategy $\tilde{\bv{x}} \coloneqq \hat{\bv{x}} - \frac{W\lambda}{2} \mathbf{e}_{\hat{y}}$. 
We know that $\tilde{\bv{x}} \in \cX$ since $\hat{x}_y > W/2$ and $\cX$ is downward closed.
We claim that $\BestResp(\tilde{\bv{x}}) = \{\hat{y}\}$.  Indeed, we have $\tilde{x}_{\hat{y}} < x^\star_{\hat{y}} - (\frac{W\lambda}{2} - \frac{W\lambda}{6C^2}) \leq x^\star_{\hat{y}} - W\lambda/3$, and so our lower slope bound requires that
\begin{equation*}
    v^{\hat{y}}(\tilde{x}_{\hat{y}}) > v^{\hat{y}}(x^\star_{\hat{y}}) + \frac{W\lambda}{4C} = v(\bv{x}^\star,\brr(\bv{x}^\star)) + \frac{W\lambda}{3C}.
\end{equation*}
For $y \neq \hat{y}$, we have $\tilde{x}_y > x^\star_y - \frac{W\lambda}{6C^2}$, and so our upper slope bound requires that
\begin{equation*}
    v^y(\tilde{x}_y) < v^y(x^\star_y) + C \frac{W\lambda}{6C^2} \leq v(\bv{x}^\star,\brr(\bv{x}^\star)) + \frac{W\lambda}{6C} \leq v(\bv{x}^\star,\brr(\bv{x}^\star)) + \frac{W\lambda}{3C} - \eps.
\end{equation*}
Consequently, we have $\BestResp^\eps(\tilde{\bv{x}}) = \{ \hat{y} \}$. Finally, we compute
\begin{align*}
    u(\tilde{\bv{x}},\brr(\tilde{\bv{x}})) &= u^{\hat{y}}(\tilde{x}_{\hat{y}})\\
    &\geq u^{\hat{y}}(x^\star_{\hat{y}}) - C|x^\star_{\hat{y}} - \tilde{x}_{\hat{y}}|\\
    &\geq u(\bv{x}^\star,\brr(\bv{x}^\star)) - \frac{W \lambda}{6C}\\
    &> u(x^\star,\brr(\bv{x}^\star)) - \lambda,
\end{align*}
verifying that $\tilde{\bv{x}}$ is indeed a $\lambda$-approximate Stackelberg equilibrium strategy for the principal.
\end{proof}

\subsection{Exact search with bounded bit precision (discussion in Section~\texorpdfstring{\ref{ssec:ssg-comparison}}{3.4})}
\label{app:ssg-exact-search}
We now analyze \Clinch{} imposing the additional regularity assumptions of \cite{peng2019learning}. 

\begin{assumption}
\label{as:bit-precision}
Agent utilities are linear with rational coefficients whose denominators are at most $2^L$, and that each non-empty best response region has volume at least $2^{-nL}$. Moreover, $\cX$ is a polytope represented as the intersection of a finite set of half-spaces, each of the form $\{ \bv{x} \in [0,1]^n : \bv{x}^\top a \leq b \}$ where $b \in \R$ and each entry of $a \in \R^n$ are rational with numerators and denominators at most $2^L$.
\end{assumption}

By the discussion of our regularity assumptions in \Cref{sec:security-games-model}, it suffices to take $C = 2^L$. With these settings, \Cref{thm:clinch} states that \Clinch{} terminates in $O(n L + n\log \frac{1}{\delta})$ oracle queries, and the returned strategy $\hat{\bv{x}} \in \cX$ satisfies $\|\hat{\bv{x}} - \bv{x}^\star\|_\infty < \delta$. A more careful analysis can eliminate dependence on $W$---yielding query complexity $O(nL + n\log\frac{1}{\lambda})$---by avoiding the final perturbation step of \Clinch{}. However this improvement will not impact our final result.

Next, we bound the bit complexity of $\bv{x}^\star$.

\begin{lemma}
If agent utilities are linear with rational coefficients whose denominators are at most $2^L$, then the entries of $\bv{x}^\star$ are rational with denominators at most $2^{8Ln}$.
\end{lemma}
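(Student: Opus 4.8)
The plan is to exhibit $x^\star$ as the solution of a small, sparse, nonsingular linear system whose coefficients are the given low-complexity rationals, and then to bound denominators through Cramer's rule. Write each agent utility as $v^y(s) = a_y - b_y s$, where $a_y, b_y$ are rational of denominator at most $2^L$ and $0 < b_y \le C = 2^L$. I first recall from the proof of \Cref{lem:ssg-optimality} that a conservative strategy is completely pinned down by its best-response value: setting $B \coloneqq \{ y : y \in \BestResp(x^\star)\}$, $m \coloneqq |B|$, and $w^\star \coloneqq v(x^\star,\brr(x^\star))$, we have $x^\star_y = 0$ for $y \notin B$ and $b_y x^\star_y + w^\star = a_y$ for $y \in B$. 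Since $x^\star$ minimizes $v(x,\brr(x)) = \max_y v^y(x_y)$ over $\cX$ and decreasing this value only increases coverage (\Cref{lem:ssg-optimality}), $x^\star$ lies on $\partial\cX$, so some facet $\langle a^{(j)}, x \rangle = b^{(j)}$ of $\cX$ (from the half-space or box representation of \Cref{as:bit-precision}, hence with coefficients of denominator at most $2^L$) is binding. Together, the $m$ level equations and this one facet equation form a square system $M \xi = c$ in the unknowns $\xi = ((x^\star_y)_{y \in B}, w^\star)$.

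The key structural observation is that $M$ is an \emph{arrow} (bordered-diagonal) matrix: each level row contributes only two nonzero entries ($b_y$ in its own coordinate and $1$ in the $w$-column), while only the single facet row is dense. I plan to clear denominators row by row to obtain an integer matrix $\tilde M$ with the same solution $\xi$: each level row is scaled by the least common multiple $d_y$ of the denominators of $a_y$ and $b_y$, which is at most $2^{2L}$, producing entries of magnitude at most $2^{3L}$; the facet row is scaled by the least common multiple $d_0$ of its at most $m+1$ denominators, so $d_0 \le 2^{L(m+1)}$ and its entries have magnitude at most $2^{L(m+2)}$. Hadamard's inequality applied to $\tilde M$ then exploits the sparsity of the level rows:
\begin{equation*}
    |\det \tilde M| \le \big(\sqrt{2}\, 2^{3L}\big)^{m} \cdot \sqrt{m}\, 2^{L(m+2)} \le 2^{8Ln},
\end{equation*}
where the final inequality uses $m \le n$ and $L \ge 1$. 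Crucially, the arrow structure keeps the exponent \emph{linear} in $n$; a naive common-denominator bound would incur an $n^2$ in the exponent and fail. By Cramer's rule, $x^\star_y = \det(\tilde M^{(y)})/\det(\tilde M)$ is a ratio of integers for each $y \in B$, so its reduced denominator divides $|\det \tilde M| \le 2^{8Ln}$; the remaining coordinates vanish, completing the bound.

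The step I expect to require the most care is justifying the reduction to a \emph{nonsingular} square system, i.e., selecting a binding facet for which $M$ is invertible. I would argue this through the linear-programming view of $x^\star$: the value $w^\star = \min\{ w : b_y x_y + w \ge a_y \text{ for all } y, \ x \in \cX \}$ is the optimum of an LP, and its conservative optimizer is a basic feasible solution whose $m+1$ active constraints (the level equations for $y \in B$, the coordinate equations $x_y = 0$ for $y \notin B$, and one binding facet) are linearly independent by definition of a basis. Equivalently, one checks that $\det M = \pm \big(\prod_{y \in B} b_y\big) \sum_{y \in B} a^{(j)}_y / b_y$ equals, up to sign and a positive factor, the directional derivative $\frac{d}{dw}\langle a^{(j)}, x(w)\rangle$ along the conservative path $x(w)$; the facet that limits how far $w$ can decrease has strictly nonzero such derivative, forcing $\det M \ne 0$. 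Handling degenerate vertices, and confirming that a binding box constraint $x_{y_0} = 1$ (an even easier sub-case, with integer coefficients) fits the same template, are routine but should be stated explicitly.
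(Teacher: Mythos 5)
Your proposal is correct and follows essentially the same route as the paper's proof: both pin down $x^\star$ via the conservative structure (level equations $b_y x^\star_y + w^\star = a_y$ for $y \in \BestResp(x^\star)$, zeros elsewhere) plus one binding facet of $\cX$, and then bound the arithmetic complexity of the solution of that small linear system. The only difference is bookkeeping---the paper solves the arrow system explicitly for $w^\star$ and tracks denominators through that formula, whereas you invoke Cramer's rule and Hadamard's inequality; your nonsingularity discussion (the facet blocking further decrease of $w^\star$ has $\sum_{y} a_y/b_y \neq 0$) is in fact slightly more careful than the paper, which leaves this point implicit.
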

\begin{proof}
Fix any $y \in \BestResp(x^\star)$, and write $v^y(t) = d_y - c_y t$, where $c_y, d_y \in (0,1]$ are rational with denominators at most $2^L$. As noted in the proof of \Cref{prop:ssg-solution}, we must have $x^\star_y > 0$. 
Writing $w^\star = v(\bv{x}^\star,\brr(\bv{x}^\star)) = v^y(x^\star_y)$, we can solve for $x^\star_y = (d_y - w^\star)/c_y$. For $y \not\in \BestResp(x^\star)$, we have $x^\star_y = 0$.

Now, since $\bv{x}^\star$ minimizes the agent's best response utility, we know that $w^\star$ is as small as possible so that the $\bv{x}^\star$ as determined above lies in $\cX$. In other words, $\bv{x}^\star$ must lie on a face of $\cX$, represented as $\{ \bv{x} \in \R^n : a^\top \bv{x} = b \}$. Thus, we have $\sum_{y \in \BestResp(\bv{x}^\star)} a_y (d_y - w^\star)/c_y = b$ and can compute
\begin{equation*}
    w^\star = \frac{\sum_{y \in \BR(\bv{x}^\star)} a_yd_y/c_y - b}{\sum_{y \in \BR(\bv{x}^\star)} a_y/c_y}.
\end{equation*}
Our bit precision assumptions imply that $w^\star \in (0,1]$ is rational with denominator at most $2^{5Ln+L}$, and so each $x^\star_y \in [0,1]$ must also be rational with denominator at most $2^{5Ln+3L}$.
\end{proof}

Hence, rounding appropriately, we have the following.

\begin{proposition}
Under Assumption~\ref{as:bit-precision}, running \Clinch{} with $\delta = \frac{1}{3}2^{-8Ln}$ and rounding each entry of the result to the nearest multiple of $2^{-8Ln}$ gives $\bv{x}^\star$ using $O(n^2 L)$ best response queries.
\end{proposition}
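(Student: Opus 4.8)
The plan is to feed the target accuracy $\delta = \tfrac13 2^{-8Ln}$ into \Cref{thm:clinch} and argue that this precision suffices to pin down $x^\star$ exactly once we round. First I would record the two structural facts that make the hypotheses of \Cref{thm:clinch} apply. By the discussion of the regularity assumptions it suffices to take the slope bound $C = 2^L$ under \Cref{as:bit-precision}. Moreover, exact best-response feedback is a $0$-approximate oracle, hence a fortiori a $\tfrac{\delta}{33C^3 n}$-approximate oracle for any positive $\delta$, so the oracle requirement is met. I would then run \Clinch{} with the trivial bounding box $\underline{x} = \mathbf{0}_n$, $\overline{x} = \mathbf{1}_n$ (so that $\alpha = 1$) and accuracy $\delta = \tfrac13 2^{-8Ln}$, obtaining $\hat{x} \in \cX$ with $\|\hat{x} - x^\star\|_\infty \le \delta$.

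Next I would handle the query-count bookkeeping. The ``in particular'' clause of \Cref{thm:clinch} gives query complexity $O\!\bigl(n \log \tfrac{Cn}{\delta}\bigr)$ for the full box. Substituting $C = 2^L$ and $\delta = \tfrac13 2^{-8Ln}$ yields $\tfrac{Cn}{\delta} = 3n\,2^{L(1+8n)}$, so $\log \tfrac{Cn}{\delta} = O(\log n + Ln)$ and the total is $O(n\log n + Ln^2) = O(n^2 L)$, using $L \ge 1$. This disposes of the complexity half of the claim; the only care needed is checking that the dominant term is $8Ln^2$ rather than something larger.

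For exactness I would invoke the preceding lemma, which shows that each coordinate $x^\star_y$ is rational with denominator at most $2^{8Ln}$, so that $x^\star$ lies on the grid $2^{-8Ln}\mathbb{Z}^n$. Since $\|\hat{x} - x^\star\|_\infty \le \tfrac13 2^{-8Ln} < \tfrac12 2^{-8Ln}$ is strictly less than half the grid spacing, each estimate $\hat{x}_y$ is closer to $x^\star_y$ than to any other multiple of $2^{-8Ln}$. Rounding $\hat{x}_y$ to the nearest such multiple therefore returns exactly $x^\star_y$, and rounding coordinatewise recovers $x^\star$.

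I expect the crux to be this final exactness step rather than the query arithmetic: the argument hinges on knowing \emph{a priori} that $x^\star$ sits on a grid coarse enough that an $\ell_\infty$ error below half the spacing rounds back to it. The delicate interplay is that the denominator bound $2^{8Ln}$ from the preceding lemma must simultaneously match the chosen accuracy $\delta$ (to make the half-spacing comparison go through) and feed the logarithmic query dependence of \Cref{thm:clinch} (to keep the count at $O(n^2 L)$ rather than blowing up); the choice $\delta = \tfrac13 2^{-8Ln}$ is precisely what threads this needle.
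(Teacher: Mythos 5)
Your proposal follows the paper's proof step for step: take $C = 2^L$ from the regularity discussion, feed $\delta = \frac{1}{3}2^{-8Ln}$ and the trivial bounding box into \Cref{thm:clinch} to get $\|\hat{x}-x^\star\|_\infty \le \delta$ in $O\bigl(n\log\frac{Cn}{\delta}\bigr) = O(n^2L)$ queries, and then invoke the preceding bit-complexity lemma to argue that coordinatewise rounding recovers $x^\star$ exactly. The query-count arithmetic is correct and matches the paper's accounting.

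However, the exactness step---which you yourself identify as the crux---contains a non sequitur, and it is worth naming because the paper's own proof (``Hence, rounding appropriately\dots'') elides exactly the same point. The lemma shows that each $x^\star_y$ is rational with \emph{denominator at most} $2^{8Ln}$; this does not imply that $x^\star$ lies on the grid $2^{-8Ln}\mathbb{Z}^n$. For instance, $1/3$ has denominator $3 \le 2^{8Ln}$ but is not a multiple of $2^{-8Ln}$. Indeed, the lemma's proof produces $w^\star$ (and hence $x^\star_y = (d_y - w^\star)/c_y$) as a ratio whose denominator involves an integer \emph{sum} of products of the input numerators and denominators, which need not be a power of two even when every input coefficient is a dyadic ($L$-bit) fraction, so rounding $\hat{x}_y$ to the nearest multiple of $2^{-8Ln}$ need not return $x^\star_y$. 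The standard repair preserves the entire structure of your argument: run \Clinch{} with the finer accuracy $\delta < \frac{1}{2}2^{-16Ln}$ and round each coordinate to the nearest rational with denominator at most $Q = 2^{8Ln}$ (computable via continued fractions); since two distinct rationals with denominators at most $Q$ differ by at least $1/Q^2$, this rounding is unambiguous and recovers $x^\star_y$ exactly, and the query count is still $O\bigl(n\log\frac{Cn}{\delta}\bigr) = O(n^2L)$. In short, your proof is a faithful reconstruction of the paper's argument, but both leave the same gap, and the fix changes the rounding rule and accuracy constant rather than the overall approach.
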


\section{Full Details for Myopic Numerical Simulations (Section~\texorpdfstring{\ref{ssec:ssg-comparison}}{3.4})}\label{app:clinch-experiments}

In \Cref{fig:clinch_vs_securitysearch} of Section~\ref{ssec:ssg-comparison}, we compare the performance of \Clinch{} to that of the previous state-of-the-art, \SecuritySearch{} \citep{peng2019learning}, given queries to a best response oracle.  Here, we provide further details on these experiments. Recall that code for the algorithm implementations and plots is available at \url{https://github.com/sbnietert/learning-stackelberg-games}.

\subsection{Implementation details}
We implemented each algorithm in Python and NumPy according to their respective specifications. The best response oracle was straightforward to implement for the SSGs describe below, since their utilities are linear. We note that \Clinch{} is fully-specified without knowledge of $C$ since the best response oracle is exact (i.e., $\eps = 0$).

\subsection{Experimental setup}
\label{ssec:clinch-experimental-setup}

\Cref{fig:clinch_vs_securitysearch} depicts the query complexity of \Clinch{} and \SecuritySearch{} on a sequence of SSGs with number of targets $n$ ranging from $5$ to $100$. We examine two settings:

\textbf{Setting 1.} Given $n$, we consider the SSG where the defender's strategy space is the unit simplex $\Delta_{n-1}$ with payoffs such that the attacker (resp.\ defender) receives value $1$ if they successfully attack (resp.\ defend) and value $0$ otherwise. While it is clear that the optimal defender strategy is to mix uniformly over all $n$ targets, this problem specification is unknown to the algorithms (and thus they must learn this from scratch).

In practice, we find that \SecuritySearch{} suffers from severe numerical stability issues due to the symmetry between the targets of the problem described above. To alleviate stability problems of \SecuritySearch{} and obtain a fairer comparison, we run this algorithm on a perturbed version of this problem where the payoff for successfully attacking or defending each target is slightly perturbed, by a uniformly random quantity between $0$ and $0.0005$.

\textbf{Setting 2.} Given $n$, we sample an SSG where the defender's strategy space is the unit simplex $\Delta_{n-1}$ with payoffs as follows: for each target, the attacker and defender have independent values for a successful attack or defense, each sampled independently and uniformly from $[0, 1]$; furthermore, each agent receives payoff $0$ if they unsuccessfully attack or defend. Note that these games are non-zero sum, since the attacker and defender have independent valuations. 

For each value of $n$, we sample $3$ games as described above and report the averaged number of oracle queries across these three instantiations. In each setting, we run the algorithms until they solve for the equilibrium nearly exactly, up to an accuracy of $10^{-8}$ in each coordinate.

\subsection{Results}

To illustrate the asymptotic scaling of sample complexity clearly, \Cref{fig:clinch_vs_securitysearch} depicts our results on a log-log scale ($n$ versus query count). In addition, to estimate the scaling rate, we plot a best linear fits of the log-transformed variables for each curve.

\Cref{fig:clinch_vs_securitysearch} shows that, in Setting 1, \Clinch{} requires fewer than $100$ samples when $n = 5$ and fewer than $2000$ samples when $n = 100$, whereas \SecuritySearch{} requires over $10^4$ samples when $n = 5$ and over $10^8$ samples when $n = 100$.  \Cref{fig:clinch_vs_securitysearch} also shows that, in Setting 2, \Clinch{} requires fewer than $100$ samples when $n = 5$ and fewer than $2000$ samples when $n = 100$, whereas \SecuritySearch{} requires over $4000$ samples when $n = 5$ and over $2\cdot 10^6$ samples when $n = 100$.

\subsection{Discussion}

We find \Clinch{} outperforms \SecuritySearch{} both in the constant factor hidden by the big-$O$ and the asymptotic query complexity in $n$. The empirical complexities match theory, with the cost of \SecuritySearch{} scaling roughly as $n^{3}$ and the cost of \Clinch{} scaling roughly as $n$. (We note that the \SecuritySearch{} scales with exponent around $n^2$ on the random instantiations, suggesting better average- than worst-case performance; however, it is still outperformed by the linear scaling of \Clinch{}.) \Clinch{} is even efficient for small $n$, improving over \SecuritySearch{} by two orders of magnitude in the query complexity.

In summary, we find that \Clinch{} runs efficiently, being both asymptotically optimal and having a small constant factor in practice, while \SecuritySearch{} struggles even in these simple settings.

\section{Full Details for Non-myopic Numerical Simulations (Section~\texorpdfstring{\ref{ssec:non-myopic-clinch-experiments}}{4.3})}\label{app:non-myopic-clinch-experiments}

In \Cref{fig:multi-threaded-simulations} of Section~\ref{ssec:non-myopic-clinch-experiments}, we compare the performance of multi-threaded and batched \Clinch{} against simulated non-myopic agents. Here, we provide further details on these experiments. Recall that code for the algorithm implementations and plots is available at \url{https://github.com/sbnietert/learning-stackelberg-games}.

\subsection{Implementation details}

Both batched and multi-threaded versions of \Clinch{} were implemented using Python and NumPy. They were structured to advance their state one round of agent interaction at a time, so that the agent can copy their state and use it to simulate several potential future trajectories. The batched variant uses the na\"ive repetition approach described at the beginning of Section~\ref{subsec:ssg-nonmyopic}. Given a batch size $B$, it sets accuracy $\lambda = nB/T$ and runs \Clinch{} with $\delta=\frac{W\lambda}{6C^2}$ and $\eps = \frac{W\lambda}{200C^5n}$ until some $\hat{x}$ is returned, na\"ively repeating each query $B$ times. Then $\tilde{x} = \Perturb(\hat{x},\lambda)$ is played for the remaining rounds. Our multi-threaded algorithm runs $\log T$ threads in parallel, as in \MultiThreadedClinch{}. However, the exploration phase for thread with delay $B$ simply performs the batched search described above, instead of the full algorithm's series of searches. During each thread's exploit phase, we play the perturbed result of the highest-indexed thread which has entered the exploit phase. This variant is faster to simulate than the full algorithm and still achieves $\tilde{O}\bigl(n T_\gamma \log^{O(1)}(TC/W)\bigr)$ regret.

Our agent is defined by a discount function $\nu$ mapping delay $\tau$ to discount level $\nu(\tau)$, taken to be $\gamma^\tau$ for the geometric discounting plots in \Cref{fig:multi-threaded-simulations}. Then, at each round $t$, it selects target $y \in \cY$ maximizing $v(x_t,y) + \sum_{\tau = 1}^T \nu(\tau) v(x_{t+\tau},\brr(x_{t+\tau}))$, where the future $x_{t+\tau}$ strategies are obtained by simulating the principal's algorithm forward with best responses.

\begin{figure}
    \centering
    \includegraphics[width=\linewidth]{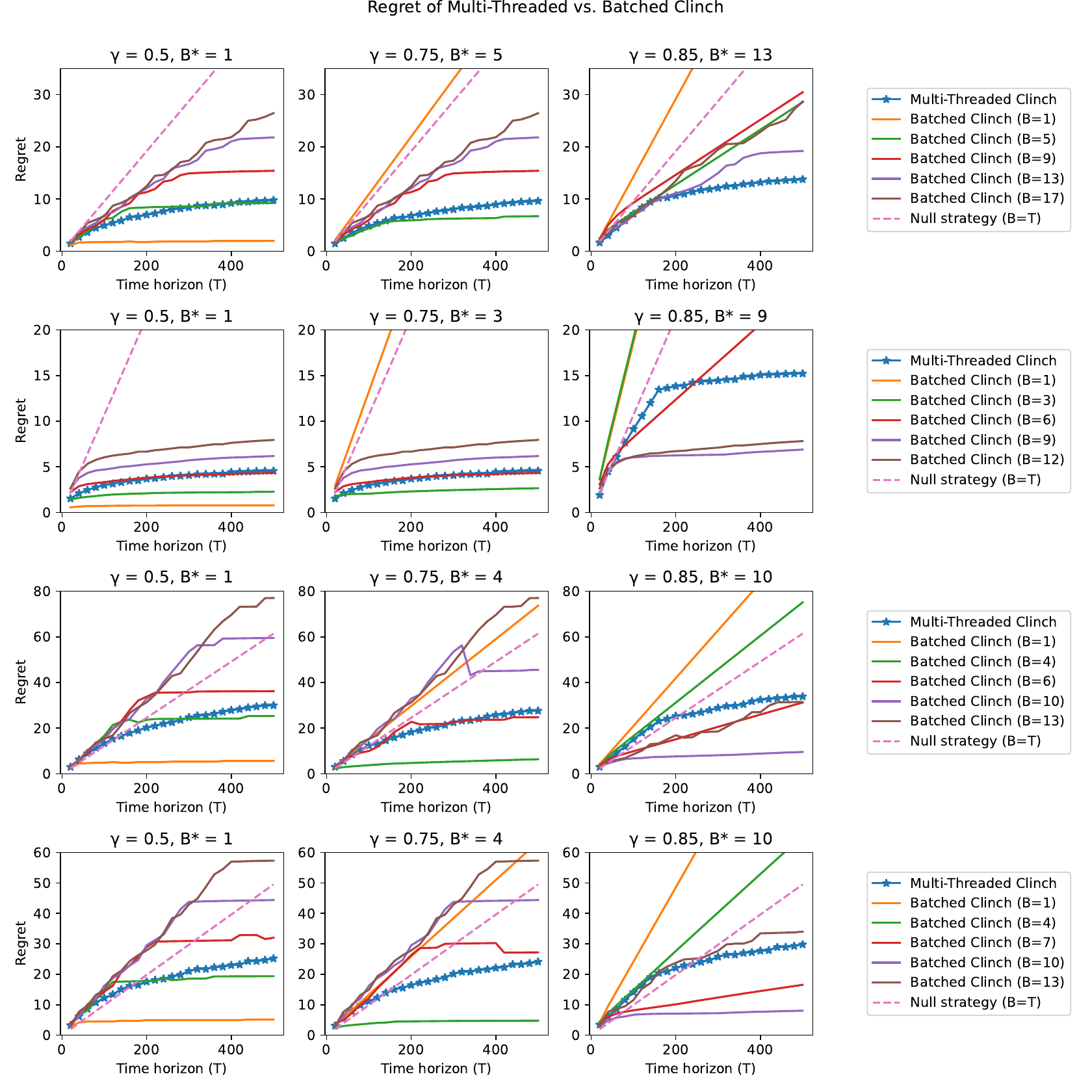}
    \caption{Regret achieved by batched and multi-threaded variants of \Clinch{} against a simulated $\gamma$-discounting agent on four random SSG instances with $n=3$. For each instance and discount factor, we note the optimal batch size $B^\star$ at $T=500$.}
    \label{fig:multi-threaded-extra-simulations}
\end{figure}

\begin{figure}
    \centering
\includegraphics[width=\linewidth]{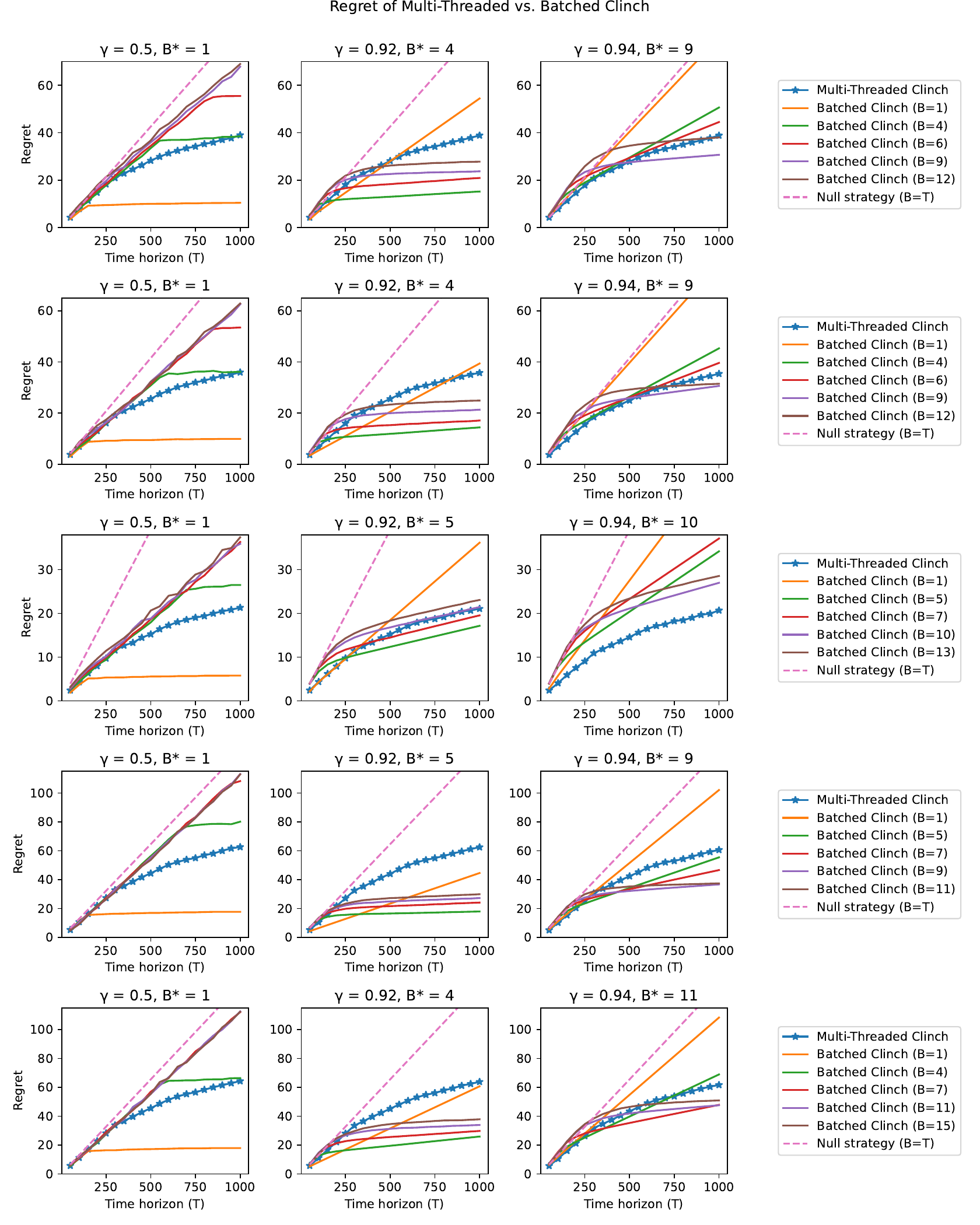}
    \caption{Regret achieved by batched and multi-threaded variants of \Clinch{} against a simulated $\gamma$-discounting agent on four random SSG instances with $n=10$. For each instance and discount factor, we note the optimal batch size $B^\star$ at $T=1000$.}
    \label{fig:multi-threaded-extra-simulations-n10}
\end{figure}

\subsection{Experimental setup}
\Cref{fig:multi-threaded-simulations} uses a simplex SSG with random linear utilities and $n=3$ targets, nearly as described in Setting 2 of \Cref{ssec:clinch-experimental-setup}. The only modification is that utilities are sampled randomly between $0.25$ and $0.75$ instead of $0$ and $1$, to ensure that the minimum width $W$ is not too small. For the principal's algorithms, we take $W = 0.25/(0.25 + (n-1)\cdot 0.75)$ and $C=1$. The slope bound is always valid since the coefficients are less than one, and the minimum width bound is valid when $n=2$ and empirically worked well for larger $n$. Although $W$ can feasible be much smaller for $n>2$, we achieved strong performance with no additional tuning. \Cref{fig:multi-threaded-simulations} depicts results for a single random SSG instance, though qualitatively similar results are observed for additional random instances.

\subsection{Additional results with geometric discounting}
\label{ssec:non-myopic-clinch-extra-experiments-geometric}

For Figure~\ref{fig:multi-threaded-extra-simulations}, we repeated the experiments for \Cref{fig:multi-threaded-simulations} with four additional random SSG instances. Observe that the multi-threaded algorithm always achieves sublinear regret, while any fixed batch size performs poorly if the discount factor is too large. For each instance the set of batch sizes displayed is selected to include the best batch size for each discount factor at $T=500$ (computed via brute-force search), along with an intermediate and larger batch size.

Finally, in Figure~\ref{fig:multi-threaded-extra-simulations-n10}, we present results for an extended set of experiments with $n = 10$. The maximum time horizon was extended from $T=500$ to $T=1000$, so that enough rounds pass by for searches to complete. Additionally, the set of discount factors was updated to $\{0.5, 0.92, 0.94\}$. The latter two were increased so that the simulated non-myopic agent has significant incentive to deviate from best-response behavior. We observe qualitatively similar results to Figure~\ref{fig:multi-threaded-extra-simulations}, although the overhead of multi-threading compared to selecting the optimal batch size, or slightly larger, is more pronounced in some regimes (particularly when $\gamma = 0.92$). Importantly, the regret of multi-threaded \Clinch{} is still sublinear, in contrast to the linear regret suffered when the batch size is too small.

\subsection{Additional results with hyperbolic discounting}
\label{ssec:non-myopic-clinch-extra-experiments-hyperbolic}
In Figures~\ref{fig:multi-threaded-hyperbolic-simulations} and \ref{fig:multi-threaded-hyperbolic-simulations-n10}, we repeat the experiments above for the alternative choice of hyperbolic discounting, taking $\nu(\tau) = 1/(1+k\tau)$ for varied $k$. Here, the optimal batch size for fixed $k$ is very sensitive to the time horizon $T$, so we use a fixed set of batch sizes for each choice of $n \in \{3,10\}$. Interestingly, the multi-threaded algorithm occasionally outperforms all of batched algorithms. To understand why this is possible, note that the number of future rounds which the agent can impact with their current action is substantially fewer with multi-threading, since the principal never commits to any fixed strategy for very long. Moreover, it is natural that this phenomenon is more pronounced with the less-aggressive, hyperbolic discounting. Indeed, a collection of many future rounds can impact the agent's discounted utility far more than any single future round (whereas they are within a factor of $T_\gamma$ under geometric discounting).

\begin{figure}
    \centering
    \includegraphics[width=\linewidth]{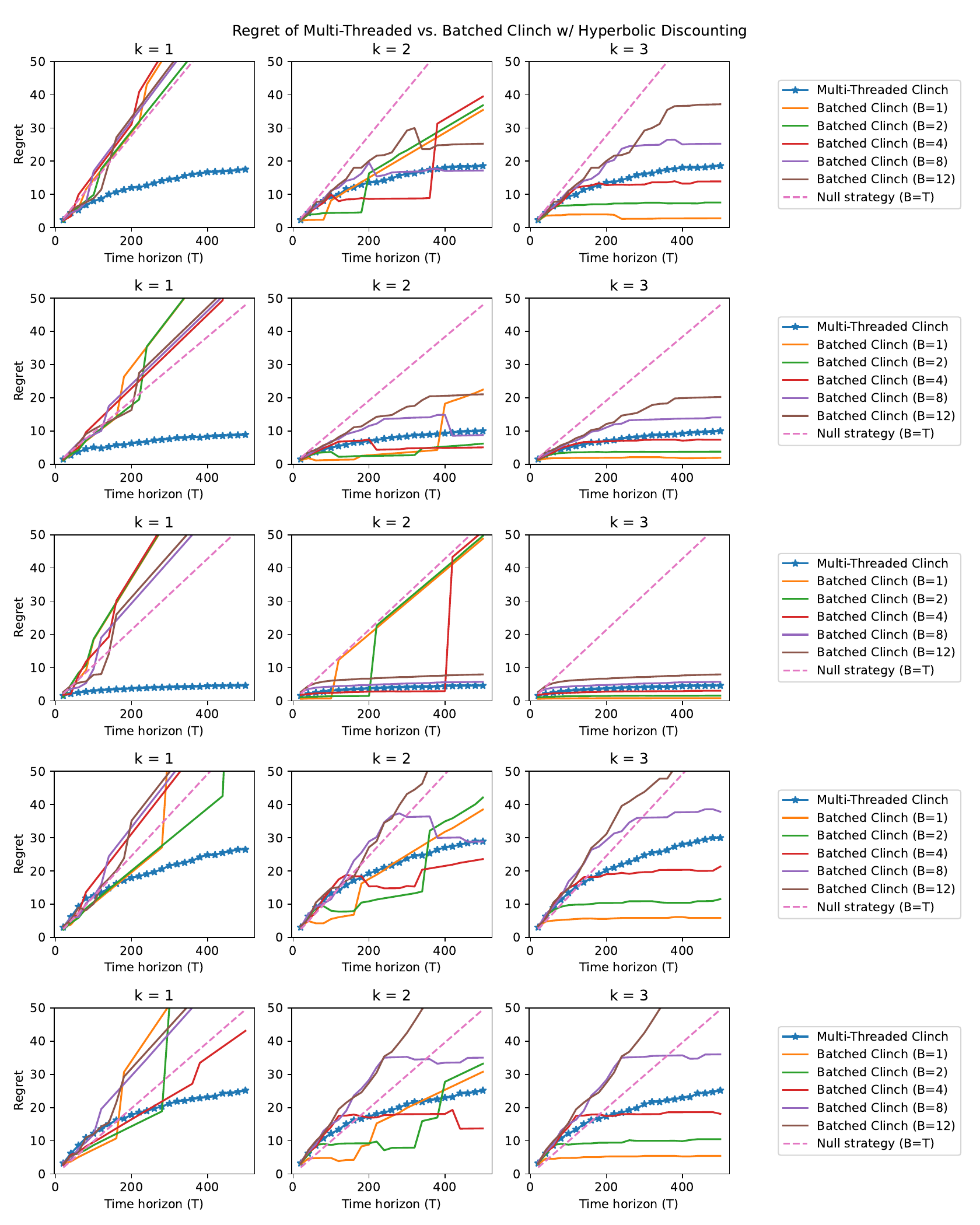}
    \caption{Regret achieved by batched and multi-threaded variants of \Clinch{} against a simulated hyperbolic discounting agent on five random SSG instances with $n=3$.}
    \label{fig:multi-threaded-hyperbolic-simulations}
\end{figure}

\begin{figure}
    \centering
    \includegraphics[width=\linewidth]{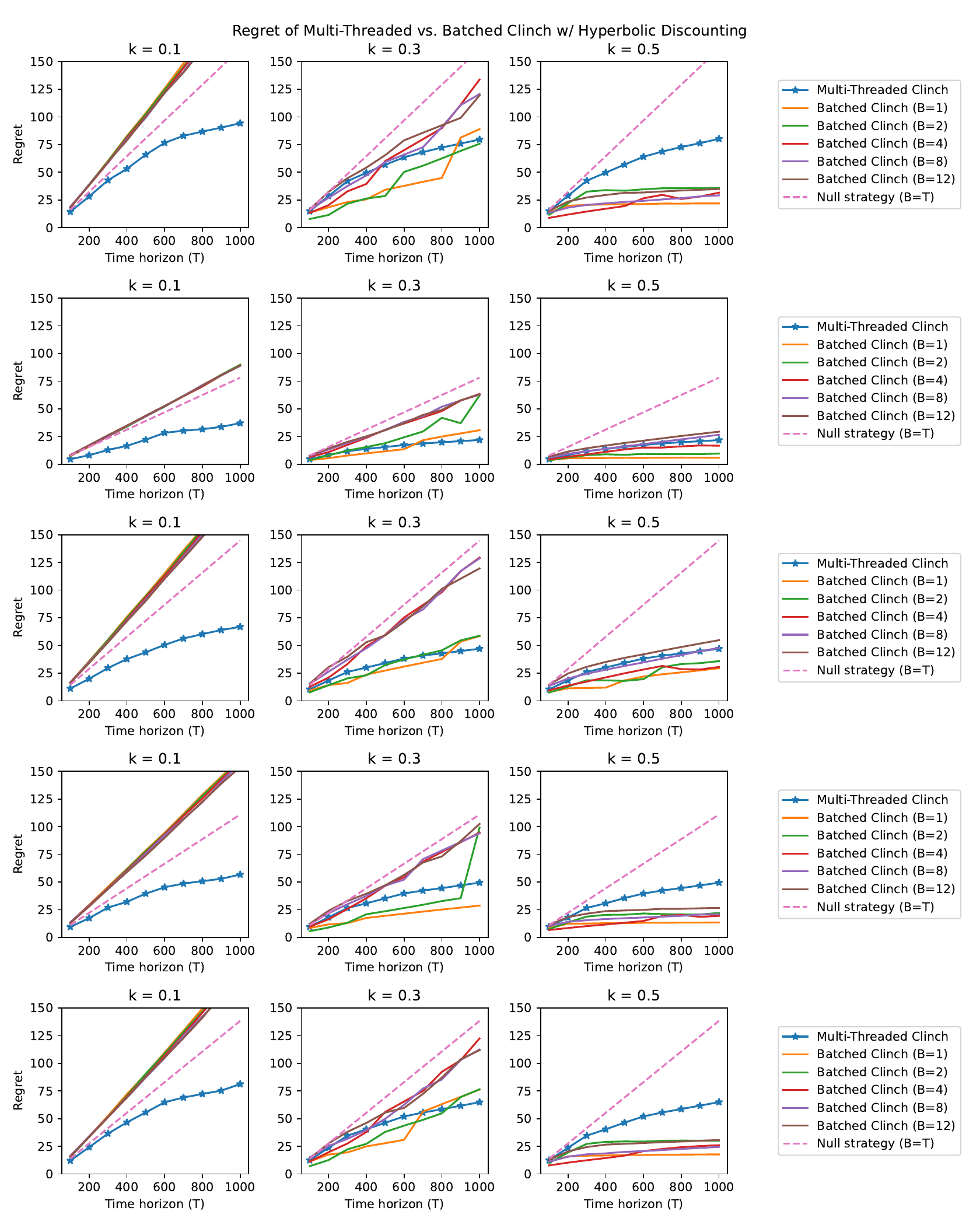}
    \caption{Regret achieved by batched and multi-threaded variants of \Clinch{} against a simulated hyperbolic discounting agent on five random SSG instances with $n=10$.}
    \label{fig:multi-threaded-hyperbolic-simulations-n10}
\end{figure}

\section{Supplementary Material for Demand Learning (Section~\texorpdfstring{\ref{ssec:demand}}{5.1})}\label{app:demand}

\subsection{Stochastic bandits with delays and perturbations (proof of Lemma~\texorpdfstring{\ref{prop:robust-delayed-bandits}}{5.5})}
\label{prf:robust-delayed-bandits}
Without loss of generality, we assume that each random interval $[\ell_t,u_t]$ is always contained within $[0,1]$. For analysis, it will be convenient to define empirical counts, means, and confidence bounds for all arms $i$ and rounds $t$ as
\begin{align*}
    &n_i(t) = \max\left\{\sum_{\tau=1}^{t}\mathds{1}\{i_\tau = i\},1\right\}, \quad \hat{\mu}_i(t) = \frac{1}{n_i(t)}\sum_{\tau=1}^{t} \mathds{1}\{i_\tau = i\} r_\tau\\
    &\mathrm{LCB}_i(t) = \hat{\mu}_i(t) - \sqrt{2\log(T)/n_i(t)} - \delta, \quad
    \mathrm{UCB}_i(t) = \hat{\mu}_i(t) + \sqrt{2\log(T)/n_i(t)} + \delta
\end{align*}
To start, we show that the confidence intervals are valid with high probability. 

\begin{lemma}
\label{lem:demand-learning-confidence-intervals}
With probability $1\!-\!\frac{2}{T^3}$, we have $\mathrm{LCB}_i(t)\!\leq\!\mu_i(t)\!\leq\!\mathrm{UCB}_i(t)$ for each arm $i$ and round $t$.
\end{lemma}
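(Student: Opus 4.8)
The plan is to condition on the realized arm sequence, which converts the adaptive problem into a collection of genuinely independent concentration events, and then apply Hoeffding's inequality arm-by-arm. Concretely, fix any arm sequence $i_1,\dots,i_T$ that occurs with positive probability. By the definition of $\delta$-perturbed feedback, conditioned on this sequence there exist \emph{independent} random intervals $\{[\ell_\tau,u_\tau]\}_{\tau=1}^T$ with $r_\tau\in[\ell_\tau,u_\tau]$ almost surely and $\mu_{i_\tau}-\delta\le\E[\ell_\tau]\le\E[u_\tau]\le\mu_{i_\tau}+\delta$. Conditioning also makes each count $n_i(t)$ deterministic, so it suffices to control, for each arm $i$ and each pull index $n$, the running average of $r$ over the first $n$ rounds in which arm $i$ was selected.

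First I would bound the upper tail. For fixed $i$ and $n$, let $\tau_1<\dots<\tau_n$ be the rounds of the first $n$ pulls of arm $i$ (deterministic given the arm sequence); since $r_{\tau_k}\le u_{\tau_k}$, the empirical mean after $n$ pulls is at most $\tfrac1n\sum_{k=1}^n u_{\tau_k}$. The variables $u_{\tau_1},\dots,u_{\tau_n}$ are independent, lie in $[0,1]$, and have mean at most $\mu_i+\delta$, so Hoeffding's inequality gives
\[
\Pr\!\left[\tfrac1n\textstyle\sum_{k=1}^n u_{\tau_k} > \mu_i + \delta + \sqrt{2\log(T)/n}\right] \le \exp\!\left(-2n\cdot \tfrac{2\log T}{n}\right) = T^{-4}.
\]
Symmetrically, using $r_{\tau_k}\ge\ell_{\tau_k}$ and $\E[\ell_{\tau_k}]\ge\mu_i-\delta$, the lower-tail event $\tfrac1n\sum_k\ell_{\tau_k} < \mu_i-\delta-\sqrt{2\log(T)/n}$ also has probability at most $T^{-4}$. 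On the complement of these two events, $\hat\mu_i(t)$ lies within $\delta+\sqrt{2\log(T)/n}$ of $\mu_i$, so $\mathrm{LCB}_i(t)=\hat\mu_i(t)-\sqrt{2\log(T)/n}-\delta\le\mu_i$ and $\mathrm{UCB}_i(t)=\hat\mu_i(t)+\sqrt{2\log(T)/n}+\delta\ge\mu_i$; that is, the two bounds straddle $\mu_i$ at count $n$.

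The remaining step is the union bound, and here the key observation—what yields the stated $2/T^3$ rather than a $K$-dependent bound—is that the total number of arm pulls over the horizon is at most $T$, since each pull advances the round counter by one (so $\sum_i n_i(T)\le T$). Hence, for any fixed arm sequence, the number of realized pairs $(i,n)$ with $n\le n_i(T)$ is at most $T$, and summing the two failure probabilities over these pairs gives a conditional failure probability of at most $2T\cdot T^{-4}=2/T^3$. Since this bound holds uniformly over every arm sequence, it holds unconditionally. I expect the main subtlety to be the adaptivity: a naive martingale treatment would have to contend with arm choices depending on past (perturbed) rewards, but the $\delta$-perturbed definition is tailored precisely so that conditioning on the arm sequence restores independence of the intervals, letting plain Hoeffding go through. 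The only other point requiring care is counting the realized $(i,n)$ pairs correctly, so that the union bound costs a factor of $T$ and not $KT$.
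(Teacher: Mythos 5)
Your proposal is correct and follows essentially the same route as the paper's proof: condition on the realized arm sequence so the $\delta$-perturbed definition yields independent intervals, apply Hoeffding to the endpoint variables $u_\tau$ (resp.\ $\ell_\tau$) whose means are within $\delta$ of $\mu_i$ to get $T^{-4}$ per tail, and union bound over the at most $T$ confidence-interval updates (one per round) to obtain $2/T^3$. Your explicit accounting of why the union bound costs a factor of $T$ rather than $KT$ is exactly the paper's remark that only one confidence interval is modified per round.
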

\begin{proof}
If arm $i$ has not been pulled by time $t$, the confidence bound $[\mathrm{LCB}_i(t),\mathrm{UCB}_i(t)]$ is trivially valid. Otherwise, conditioning on any arm pulls $i_1,\dots,i_t$ and considering the intervals $[\ell_t,u_t]$ guaranteed by the perturbation bound, Hoeffding's inequality implies that the corresponding empirical mean $\hat{\mu}_i(t)$ satisfies
\begin{equation*}
    \hat{\mu}_i(t) = \frac{1}{n_i(t)} \sum_{\substack{\tau \leq t\\i_\tau = i}} r_\tau \leq \frac{1}{n_i(t)} \sum_{\substack{\tau \leq t\\i_\tau = i}} u_t \leq \mu_i + \delta +  \sqrt{\frac{2 \log T}{n_i(t)}}
\end{equation*}
with probability at least $1 - \frac{1}{T^4}$. Likewise, we have $\hat{\mu}_i \geq \mu_i - \delta - \sqrt{2\log (T)/n_i(t)}$ with the same probability. Taking a union bound gives $\mu_i \in [\mathrm{LCB}_i(t), \mathrm{UCB}_i(t)]$ with probability at least $1 - \frac{2}{T^4}$. Since one confidence interval is modified per round, a union bound over rounds gives the lemma.
\end{proof}

Next, we note an arm $i$ can only contribute $O(\delta)$ regret in a given round if $\Delta_i = O(\delta)$. Hence, we call an arm \emph{acceptable} if $\Delta_i < 8\delta$ and \emph{unacceptable} otherwise. Conditioned on the ``clean'' event above, we show that the number of unacceptable arm pulls is bounded, extending the analysis from Theorem 2 of \cite{lancewicki2021stochastic} to the perturbed setting.

\begin{lemma}\label{lemma:unacceptable-bd}
Conditioned on the event from \Cref{lem:demand-learning-confidence-intervals}, no unacceptable arm $i$ is pulled more than $128 \log T / \Delta_i^2 + D/m + 2$ times, where $m$ is the number of remaining arms when it is pulled last.
\end{lemma}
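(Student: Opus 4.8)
The plan is to condition throughout on the clean event of \Cref{lem:demand-learning-confidence-intervals}, so that every confidence interval contains the corresponding true mean. I fix the unacceptable arm $i$ (so $\Delta_i \geq 8\delta$) and let $i^\star$ denote an optimal arm. The first observation I would record is that $i^\star$ is never eliminated: for any surviving arm $j$, validity of the bounds gives $\mathrm{UCB}_{i^\star} \geq \mu_{i^\star} \geq \mu_j \geq \mathrm{LCB}_j$, so the test at Step~\ref{step:SE-eliminate} never removes $i^\star$. Since \SEDelayed{} pulls every surviving arm exactly once per phase (round robin), as long as $i$ survives the two arms are pulled in exactly the same phases; hence their sample counts in any update agree up to a single pull, and I take $n$ to be the smaller of the two counts so that the radius bounds below hold for both.

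The heart of the argument is a separation threshold. Writing $\rho(n) = \sqrt{2\log(T)/n} + \delta$ for the confidence radius at $n$ samples, the clean event yields $\mathrm{UCB}_i \leq \mu_i + 2\rho(n)$ and $\mathrm{LCB}_{i^\star} \geq \mu_{i^\star} - 2\rho(n)$ whenever both arms have been sampled at least $n$ times in the feedback underlying the current update. Hence $\mathrm{UCB}_i < \mathrm{LCB}_{i^\star}$, which forces elimination of $i$, as soon as $4\rho(n) < \Delta_i$. This is exactly where unacceptability enters: $\Delta_i \geq 8\delta$ gives $4\delta \leq \Delta_i/2$, so it suffices that $4\sqrt{2\log(T)/n} < \Delta_i/2$, i.e. $n > 128\log(T)/\Delta_i^2$. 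Thus once the delayed feedback reflects $n_i^0 \coloneqq \lceil 128\log(T)/\Delta_i^2 \rceil$ pulls of $i$ (equivalently of $i^\star$, by round robin), arm $i$ is removed at the next execution of Step~\ref{step:SE-eliminate}.

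It remains to translate this statement about \emph{delayed} counts into a bound on the \emph{actual} number of pulls, which is where the delay $D$ enters. Let $t_0$ be the round at which $i$ is pulled for the $n_i^0$-th time. Because feedback is $D$-delayed, the update at a phase ending at round $t$ uses horizon $t - D$, so $i$ is eliminated at the first phase-end $t_1 \geq t_0 + D$, and the total pull count is $n_i^0$ plus the pulls of $i$ occurring in the window $(t_0, t_1]$. The key monotonicity observation is that the active set $S$ only shrinks: if $m$ is the number of surviving arms when $i$ is pulled for the last time (during the phase ending at $t_1$), then every phase overlapping $(t_0,t_1]$ has length at least $m$, so consecutive pulls of $i$ are at least $m$ rounds apart. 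Since $t_1 < t_0 + D + m$, there are at most $D/m + 1$ such pulls after $t_0$, and combining with $n_i^0 \leq 128\log(T)/\Delta_i^2 + 1$ gives the claimed bound $128\log(T)/\Delta_i^2 + D/m + 2$.

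I expect the delay bookkeeping of the final paragraph to be the main obstacle. One must argue carefully that each phase contributes exactly one pull of $i$, that the first eliminating phase-end satisfies $t_1 < t_0 + D + m$, and---crucially---that the monotone shrinkage of $S$ lets the single value $m$ (the arm count at the last pull) lower bound the number of arms throughout the entire delay window, so that the extra pulls are controlled by the single ratio $D/m$ rather than a larger, time-varying quantity. By contrast, the confidence-interval separation of the second paragraph is routine once the $\Delta_i \geq 8\delta$ accounting is in place.
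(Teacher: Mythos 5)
Your proof is correct and takes essentially the same route as the paper's: both rest on the clean-event separation of confidence bounds (with the same $\Delta_i \geq 8\delta$ accounting yielding the $128\log T/\Delta_i^2$ threshold) together with the round-robin observation that the shrinking active set caps the delay overhead at $D/m + 1$ extra pulls, the only presentational difference being that the paper phrases the threshold step contrapositively, bounding the first epoch at which a \emph{non-delayed} update would have eliminated $i$ using the fact that $i$ survived the preceding epoch. One cosmetic patch to your bookkeeping: anchor the delay window at the phase-end of the phase containing your $t_0$ rather than at $t_0$ itself, so that the delayed count of $i^\star$ (which can trail that of $i$ by one within a phase) also clears the threshold --- this adds no pulls of $i$, so the stated bound is unaffected.
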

\begin{proof}
To simplify analysis, we split the history of the algorithm into epochs, where epoch $\ell = 1,2,\dots$ denotes the $\ell$-th iteration of \SEDelayed{}'s main while loop. With this convention, after $\ell$ epochs, the remaining arms in $S$ have been pulled exactly $\ell$ times.

Now fix any unacceptable arm $i$, and consider the first epoch $\ell$ such that running \SEUpdate{} with the full (non-delayed) history through epoch $\ell$ would eliminate arm $i$ after the corresponding update to $S$. Then $i$ must be truly eliminated after $D + m$ additional rounds have passed, where $m$ is the number of arms remaining when $i$ is pulled for the last time. During these extra rounds, $i$ is pulled at most $D/m + 1$ times due to the round-robin nature of arm pulls; this is the overhead from delayed feedback.

Now if $\ell \leq 1$, we are done. Otherwise, fix $S$ as the set of arms remaining after the final round $t$ of epoch $\ell-1$, and let $\tilde{S} = \{ j \in S : \mathrm{UCB}_j(t) \geq \mathrm{LCB}_k(t) \text{ for all } k \in S \}$ denote the hypothetical update to $S$ based on non-delayed data. By the minimality of $\ell$, we know that $i \in \tilde{S}$, and so
\begin{align*}
    \hat{\mu}_i(t) \geq \max_{j \in S} \hat{\mu}_j(t) - 2\sqrt{\frac{2\log T}{\ell - 1}} - 2\delta \geq \max_{j} \mu_j - 3\sqrt{\frac{2\log T}{\ell - 1}} - 3\delta,
\end{align*}
where the second inequality follows by conditioning (noting in particular that the optimal arm is not eliminated). On the other hand, we have
\begin{align*}
    \hat{\mu}_i(t) \leq \mu_i + \sqrt{\frac{2\log T}{\ell - 1}} + \delta = \max_j \mu_j - \Delta_i + \sqrt{\frac{2\log T}{\ell - 1}} + \delta.
\end{align*}
Combining, we find that
\begin{equation*}
    \ell \leq \frac{32 \log T}{(\Delta_i - 4\delta)^2} + 1 \leq \frac{128 \log T}{\Delta_i^2} + 1.
\end{equation*}
Adding this upper bound to the overhead from delays gives the lemma.
\end{proof}

Now we are equipped to prove the main result.

\begin{proof}[Proof of \Cref{prop:robust-delayed-bandits}.]
By \Cref{lemma:unacceptable-bd}, we control regret by
\begin{equation*}
    \sum_{\Delta_i \geq 8\delta} n_T(i) \Delta_i + 8\delta T \leq 128 \sum_{\Delta_i > 0}\left( \frac{\log T}{\Delta_i} + \frac{D}{m_i} + 1\right) + 8\delta T,
\end{equation*}
where $m_i$ is the number of remaining arms when arm $i$ is pulled last. Bounding $\sum_i \frac{1}{m_i} \leq \sum_{i=1}^K \frac{1}{i} \leq \log(K) + 2$, we obtain a final bound of $128\sum_{\Delta_i > 0} \frac{\log (3T)}{\Delta_i} + 128 D \log (K) + 8\delta T$.
\end{proof}

\subsection{Perturbation bound for stochastic values (proof of Lemma~\texorpdfstring{\ref{lem:revenue-error-bd}}{5.6})}
\label{prf:revenue-error-bd}
\begin{proof}[Proof of \Cref{lem:revenue-error-bd}.]
If $a = 1$, then $v_t \geq p - \eps$ and $a = \mathds{1}\{v_t \geq p - \eps\} = u$, while, if $a = 0$, then $v_t \leq p + \eps$ and $a = \mathds{1}\{v_t > p + \eps\} = \ell$. Moreover, we have
\begin{align*}
    p\E[u] = p \Pr(v_t \geq p - \eps) \leq p d(p) + p L \eps \leq f(p) + L \eps,
\end{align*}
using the definitions of $d$ and $f$, the Lipschitz property of $d$, and that $p \in [0,1]$. Likewise, we bound
\begin{equation*}
    p\E[\ell] = p \Pr(v_t > p + \eps) = p \Pr(v_t \geq p + \eps) \geq f(p) - L \eps.\qedhere
\end{equation*}
\end{proof}

\section{Supplementary Material for Finite Stackelberg Games (Section~\texorpdfstring{\ref{ssec:finite-games}}{5.2})}
\label{app:finite-games}

This section provides full details and analysis for \MultiThreadedFiniteAlg{} (\Cref{alg:multi-threaded-finite-games}) to prove \Cref{thm:finite-games}. We introduce this algorithm and its principal subroutine \FiniteAlg{} (\Cref{alg:finite-games}) in \Cref{app:finite-games-algs}. In \Cref{app:finite-games-search-proof}, we state a search guarantee for \FiniteAlg{}, \Cref{lem:finite-games-search}, and use it to prove the theorem. In \Cref{app:finite-games-search-proof}, we state three lemmas, pertaining to polytope conditioning bounds and robust convex optimization with membership queries, and use them to prove \Cref{lem:finite-games-search}. We prove the remaining lemmas in Sections~\ref{prf:conservative-membership-oracle} and~\ref{prf:optimization-with-membership-queries}. Throughout, we make use of the constants $r$, $\Delta$, and $V$ defined in \Cref{ssec:finite-games}.

\subsection{Algorithm definitions and discussion}
\label{app:finite-games-algs}

We first present \FiniteAlg{} (\Cref{alg:finite-games}), a procedure for learning in finite games with $\eps$-approximate best responses.
Formally, the algorithm takes as input a desired search accuracy $\delta$ and an approximate best response oracle \Oracle{} which, given query $\bv{x} \in \cX$, returns $\Oracle(\bv{x}) \in \BR^\eps(\bv{x})$ for some $\eps \geq 0$. 
For meaningful guarantees, we require $\eps \leq \left(\frac{\delta r \Delta}{nm}\right)^{O(1)}$. This will later be implemented against discounting agents via delayed feedback. \FiniteAlg{} outputs a $\delta$-approximate Stackelberg equilibrium pair (see \Cref{app:finite-games-thm-proof} for a formal statement).

The algorithm initially samples $\widetilde{O}(V^{-1})$ points from $\cX$ uniformly at random. For each sampled point $\bv{x}$, we obtain an approximate best response $y$ from \Oracle{} and run \textsc{ConservativeBestResponse} (\Cref{alg:oracle-adjustment} in \Cref{prf:conservative-membership-oracle}). This subroutine tests whether $\bv{x}$ is robustly within $K_y$ using multiple queries to \Oracle{} in the neighborhood of $\bv{x}$. We are left with a collection $\{\bv{x}^{(y)}\}_{y \in \cY_0}$ of sampled points which passed this test. Under the regularity assumptions, we prove that $y^\star \in \mathcal{Y}_0$ and that each $\bv{x}^{(y)}$ is well centered within $K_{y}$ with high probability.

Next, for each $y \in \cY_0$, we run an optimization procedure \textsc{MembershipOpt} (\Cref{alg:opt-with-membership-queries} in \Cref{prf:optimization-with-membership-queries}) to find an $\delta$-approximate maximizer $\hat{\bv{x}}^{(y)}$ for $u(\cdot,y)$ over $K_y$, starting at initial point $\bv{x}^{(y)}$. This subroutine applies convex optimization with membership queries, noting that $\mathds{1}\{\Oracle(\bv{x}) = y\} \approx \mathds{1}\{\bv{x} \in K_y\}$. \FiniteAlg{} then returns the strategy $\hat{\bv{x}}^{(y)}$ maximizing $u(\hat{\bv{x}}^{(y)},y)$.\medskip

\begin{algorithm}[H]
\caption{\FiniteAlg{}: robust learning for finite Stackelberg games}\label{alg:finite-games}
\DontPrintSemicolon
\SetAlgoNoLine
\SetKwInOut{Input}{input}
\SetKwInOut{Output}{output}
\Input{search accuracy $\delta \geq 0$, approximate best response oracle \Oracle{}}
\Output{$\delta$-optimal principal strategy $\hat{\bv{x}} \in \cX$}
$\cY_0 \gets \emptyset$, $\eta \gets \delta (3 \lceil V^{-1}\log  \frac{3}{\delta} \rceil)^{-1}$\;
\For{$i=1,\dots,\lceil V^{-1}\log \frac{3}{\delta} \rceil$}{\label{step:finite-games-sample}
    $y \gets \Oracle(\bv{x})$ for $\bv{x}$ sampled uniformly at random from $\cX$\;
    \If{$y \not\in \cY_0$ and $\textsc{ConservativeBestResponse}(y,\bv{x},r/2,\eta,\Oracle{}) = \textsc{True}$\label{step:finite-games-conservative-best-response-check}}{$\cY_0 \gets \cY_0 \cup \{y\}$, $\bv{x}^{(y)} \gets \bv{x}$}\label{step:finite-games-populate-set}
}
\textbf{for} $y \in \cY_0$ \textbf{do} $\hat{\bv{x}}^{(y)} \gets \textsc{MembershipOpt}(y,\bv{x}^{(y)},\frac{\delta}{3n},\Oracle{})$\label{step:finite-games-LSV}\;
\Return{$\hat{\bv{x}}^{(\hat{y})}$ for $\hat{y} \in \argmax_{y \in \cY_0} u(\hat{\bv{x}}^{(y)}, y)$}
\end{algorithm}\medskip

Finally, we present \MultiThreadedFiniteAlg{} (\Cref{alg:multi-threaded-finite-games}), a policy for the repeated game with $\gamma$-discounting agents (for unknown $\gamma$) that mirrors the multi-threaded approach of \MultiThreadedClinch{}. As before, each of $O(\log T)$ parallel threads runs a separate instance of \FiniteAlg{}, with thread $k$ experiencing delay $2^k$. Once a copy of \FiniteAlg{} terminates, its thread always plays the strategy returned by the largest eligible thread, where thread $k$ becomes eligible $2^k$ rounds after termination.\medskip

\begin{algorithm}[H]
\caption{\MultiThreadedFiniteAlg{}}\label{alg:multi-threaded-finite-games}
\DontPrintSemicolon
\SetAlgoNoLine
    \For{thread $k=1,\dots,\lfloor \log T \rfloor + 1$}{
        Initialize copy $\cA^{(k)}$ of $\FiniteAlg$ with $\delta = T^{-1}$
    }
    \For{round $t = 1,\dots,T$}{
        $k \gets \argmax\{\ell \in \mathbb{N}_{>0} : 2^{\ell-1} \text{ divides } t \}$ \tcp*{Identify current thread}
        \If{$\cA^{(k)}$ has not terminated}{
            Simulate oracle query/response for $\cA^{(k)}$ using $\bv{x}^{(t)}, y_t$\;
            \textbf{if} $\mathcal{A}^{(k)}$ \textit{terminates with output} $\hat{\bv{x}}$ \textbf{then} $\hat{\bv{x}}^{(k)} \gets \hat{x}$\;
        }
        \textbf{else} Play $\bv{x}^{(t)} \gets \hat{\bv{x}}^{(\bar{k})}$, where $\bar{k} = \max\{ \ell : \text{thread $\ell$ terminated by round $t - 2^\ell$}\}$ \label{step:multi-threaded-finite-alg-exploit}\;
    }
\end{algorithm}

\subsection{Learning against \texorpdfstring{$\gamma$}{γ}-discounting agents (proof of Theorem~\texorpdfstring{\ref{thm:finite-games}}{5.10})}
\label{app:finite-games-thm-proof}

We now provide a formal guarantee for \FiniteAlg{}, deferring the proof to \Cref{app:finite-games-search-proof}.

\begin{lemma}
\label{lem:finite-games-search}
Fix $\delta \in (0,1)$, and let \Oracle{} be an $\eps$-approximate best response oracle for some $\eps \geq 0$. Then $\FiniteAlg(\delta,\Oracle{})$ terminates after at most $100 V^{-1} \sqrt{m}\log^2 \left(\frac{3}{\delta}\right) \log V^{-1} + 10^7 m^{2.5} n \log^3\left(\frac{10 mn}{\delta r}\right)$  oracle calls. If $\eps \leq \left(\frac{\delta r}{200nm}\right)^{20}\Delta$, then, with probability at least $1-\delta$, the returned strategy $\hat{\bv{x}} \in \cX$ satisfies $u(\hat{\bv{x}},y) \geq \max_{\bv{x} \in \cX} u(\bv{x},\br(\bv{x})) - \delta$ for all $y \in \BR^\eps(\hat{\bv{x}})$.
\end{lemma}

We are now equipped to prove the main theorem.

\begin{proof}[Proof of \Cref{thm:finite-games}]
Denote by $Q = 100 V^{-1} \sqrt{m}\log^2 (3T) \log V^{-1} + 10^7 m^{2.5} n \log^3(10 mnT/r)$ and $\eps = \left(\frac{r}{200nmT}\right)^{20}\Delta$ the query complexity and oracle accuracy required by \Cref{lem:finite-games-search} when $\delta = T^{-1}$. As with \MultiThreadedClinch{}, thread $k$ runs on rounds $2^{k-1}(2\ell-1)$ for $\ell=1,2,\dots$ and hence experiences delay $2^k$ while copy $\cA^{(k)}$ of \FiniteAlg{} is running. For this copy's final oracle call, the selection rule at Step~\ref{step:multi-threaded-finite-alg-exploit} ensures that the $2^k$ round delay is maintained. We term the rounds up to this point for thread $k$ its ``exploration phase,'' since it is learning from agent feedback. Once $\cA^{(k)}$ has terminated, thread $k$ enters an exploitation phase and ignores agent feedback (i.e., infinite feedback delay).

We now establish several consequences of \Cref{lem:finite-games-search}, applied with our choice of $\delta = T^{-1}$. First, we can bound the total number of exploration rounds by $(\lfloor \log T \rfloor + 1) Q$. Next,  let $k^\star = \log_2 \left\lceil T_\gamma \log \frac{T_\gamma}{\eps}\right\rceil$ be the index of the first thread whose delay during exploration induces $\eps$-approximate best responses by \Cref{prop:delayed-feedback} (we can assume that $k^\star \leq \lfloor \log T \rfloor + 1$ is a valid thread index; otherwise the regret bound holds trivially). We freely condition on the event that the search of $\cA^{(k)}$ terminates successfully for all $k \geq k^\star$, since the complement has probability at most $O(T^{-1} \log T)$ by a union bound over threads. Starting at time $2^{k^\star}(Q+1)$, once copy $\cA^{(k^\star)}$ has terminated and a delay of $2^{k^\star}$ has passed, the strategy $\hat{\bv{x}}^{(\bar{k})}$ played at Step~\ref{step:multi-threaded-finite-alg-exploit} incurs regret at most $T^{-1}$, since $\bar{k} \geq k^\star$. Combining the above, we bound the regret by
\begin{align*}
    (\lfloor \log T \rfloor + 1) Q + 2^{k^\star}(Q+1) + 1 &= (\lfloor \log T \rfloor + 1) Q +  \left\lceil T_\gamma \log \frac{T_\gamma}{\eps}\right\rceil(Q+1) + 1\\
    &= \tilde{O}\left(\left(\log T + T_\gamma \log \frac{1}{r\Delta}\right)\left(V^{-1}\sqrt{m} \log^2 T + m^{2.5}n\log^3 \frac{T}{r}\right)\right)\\
    &= O\left(T_\gamma\left(V^{-1}\sqrt{m} \log^3 (T)  \log \frac{1}{r\Delta}+ m^{2.5}n\log^4 \frac{T}{r} \log \frac{1}{\Delta}\right)\right).
\end{align*}
Substituting the given values for $Q$ and $\eps$ gives the desired bound.
\end{proof}

\subsection{Robust search with \texorpdfstring{$\eps$}{ε}-approximate best responses (proof of Lemma~\texorpdfstring{\ref{lem:finite-games-search}}{F.1})}
\label{app:finite-games-search-proof}

Our search guarantee for \FiniteAlg{} relies on several lemmas, which we state below after establishing notation. For ease of presentation, we assume in what follows that the principal has $|\cX_0| = m+1$ actions, rather than $m$, so that $\cX$ can be identified with its isometric embedding into the ball $B(\sqrt{2})\!\subset\!\R^{m}$ (this poses no difficulties as the bulk of our analysis is coordinate-free). We write $B(A,r) \subset \R^m$ for the Minkowski sum of a set or point $A$ in $\R^m$ with the $\ell_2$-ball of radius $r$, with $B(r) \coloneqq B(\mathbf{0}_m,r)$, and, for a set $A \subseteq \R^m$, write $B(A,-r) \coloneqq \{ \bv{x} \in A : B(\bv{x},r) \subseteq A \}$. 

Finally, for any (potentially negative) $\eps \in \R$ and $\bv{x} \in \cX$, we define $\BR^\eps(\bv{x}) \coloneqq \{ y \in \cY : v(\bv{x},y) \geq v(\bv{x},y') - \eps \: \forall y' \in \cY \setminus \{y\} \}$. For each $y \in \cY$, let $K_y^\eps \coloneqq \{ \bv{x} \in \cX : y \in \BR^\eps(\bv{x}) \} \subseteq \R^m$ and set $K_y \coloneqq K_y^0$. Negative values of $\eps$ are relevant because they control the extent to which neighboring $|\eps|$-approximate best response regions can overlap with $K_y$. In particular, if $\eps \geq 0$ and $x \in K_y^{-2\eps}$, then $\BR^{\eps}(\bv{x}) = \{y\}$ (with the constant of two taken to avoid reliance on tie-breaking).

We first provide a correctness guarantee for \textsc{ConservativeBestResponse}.

\begin{lemma}
\label{lem:conservative-membership-oracle}
Fix $\bv{x} \in \cX$, $y \in \cY$, margin $\lambda \geq 0$, and failure probability $\delta \in (0,1)$. Let \Oracle{} be an $\eps$-approximate best response oracle for $\eps \geq 0$. Then, $\textsc{ConservativeBestResponse}(y,\bv{x},\lambda,\delta,\allowbreak\Oracle{})$ terminates after $12 \sqrt{m}\log \delta^{-1}$ oracle calls. If further $\eps \leq \frac{\lambda\Delta}{6\sqrt{m}}$, then, with probability at least $1-\delta$, the subroutine returns \textsc{True} only if $\bv{x} \in B\bigl(K_{y}^{-2\eps},-\frac{\lambda}{2\sqrt{m}}\bigr)$ and \textsc{False} only if $\bv{x} \not\in B(K_{y}^{-2\eps}, -\lambda)$.
\end{lemma}

Within the proof of \Cref{lem:conservative-membership-oracle}, we show the following useful fact.

\begin{lemma}
\label{lem:best-response-margin}
For each $y \in \cY$ and $\eps \geq 0$, we have $B\bigl(K_y,-\eps/\Delta\bigr) \subseteq K_y^{-\eps}$.   
\end{lemma}

This result translates a distance margin of $\eps/\Delta$ from the boundary of $K_y$ to a utility margin of $\eps$. Next, we give an optimization guarantee for \textsc{MembershipOpt}.

\begin{lemma}
\label{lem:optimization-with-membership-queries}
Fix $y \in \cY$, accuracy $\delta \in (0,1)$, initial point $\bv{x}_0 \in \cX$, and radius $\rho > 0$. Let \Oracle{} be an $\eps$-approximate best response oracle for $\eps \geq 0$. Then $\textsc{MembershipOpt}(y, \delta, \bv{x}_0, \rho, \Oracle{})$ terminates after at most $10^5 m^{2.5} \log^3\bigl(\frac{120m}{\delta \rho}\bigr)$ oracle calls. If further $\eps \leq \left(\frac{\delta\rho}{140 m}\right)^{13}\Delta $ and $B(x_0,\rho) \subseteq K_y$, then, with probability at least $1-\delta$, $\textsc{MembershipOpt}$ returns $\hat{\bv{x}} \in \cX$ such that $\BR^\eps(\hat{\bv{x}}) = \{y\}$ and $u(\hat{x},y) \geq \max_{\bv{x} \in K_y} u(\bv{x},y) - \delta$.
\end{lemma}

Together, these suffice to prove the search guarantee.

\begin{proof}[Proof of \Cref{lem:finite-games-search}]
Since the initial sampling loop at Steps~\ref{step:finite-games-sample}-\ref{step:finite-games-populate-set} calls \textsc{ConservativeBestResponse} $N = \lceil V^{-1}\log  \frac{3}{\delta} \rceil$ times with margin $\lambda = r/2$, and $\eps \leq \frac{\lambda\Delta}{6\sqrt{m}}$, the accuracy guarantee of \Cref{lem:conservative-membership-oracle} holds for all calls within the loop with probability at least $1 - \delta/3$. Conditioned on this event, we analyze the sampling loop. Let $S \subseteq K_{y^\star}$ denote the ball of radius $2r$ guaranteed by the regularity assumptions. The sample count $N$ is taken sufficiently large such that some sampled point $\bv{x}$ will lie inside $B(S,-r) \subseteq B(K_{y^\star},-r)$ with probability at least $1 - \delta/3$; indeed, the probability that any single point lies inside $B(S,-r)$ is $V$. Condition further on this event.

Since $\eps \leq r\Delta/4$, \Cref{lem:best-response-margin} implies that $\bv{x} \in B(K_{y^\star}^{-2\eps},-r/2)$, and so \Oracle{} will return $y^\star$ when $\bv{x}$ is queried. Moreover, since \textsc{ConservativeBestResponse} is run with margin $r/2$, \Cref{lem:conservative-membership-oracle} implies that $\bv{x}$ will pass the check at Step~\ref{step:finite-games-conservative-best-response-check} unless $\cY_0$ already contains $y^\star$. Thus, the final set $\cY_0$ will contain $y^\star$. Moreover, for each $y \in \cY_0$, \Cref{lem:conservative-membership-oracle} requires that the accepted strategy $\bv{x}^{(y)}$ have margin at least $\frac{r}{4\sqrt{m}}$ within $K_{y}^{\eps}$. Again by \Cref{lem:conservative-membership-oracle}, this sampling loop terminates within $N \cdot (1+12\sqrt{m} \log \frac{3N}{\delta})$ queries to \Oracle{}.

Finally, we examine the search loop at Step~\ref{step:finite-games-LSV}, where each call to \textsc{MembershipOpt} is run with accuracy $\gamma = \frac{\delta}{3n}$ and radius $\rho = \frac{r}{4\sqrt{m}}$. Since $\eps \leq \left(\frac{\gamma\rho}{140 m}\right)^{13}\Delta$, \Cref{lem:optimization-with-membership-queries} that implies that the search over $K_y$ will terminate with a $\gamma$-approximate maximizer $x^{(y)}$ within $10^5 m^{2.5} \log^3\bigl(\frac{120m}{\gamma \rho}\bigr)$ queries to \Oracle{}, with total success probability over all regions at least $1-\delta/3$. Taking a union bound over the three high probability events, we see that \FiniteAlg{} returns the desired maximizer with probability at least $1-\delta$. Indeed, conditioned on these good events, the returned strategy $\hat{\bv{x}}^{(\hat{y})}$, with $\hat{y} \in \argmax_{y \in \cY_0} u(\bv{x}^{(y)},y)$ satisfies
\begin{align*}
    u(\hat{\bv{x}}^{(\hat{y})},\hat{y}) = \max_{y \in \cY_0} u\bigl(\bv{x}^{(y)},y\bigr) 
    \geq \max_{\bv{x} \in K_{y^\star}} u(\bv{x},y^\star) - \gamma
    = \max_{x \in \cX} u(\bv{x},\br(\bv{x})) - \gamma
    > \max_{x \in \cX} u(\bv{x},\br(\bv{x})) - \delta,
\end{align*}
and $\BR^{\eps}(\hat{\bv{x}}^{(\hat{y})}) = \{ \hat{y} \}$, as desired.

For the final query complexity, we bound
\begin{align*}
    &\,N \cdot \left(1+12\sqrt{m} \log \frac{3N}{\delta}\right) + n \cdot 10^5 m^{2.5} \log^3\left(\frac{120m}{\gamma \rho}\right)\\
    \leq & \, 2 V^{-1} \log \frac{3}{\delta} \cdot \left(1+12\sqrt{m} \log \frac{3V^{-1} \log \frac{3}{\delta}}{\delta}\right) + n \cdot 10^5 m^{2.5} \log^3\left(\frac{120\cdot 12 m^{3.5}n}{\delta r}\right)\\
    < & \,100 V^{-1} \sqrt{m}\log^2 \left(\frac{3}{\delta}\right) \log V^{-1} + 10^7 m^{2.5} n \log^3\left(\frac{10 mn}{\delta r}\right),
\end{align*}
as desired.
\end{proof}

\subsection{Conservative best response data (proof of Lemma~\texorpdfstring{\ref{lem:conservative-membership-oracle}}{F.2})}
\label{prf:conservative-membership-oracle}

To ensure that the principal may safely commit to a strategy despite inexact best response feedback, we introduce \textsc{ConservativeBestResponse} (\Cref{alg:oracle-adjustment}) to determine whether a strategy $\bv{x}$ lies robustly within the best response polytope $K_y$. Specifically, given $\bv{x} \in \cX$, we query the best response oracle at $O(\sqrt{m} \log \delta^{-1})$ small perturbations of $\bv{x}$, only returning \textsc{True} if the oracle always responds with the fixed action $y$. If $K_y$ is sufficiently well-conditioned, a \textsc{True} output indicates that $\bv{x}$ lies firmly within the interior of $K_y$ with high probability, despite inexact best responses.\medskip

\begin{algorithm}[H]
\caption{\textsc{ConservativeBestResponse}}\label{alg:oracle-adjustment}
\DontPrintSemicolon
\SetAlgoNoLine
\SetKwInOut{Input}{input}
\SetKwInOut{Output}{output}
\Input{action $y \in \cY$, query $\bv{x} \in \R^m$, margin $\lambda \geq 0$, failure probability $\delta \in (0,1)$, approximate best response oracle \Oracle{}}
\Output{conservative estimate of $\mathds{1}\{\bv{x} \in K_y\}$}
\For{$i=1$ to $\smash{\lceil 6\sqrt{m}\log \delta^{-1}\rceil}$}{
    $\bv{w}_i \gets \bv{x} + \lambda \bv{S}_i$, where $\bv{S}_i$ is sampled uniformly at random from $\mathds{S}^{m-1}$\label{step:conservative-best-response-sampling}\; 
    \textbf{if} $\textsc{Oracle}(\bv{w}_i) \neq y$ \textit{or} $\bv{w}_i \not\in \cX$ \textbf{then} \Return{\textsc{False}}\;
    }
\Return{\textsc{True}}
\end{algorithm}\medskip

Our accuracy guarantee for \textsc{ConservativeBestResponse} relies on the following lemma, which provides a certain conditioning bound on each best response polytope $K_y^{\eps}$ as $\eps$ varies.

\begin{lemma}
\label{lem:best-response-polytope-margin}
If $y \in \cY$, $\lambda \geq 0$, and $\eps_1,\eps_2 \in \R$ with $\eps_1 \leq \eps_2$, then any $\bv{x} \in K_y^{\eps_2}$ with margin $\lambda + (\eps_2 - \eps_1)/\Delta$ has margin $\lambda$ within $K_y^{\eps_1}$. That is, we have $B\bigl(K_y^{\eps_2},-(\lambda + (\eps_2 - \eps_1)/\Delta)\bigr) \subseteq B(K_y^{\eps_1},-\lambda)$.
\end{lemma}

\begin{proof}
For this proof, we will view $\cX$ and the best response polytopes as subsets of the affine subspace $A \coloneqq \{ \bv{w} \in \R^{m+1}: \sum_{i=1}^{m+1} w_i = 1 \}$ in the natural way (rather than their isometric embeddings into $\R^m$). With this change, the sets of the form $B(S,-r)$ in the statement should be updated to $B_A(S,r) \coloneqq \{ \bv{x} \in S : B(\bv{x},r) \cap A \subseteq S \}$. For readability, we also write $\bar{\bv{v}}_y = \bar{\bv{v}}^{(y)} \in \R^{m+1}$ for each $y \in \cY$, where these are the centered utility profiles defined in the regularity assumptions. We naturally extend to the uncentered utility profiles, defined for each $y \in \cY$ and $i \in [m+1]$ by $v_y(i) \coloneqq v_0(i,y)$. Writing $c_y \coloneqq \frac{1}{m+1} \sum_{i=1}^{m+1} v_0(i,y)$, we have $\bar{\bv{v}}_y = \bv{v}_y - c_y \mathbf{1}_{m+1}$ for each $y \in \cY$.

We shall prove the contrapositive. To start, fix any $\bv{x} \in A \setminus B_A(K_y^{\eps_1},-\lambda)$. That is, there exists $\bv{x}' \in A \setminus K_y^{\eps_1}$ such that $\|\bv{x} - \bv{x}'\|_2 < \lambda$. Since $K_y^{\eps_1}$ is a polytope, a hyperplane tangent to one of its faces must separate $\bv{x}$ and $\bv{x}'$. If this hyperplane corresponds to one of the non-negativity constraints defining $\cX = \{ \bv{w} \in A: w_i \geq 0 \: \forall i \}$, then $\bv{x}' \in A \setminus \cX$, and so we trivially have $\bv{x} \in A \setminus B_A(K_y^{\eps_2},-\lambda) \subseteq A \setminus B_A\bigl(K_y^{\eps_2},-(\lambda + (\eps_2 - \eps_1)/\Delta)\bigr)$. Otherwise, the hyperplane must take the form $\{ \bv{w} \in \R^{m+1} : \bv{w}^\top (\bv{v}_{y'} - \bv{v}_{y}) = \eps_1 \}$ for some $y' \in \cY \setminus \{y\}$. In this case, the vector $\bv{x}'' \in A$ defined by $\bv{x}'' \coloneqq \bv{x}' + \frac{\eps_2 - \eps_1}{\Delta}\frac{\bar{\bv{v}}_{y'} - \bar{\bv{v}}_{y}}{\|\bar{\bv{v}}_{y'} - \bar{\bv{v}}_{y}\|_2}$ must satisfy
\begin{align*}
    (\bv{x}'')^\top (\bv{v}_{y'} - \bv{v}_{y}) &> \eps_1 + \frac{\eps_2 - \eps_1}{\Delta}\frac{(\bar{\bv{v}}_{y'} - \bar{\bv{v}}_{y})^\top (\bv{v}_{y'} - \bv{v}_{y})}{\|\bar{\bv{v}}_{y'} - \bar{\bv{v}}_{y}\|_2}\\
    &= \eps_1 + \frac{\eps_2 - \eps_1}{\Delta}\frac{(\bar{\bv{v}}_{y'} - \bar{\bv{v}}_{y})^\top (\bar{\bv{v}}_{y'} - \bar{\bv{v}}_{y} + (c_{y'} - c_y)\mathbf{1}_{m+1})}{\|\bar{\bv{v}}_{y'} - \bar{\bv{v}}_{y}\|_2}\\
    &= \eps_1 + \frac{\eps_2 - \eps_1}{\Delta}\frac{(\bar{\bv{v}}_{y'} - \bar{\bv{v}}_{y})^\top (\bar{\bv{v}}_{y'} - \bar{\bv{v}}_{y})}{\|\bar{\bv{v}}_{y'} - \bar{\bv{v}}_{y}\|_2}\\
    &= \eps_1 + \frac{\eps_2 - \eps_1}{\Delta}\|\bar{\bv{v}}_{y'} - \bar{\bv{v}}_y\|_2 > \eps_2,
\end{align*}
using our minimum distance assumption. Thus, $\bv{x}'' \in A \setminus K_y^{\eps_2}$ and, since $\|\bv{x}'' - \bv{x}\|_2 < \lambda + \frac{\eps_2 - \eps_1}{\Delta}$, we obtain $\bv{x} \in A \setminus B_A\bigl(K_y^{\eps_2},-(\lambda + (\eps_2 - \eps_1)/\Delta)\bigr)$, as desired.
\end{proof}

As a simple consequence, we obtain Lemma~\ref{lem:best-response-margin}.

\begin{proof}[Proof of Lemma~\ref{lem:best-response-margin}]
Setting $\lambda = \eps_2 = 0$ and $\eps_1 = -\eps$, we find that $B\bigl(K_y,-\eps/\Delta\bigr) \subseteq K_y^{-\eps}$.
\end{proof}

Next, to analyze the sampling procedure of \textsc{ConservativeBestResponse}, we recall a standard lower bound for the volume of a spherical cap (see, e.g., Lemma 9 of \citealp{feige2002maxcut}). We provide a brief proof below to clarify the constant prefactor.

\begin{lemma} 
\label{lem:spherical-cap-bound}
Let $\bv{Z} \sim \Unif(\mathbb{S}^{m-1})$ and $t \geq 0$. Then $\Pr(Z_1 > t) \geq \frac{1}{\sqrt{2\pi m}}(1-t^2)^{(m-1)/2}$.
\end{lemma}
\begin{proof}
The set $\{ \bv{z} \in \mathbb{S}^{m-1} : z_1 > t\}$ is an open spherical cap, whose boundary is an $(m-1)$-dimensional sphere with radius $\sqrt{1-t^2}$. The surface area of the cap is bounded from below by the volume of the sphere, which is given by $(1-t^2)^{(m-1)/2} \frac{\pi^{(m-1)/2}}{\Gamma((m+1)/2)}$. Normalizing by the surface area of $\mathbb{S}^{m-1}$ gives
\begin{equation*}
    \Pr(Z_1 > t) \geq \frac{(1-t^2)^{(m-1)/2} \Gamma(m/2)}{\sqrt{\pi}\Gamma((m+1)/2)} \geq \frac{(1-t^2)^{(m-1)/2}}{\sqrt{2\pi m}},
\end{equation*}
as desired, using that $\Gamma(m/2)/\Gamma((m+1)/2) \geq 1/\sqrt{2m}$.
\end{proof}

Finally, we prove the desired guarantee for \textsc{ConservativeBestResponse}.

\begin{proof}[Proof of Lemma~\ref{lem:conservative-membership-oracle}]
Suppose that a query $\bv{x} \in \cX$ does not lie robustly within $K_{y}$, in that $\bv{x} \not\in B\bigl(K_{y}^{-2\eps},-\frac{\lambda}{2\sqrt{m}}\bigr)$. Then, applying \Cref{lem:best-response-polytope-margin} with $\eps_1 = -2\eps$, $\eps_2 = \eps$, and margin $\frac{\lambda}{2\sqrt{m}}$, we find that $\bv{x} \not\in B\bigl(K_{y}^{\eps},-\frac{\lambda}{2\sqrt{m}}-3\eps/\Delta \bigr)$. Assuming that $\eps \leq \frac{\lambda\Delta}{6\sqrt{m}}$, this implies that $\bv{x} \not\in B\bigl(K_{y}^{\eps},-\lambda/\sqrt{m} \bigr)$. There thus exists an open half-space $H$ tangent to (but disjoint from) $K_{y}^\eps$, such that $d(\bv{x},H) \leq \lambda/\sqrt{m}$. Next, consider $\bv{w} = \bv{x} + \lambda \bv{S}$, for $\bv{S} \sim \Unif(\mathbb{S}^{m-1})$, as in Step~\ref{step:conservative-best-response-sampling}. By \Cref{lem:spherical-cap-bound}, we must have
\begin{align*}
    \Pr(\bv{w} \not\in K_{y}^\eps) &\geq \Pr(\bv{w} \in H)\\
    &= \Pr(\lambda S_1 > \lambda/\sqrt{m})\\
    &= \Pr(S_1 > 1/\sqrt{m})\\
    &\geq \frac{1}{\sqrt{2\pi m} (1 - \frac{1}{m})^{(m-1)/2}}\\
    &\geq \frac{1}{2\sqrt{2\pi m}} \geq \frac{1}{6\sqrt{m}}.
\end{align*}
Consequently, the probability that \textsc{ConservativeBestResponse} returns \textsc{True} is at most
\begin{align*}
    \bigl(1 - \tfrac{1}{6}m^{-1/2}\bigr)^{6 \sqrt{m} \log \delta^{-1}} \leq \exp(\log \delta) = \delta.
\end{align*}
On the other hand, if $\bv{x} \in B\bigl(K_{y}^{-2\eps},-\lambda\bigr)$, the algorithm will return \textsc{True} with probability 1, as desired. Regardless of whether $\eps$ is sufficiently small, the total number of calls to \Oracle{} is at most $\lceil 6\sqrt{m} \log \delta^{-1} \rceil \leq 12\sqrt{m}\log \delta^{-1}$.
\end{proof}

\subsection{Robust linear optimization with membership queries (proof of Lemma~\texorpdfstring{\ref{lem:optimization-with-membership-queries}}{F.4})}
\label{prf:optimization-with-membership-queries}

Our second subroutine, \textsc{MembershipOpt} (\Cref{alg:opt-with-membership-queries}), seeks to maximize the linear objective $u(\cdot,y)$ over a fixed best response region $K_y$, using queries to an $\eps$-approximate best response oracle \Oracle{}. Since the feedback $\mathds{1}\{\Oracle(\bv{x}) = y\}$ approximates $\mathds{1}\{\bv{x} \in K_y\}$, our approach mirrors existing work for robust convex optimization with membership queries (c.f.\ \citealt{lee2018efficient}).

In particular, we introduce a method \textsc{SimulatedSep} (\Cref{alg:simulated-sep}) that simulates a conservative separation oracle for $K_y$. That is, unless a query $\bv{x} \in \R^m$ lies within $K_y^{-2\eps}$, \textsc{SimulatedSep} returns a normal vector $\bv{w} \in \mathbb{S}^{m-1}$ for a half-space approximately separating $\bv{x}$ from $K_y$. To achieve this, we implement a conservative membership oracle for $K_y$ using \textsc{ConservativeBestResponse} and apply a reduction from separation to membership due to \cite{lee2018efficient}.

We then apply the standard center of gravity method (c.f.\ Section 2.1 of \citealt{bubeck2015}) to maximize our objective over $K_y$ using separation queries. More precisely, at each round $t$, we query \textsc{SimulatedSep} at the centroid $\bv{x}_t$ of the current search space $S_t$ and update $S_{t+1}$ to incorporate the obtained feedback, either intersecting $S_t$ with the returned half-space or eliminating all strategies with $u(\bv{x},y) < u(\bv{x}_t,y)$. After a moderate number of queries, \textsc{MembershipOpt} returns a queried point which maximizes $u(\cdot, y)$, among those which \textsc{SimulatedSep} failed to separate from $K_y$. We note that \Clinch{} from \Cref{sec:clinch} has a similar flavor, since both are cutting-plane methods.\medskip

\begin{algorithm}[H]
\caption{\textsc{MembershipOpt}: robust linear optimization via membership queries
}\label{alg:opt-with-membership-queries}
\DontPrintSemicolon
\SetAlgoNoLine
\SetKwInOut{Input}{input}
\SetKwInOut{Output}{output}
\Input{action $y \in \cY$, optimization accuracy $\delta \geq 0$, initial point $\bv{x}_0 \in \cX$, radius $\rho > 0$,\\
approximate best response oracle \Oracle{}}
\Output{approximate minimizer $\hat{\bv{x}}$ for $\max_{\bv{x} \in K_y} u(\bv{x},y)$, or ``$\perp$''}
$S_0 \gets B(\bv{x}_0,\sqrt{2})$, $A \gets \emptyset$, $t_f \gets \bigl\lceil 3m\log \frac{16}{\delta \rho}\bigr\rceil $, $\alpha \gets \min\{\frac{\delta}{t_f+1}, \frac{\delta \rho}{16 \sqrt{2m} }\}$\;
\For{$t = 0, \dots, t_f$}{
    \If{$\textsc{SimulatedSep}(y,\bv{x}_t,\rho,\alpha,\Oracle{})$ returns $\bv{w} \in \mathbb{S}^{m-1}$}{
        $S_{t+1} \gets \{ \bv{x} \in S_t: \bv{w}^\top \bv{x} \geq \bv{w}^\top \bv{x}_t \}$\;\label{step:center-of-gravity-cut}
    }
    \Else{
        $S_{t+1} \gets \{ \bv{x} \in S_t : u(\bv{x},y) \geq u(\bv{x}_t,y) \}$\;\label{step:center-of-gravity-cut2}
        $A \gets A \cup \{\bv{x}_t\}$
    }
    $\bv{x}_{t+1} \gets \E_{\bv{x} \sim \Unif(S_t)}[\bv{x}]$\;
}
\textbf{if} $A = \emptyset$ \textbf{then} \Return{$\perp$} \textbf{else} \Return{$\argmax_{\bv{x} \in A} u(\bv{x},y)$}\label{step:membership-opt-return}\\
\end{algorithm}\smallskip

\begin{algorithm}[H]
\caption{\textsc{SimulatedSep}: simulation of separation oracle via best response queries}\label{alg:simulated-sep}
\DontPrintSemicolon
\SetAlgoNoLine
\SetKwInOut{Input}{input}
\SetKwInOut{Output}{output}
\Input{action $y \in \cY$, query $\bv{x} \in \cX$, radius $\rho$, separation accuracy $\delta \in (0,1)$,\\
approximate best response oracle \Oracle{}}
\Output{normal vector $\bv{w} \in \mathbb{S}^{m-1}$ of half-space approximately separating $\bv{x}$ from $K_y$, or ``$\perp$''}
\textbf{if $\bv{x} \not\in \cX$} \textbf{then return} any $\bv{w} \in \mathbb{S}^{m-1}$ such that $\bv{w}^\top(\bv{x} - \bv{z}) \leq 0$ for all $\bv{z} \in \cX$
$\lambda \gets \frac{\delta^6\rho^6}{2^{36}m^{7/2}}$, $Q \gets 2m\lceil \log_2(2/\lambda) \rceil + 1, \gamma \gets \frac{\delta}{3Q}$\;
Define membership oracle \textsc{Mem} with domain $\R^m$ by $\textsc{Mem}(\bv{x}) \gets \textsc{ConservativeBestResponse}(y,\bv{x},\lambda,\gamma,\textsc{Oracle})$\label{step:simulated-sep-mem}\;
Run Algorithm 1 of \cite{lee2018efficient} with query access to \textsc{Mem} and parameters $\text{``$n$''} \gets m$, $\text{``$r$''} \gets \rho/2$, $\text{``$R$''} \gets \sqrt{2}$, $\text{``$\eps$''} \gets \lambda$; terminate after $Q$ queries to \textsc{Mem}\label{step:simulated-sep-LSV}\;
\textbf{if} \emph{Algorithm 1 asserts that no cut exists or is terminated before completion} \Return{$\perp$}\;
\textbf{else if} \textit{Algorithm 1 returns half-space defined by} $\tilde{\bv{g}}$ \textbf{then} \Return{$-\tilde{\bv{g}}/\|\tilde{\bv{g}}\|_2$}

\end{algorithm}\medskip

We first bound the query complexity of \textsc{SimulatedSep}. While this method always terminates after a fixed number of queries, we only obtain meaningful performance guarantees if $\eps$ is sufficiently small and $K_y$ contains a ball with radius bounded from below.

\begin{lemma}[Membership to separation] 
\label{lem:membership-to-separation}
Fix $y \in \cY$, $\bv{x} \in \R^m$, radius $\rho > 0$, and accuracy $\delta \in (0,1)$. Let \Oracle{} be an $\eps$-approximate best response oracle for some $\eps \geq 0$. Then, $\textsc{SimulatedSep}(y,\bv{x},\rho,\delta,\Oracle{})$ terminates after at most $10^3 m^{1.5} \log^2\bigl(\frac{100m}{\delta\rho}\bigr)$ queries to \Oracle{}. If further $K_y$ contains a ball of radius $\rho$ and $\eps \leq \frac{\delta^6\rho^6\Delta}{2^{39}m^4}$, then, with probability at least $1 - \delta$,  $\textsc{SimulatedSep}$ only returns ``$\perp$'' if $\bv{x} \in K_y^{-2\eps}$, and, if, $\textsc{SimulatedSep}$ returns $\bv{w} \in \mathbb{S}^{m-1}$, then $\bv{x}^\top \bv{w} \leq \bv{z}^\top \bv{w} + \delta$ for every $\bv{z} \in K_y^{-2\eps}$.
\end{lemma}
\begin{proof}
We first address sample complexity. Due to the manual cutoff at Step~\ref{step:simulated-sep-LSV}, we make at most $Q = 2m\lceil \log_2(2/\lambda) \rceil + 1$ queries to the simulated oracle \textsc{Mem}. By \Cref{lem:conservative-membership-oracle}, each query to \textsc{Mem} uses at most $12\sqrt{m}\log \frac{3Q}{\delta}$ queries to \Oracle{}. Combining gives the stated query complexity bound of 
\begin{align*}
    Q \cdot 12\sqrt{m}\log \frac{3Q}{\delta} &\leq 48m^{1.5} \log_2\left(\frac{2}{\lambda}\right)\left(\log \frac{12m \log_2\frac{2}{\lambda}}{\delta}\right)\\
    &= 48m^{1.5} \log_2\left(\frac{2^{37}m^{7/2}}{\delta^6\rho^6}\right)\left(\log \frac{12m \log_2\frac{2^{37}m^{7/2}}{\delta^6\rho^6}}{\delta}\right)\\
    &\leq 10^3 m^{1.5} \log^2\left(\frac{100m}{\delta\rho}\right).
\end{align*}
For the remainder of the proof, we assume that $\eps \leq \frac{\delta^6\rho^6\Delta}{2^{39}m^4}$. This bound was taken to ensure that $\eps \leq \min\{ \frac{\lambda \Delta}{6\sqrt{m}}, \frac{r\Delta}{4} \}$, and our choice of $\lambda$ was taken to ensure that $3600 m^{7/6} \lambda^{1/3} \rho^{-2} \delta^{-1} \leq \delta$. 

First, since $\eps \leq \frac{\lambda\Delta}{6\sqrt{m}}$, the accuracy guarantee of \Cref{lem:conservative-membership-oracle} holds for all queries to \textsc{Mem} with probability at least $1 - \alpha/3$, by a union bound. Writing $K = K_y^{-2\eps}$, we have by \Cref{lem:best-response-margin} that $B(K_y, -2\eps/\Delta) \subseteq K$. Since $K_y$ contains a ball of radius $\rho$ and $\eps \leq \rho\Delta/4$, $K$ must contain a ball of radius $\rho - \rho/2 = \rho/2$. Moreover, by \Cref{lem:conservative-membership-oracle}, the simulated oracle \textsc{Mem}, defined at Step~\ref{step:simulated-sep-mem}, is a $\lambda$-approximate, conservative membership oracle for the set $K$; that is, \textsc{Mem} only returns \textsc{True} for a query $\bv{z}$ if $\bv{z} \in K$ and only returns \textsc{False} if $\bv{z} \not\in B(K,-\lambda)$. Of course, the lemma's guarantee is slightly stronger, but this relaxation suffices.

Our result now nearly follows from the proof of Theorem 14 in \cite{lee2018efficient}, which provides an optimization guarantee for the Algorithm 1 which we apply at Step~\ref{step:simulated-sep-LSV}. We will slightly adapt their analysis to obtain explicit constants and to incorporate the conservative nature of our simulated membership oracle. First, we observe that the query cap of $Q = 2m \lceil \log_2(2/\lambda) \rceil + 1$ never goes into effect. Indeed, their Algorithm 1 calls \textsc{Mem} once at the beginning, and then at most $\lceil \log_2(2/\lambda) \rceil$ times within each of $2m$ binary searches performed by their Algorithm 2 subroutine.

Next, we verify our first guarantee, when \textsc{SimulatedSep} fails to find a separating hyperplane and returns ``$\perp$.''
Given a query $\bv{x} \in \R^m$, we only return ``$\perp$'' when their Algorithm 1 fails to return a half-space, which only occurs when \textsc{Mem} returns \textsc{True}; in this case, we must have $\bv{x} \in K$, as desired. If $\bv{x} \notin \cX$, then the returned half-space separates $\bv{x}$ from $\cX$ (and thus $K$) with no error. 

Otherwise, we must have $\bv{x} \in \cX \setminus B(K,-\lambda) \subseteq B(0,\sqrt{2}) \setminus B(K,-\lambda)$. In this case, we perform the same error analysis appearing in their proof of Theorem 14, but explicitly state constants appearing due to an implicit use of Markov's inequality. With our notation, they start by proving (within their Lemma 13) that the returned vector $\tilde{\bv{g}} \in \R^m$ satisfies
\begin{equation*}
    \frac{600}{\delta} m^{7/6} \lambda^{1/3} \rho^{-1} \geq \tilde{\bv{g}}^\top(\bv{z}-\bv{x})
\end{equation*}
for all $\bv{z} \in K$, with probability at least $1 - \delta/3$. Then, they lower bound
\begin{equation*}
    \tilde{\bv{g}}^\top \bv{x} \geq \|\bv{x}\|_2 - \zeta \|\bv{x}\|_\infty - 64 m^{7/6} \lambda^{1/3} \rho^{-1} ,
\end{equation*}
where $\zeta$ is a non-negative random variable with $\E[\zeta] < 24 m^{7/6} \lambda^{1/3} \rho^{-1}$. Applying Markov's inequality, we find that
\begin{align*}
    \tilde{\bv{g}}^\top \bv{x}  &\geq \frac{\rho}{2} - \lambda - \left(\frac{72}{\delta} - 64\right) m^{7/6} \lambda^{1/3} \rho^{-1} \geq \frac{\rho}{2} - \lambda^{1/3}\left(\frac{72 m^{7/6}}{\delta \rho}\right)
\end{align*}
with probability at least $1 - \delta/3$. Since $\lambda \leq \frac{\rho^6\delta^3}{2^{25}m^{7/2}}$, this implies that $\tilde{\bv{g}}^\top\bv{x} \geq \rho/4$, in which case $\|\tilde{\bv{g}}\| \geq \frac{\rho}{4\|\bv{x}\|} > \rho/6$. Accounting for a normalization factor of $1/\|\tilde{\bv{g}}\|_2$ and taking a union bound, we obtain an error bound of $3600 m^{7/6} \lambda^{1/3} \rho^{-2} \delta^{-1} \leq \delta$ with cumulative error probability at most $\delta$.
\end{proof}

Next, we compare the maximum value of $u(\bv{x},y)$ over $\bv{x} \in K_y$ to that over $\bv{x} \in K_y^{-2\eps}$.

\begin{lemma}
\label{lem:best-response-opt-sensitivity}
Let $\eps \geq 0$, and fix any $y \in \cY$ such that $K_y$ contains a $\ell_2$-ball of radius $\rho > 0$. We then have $\max_{\bv{x} \in K_y^{-2\eps}} u(\bv{x},y) \geq \max_{\bv{x} \in K_y} u(\bv{x},y) - \frac{2\sqrt{2 m}\eps}{\rho\Delta}$.
\end{lemma}
\begin{proof}
By \Cref{lem:best-response-margin}, we have $B(K_y,-2\eps/\Delta) \subseteq K_y^{-2\eps}$. Moreover, since the principal utilities lie in $[0,1]$, $u(\cdot,y)$ is $\sqrt{m}$-Lipschitz under the $\ell_2$-norm. Fixing $\lambda = 2\eps/\Delta$ and $K = K_y$, it suffices to show that, for each $\bv{x} \in K$, there exists $\bv{x}' \in B(K,-\lambda)$ with $\|\bv{x} - \bv{x}'\|_2 \leq \sqrt{2}\lambda/\rho$ (i.e., a Hausdorff distance bound between the setes $B(K,-\lambda)$ and $K$). If $\lambda > \rho$, then this is trivially true by the diameter of $K$. Otherwise, fix any $\bv{x}_0 \in B(K,-\rho)$. Since $K$ is convex and $B(\bv{x}_0,\rho) \subseteq K$, we have $\conv(\{\bv{x}\} \cup B(\bv{x}_0,\rho)) \subseteq K$. This convex hull contains the ball $B(\frac{\lambda}{\rho} \bv{x}_0 + (1 - \frac{\lambda}{\rho}) \bv{x}, \lambda)$, since $\lambda \leq \rho$ and
\begin{align*}
    B\left(\frac{\lambda}{\rho} \bv{x}_0 + \left(1 - \frac{\lambda}{\rho}\right) \bv{x}, \lambda\right) &= \left(1 - \frac{\lambda}{\rho}\right) \bv{x} + \frac{\lambda}{\rho} B\left(\bv{x}_0,\rho\right).
\end{align*}
Thus, we have $\bv{x}' = \frac{\lambda}{\rho} \bv{x}_0 + (1 - \frac{\lambda}{\rho}) \bv{x} \in B(K,-\lambda)$, with
\begin{align*}
    \|\bv{x} - \bv{x}'\|_2 \leq  \frac{\lambda}{\rho}\|\bv{x}_0 - \bv{x}\| \leq \frac{\lambda \sqrt{2}}{r},
\end{align*}
as desired.
\end{proof}

Finally, we are equipped to analyze \textsc{MembershipOpt}.

\begin{proof}[Proof of \Cref{lem:optimization-with-membership-queries}.]

We first address query complexity. Recall our parameter settings of $t_f = \bigl\lceil 3m \log \frac{16}{\delta \rho} \bigr\rceil$ and $\alpha = \min\{\frac{\delta}{t_f+1},\frac{\delta \rho}{16 \sqrt{2m} }\}$. By Lemma~\ref{lem:membership-to-separation}, we query \textsc{Oracle} at most
\begin{align*}
    (t_f+1) \cdot 10^3 m^{1.5} \log^2\left(\frac{100m}{\alpha\rho}\right) &\leq 6\cdot 10^3 m^{2.5} \log^2\left(\frac{100\cdot 16 \sqrt{2} \cdot 6 m^2 \log \frac{16}{\delta\rho}}{\delta\rho^2}\right)\log\frac{16}{\delta \rho}\\
    &\leq 6\cdot 10^3 m^{2.5} \log^2\left(\frac{100\cdot 16 \sqrt{2} \cdot 6 m^2 \log \frac{16}{\delta\rho}}{\delta\rho^2}\right)\log\frac{16}{\delta \rho}\\
    &\leq 10^5 m^{2.5} \log^3 \left(\frac{120m}{\delta \rho}\right)
\end{align*}
times, as claimed.

From now on, we suppose that $\eps \leq \left(\frac{\delta\rho}{140 m}\right)^{13}\Delta \leq \frac{\alpha^6\rho^6\Delta}{2^{39}m^4}$ and that $K_y$ contains a ball of radius $\rho$.
Condition on the Lemma~\ref{lem:membership-to-separation} guarantee for \textsc{SimulatedSep} holding for all of its $t_f + 1$ calls. Since $\eps$ obeys the lemma's bound, this event has probability at least $1 - \delta$ by a union bound. Under this event, \textsc{SimulatedSep} only returns ``$\perp$'' if a query $\bv{x}$ lies in $K = K_y^{-2\eps}$, and only returns a normal vector $\bv{w} \in \mathbb{S}^{m-1}$ if $(\bv{x} - \bv{z})^\top \bv{w} \leq \alpha$ for all $\bv{z} \in K$. 

As in the proof of \Cref{lem:membership-to-separation}, we note that $K$ must contain a ball of radius $\rho/2$ by \Cref{lem:best-response-margin}. Write $V_m = \pi^{m/2} \Gamma(m/2 +1)^{-1}$ for the volume of the unit ball in $\R^m$ and set $\tau = \frac{\delta}{4\sqrt{2m}}$. Now, we have $\vol_m(S_0) = 2^{m/2} V_m$, and, by Grünbaum's inequality (Lemma~\ref{lem:grunbaums-inequality}), $\vol_m(S_{t+1}) \leq (1 - 1/e) \vol_m(S_t)$. Fixing $\bv{x}^\star_\eps \in \argmax_{x \in K} u(x,y)$, we define $C = [(1-\tau) \bv{x}^\star_\eps + \tau K] \cap B(K,-\alpha)$ and bound its volume from below by
\begin{equation*}
    \vol_m(C) \geq \vol_m\bigl(B(\tau K, -\alpha)\bigr) \geq (\tau \rho/2 - \alpha)^m V_m \geq (\tau \rho / 4)^m V_m,
\end{equation*}
using that $\alpha \leq \tau \rho/4$. Thus, after $t_f \geq m \log_2 \bigl(\frac{\tau \rho}{4 \sqrt{2}}\bigr) / \log_2 \bigl(1-1/e\bigr)$ rounds, we cannot have $C \subseteq S_t$, and so there exists $r \in \{0, \dots, t_f\}$ for which we have some $\bv{x} \in C \cap S_r \setminus S_{r+1}$. This $x$ may not be removed at Step~\ref{step:center-of-gravity-cut}, because $\bv{x} \in C \subseteq B(K,-\alpha)$. Thus, $\bv{x}$ must be removed at Step~\ref{step:center-of-gravity-cut2} with $u(\bv{x}_r,y) > u(\bv{x},y)$, and $\bv{x}_r$ must be added to the set of candidate maximizers $A$. 

In particular $A \neq \emptyset$, and so we do not return ``$\perp$.'' Instead, for the returned strategy $\hat{\bv{x}} = \argmax_{\bv{x} \in A}u(\bv{x},y)$, we have 
\begin{align*}
    u(\hat{\bv{x}},y) \geq u(\bv{x}_r,y) &> u(\bv{x},y)\\
             &\geq  u(\bv{x}^\star_\eps,y) - \|\bv{x}_r - \bv{x}^\star_\eps\|_2\sqrt{m}\\
             &\geq u(\bv{x}^\star_\eps,y) - \diam(\tau K)\sqrt{m}\\
             &\geq u(\bv{x}^\star_\eps,y) - 2\sqrt{2m}\tau\\
             &= u(\bv{x}^\star_\eps,y) - \delta/2\\
             &\geq \max_{\bv{x} \in K_y} u(\bv{x},y) - \delta,
\end{align*}
as desired. The second inequality uses that $u(\cdot,y)$ is $\sqrt{m}$-Lipschitz under the $\ell_2$-norm, and the final inequality uses \Cref{lem:best-response-opt-sensitivity}. 
\end{proof}

\section{Supplementary Material for Strategic Classification (Section~\texorpdfstring{\ref{ssec:strategic-classification}}{5.3})}
\label{app:classification}

First, we derive two lemmas from the regularity assumptions.

\begin{lemma}
\label{lem:strategic-classification-agent-regularity}
At any strategic round $t$, the agent's payoff is bounded from above by $R^2(1 + 1/\alpha)$ and the best response $\brr_t(\bm{\theta}_t)$ is unique with payoff at least $-R^2$.
\end{lemma}

\begin{proof}
First, by the $\alpha$-strong convexity assumption on $f_t$, $v_{a_t}$ is $\alpha$-strongly concave in $\bm{\theta}$, and so the best response $\brr_t(\bm{\theta}_t)$ is unique.
In the proof of Theorem 2 on page 8 of \cite{dong2018}, the authors show that $v_{a_t}(\bm{\theta}_t,\brr_t(\bm{\theta}_t)) \leq \bm{\theta}_t^\top \brr_t(\bm{\theta}_t) = \bv{x}_t^\top \bm{\theta}_t  + 2 f_t^*(\bm{\theta}_t)$, where $f_t^*$ is the convex conjugate of $f_t$. In the proof of Claim 2 on page 20, they further show that $f_t^*(\bm{\theta}) = \sup_{\bv{v} \in \mathbb{S}^{d-1}} \frac{ (\bm{\theta}^\top \bv{v})^2}{4 f_t(\bv{v})}$. The numerator of this objective is bounded from above by $R^2$, while the denominator is bounded from below by $2 \alpha$, since $\|\bv{v}\|_2 = 1$ and $f_t$ is $\alpha$-strongly convex with minimum of 0 the origin (due to homogeneity). Thus, $f_t^*(\bm{\theta}_t) \leq \frac{R^2}{2\alpha}$, and so $v_{a_t}(\bm{\theta}_t,\hat{\bv{x}}_t) \leq v_{a_t}(\bm{\theta}_t,\brr_t(\bm{\theta}_t)) \leq R^2(1 + \frac{1}{2\alpha})$.
Finally, by playing $\hat{\bv{x}} = \bv{x}$, the agent obtains payoff $\bm{\theta}_t^\top \bv{x}_t \geq -R^2$.
\end{proof}

\begin{lemma}
\label{lem:strategic-classification-principal-regularity}
Each map $\bm{\theta} \mapsto \ell(\bm{\theta},\brr_t(\bm{\theta}),y_t)$ is convex, $(R+2R/\alpha)$-Lipschitz, and bounded in absolute value by $1+R^2+R^2/\alpha$. Moreover, each map $\hat{\bv{x}} \mapsto \ell(\bm{\theta},\hat{\bv{x}},y_t)$ is $R$-Lipschitz.
\end{lemma}

\begin{proof}

Convexity of $\ell_t(\bm{\theta}) = \ell(\bm{\theta},\brr_t(\bm{\theta}),y_t)$ is implied by Theorem 2 of \cite{dong2018}. For Lipschitzness of $\ell_t$, we turn to the proof of Theorem 7 on page 17 of \cite{dong2018}. The discussion there implies that $\ell_t(\bm{\theta})$ is Lipschitz with constant $\|\bv{x}_t\|_2 \leq R$ plus twice the Lipschitz constant of  $f_t^*$. To compute this, we bound
\begin{align*}
    \left|\sup_{\bv{v} \in \mathbb{S}^{d-1}} \frac{(\bm{\theta}^\top \bv{v})^2}{4 f_t(\bv{v})} - \sup_{\bv{v} \in \mathbb{S}^{d-1}} \frac{(\bm{\theta}'^\top \bv{v})^2}{4 f_t(\bv{v})} \right| &\leq \sup_{\bv{v} \in \mathbb{S}^{d-1}}\left| \frac{(\bm{\theta}^\top\bv{v})^2}{4 f_t(\bv{v})} - \frac{(\bm{\theta}'^\top \bv{v})^2}{4 f_t(\bv{v})} \right| \leq \frac{2 R\|\bm{\theta}-\bm{\theta}'\|_2}{2\alpha},
\end{align*}
using the same lower bound on $f_t(\bv{v})$ as in the proof of \Cref{lem:strategic-classification-agent-regularity}. Combining gives a Lipschitz constant of $R + 2R/\alpha$ for $\ell_t$. Finally, discussion on page 18 of \cite{dong2018} implies that $|\ell_t(\bm{\theta})| \leq 1+ |\bm{\theta}^\top \bv{x}_t \rangle| + 2f_t^*(\bm{\theta})$, which we bound by $1 + R^2(1 + \frac{1}{2\alpha})$ as in the proof of \Cref{lem:strategic-classification-agent-regularity}. Finally, the same discussion on page 17 implies that the map $\hat{\bv{x}} \mapsto \ell(\bm{\theta},\hat{\bv{x}},y_t)$ has Lipschitz constant bounded by that of the map $\hat{\bv{x}} \mapsto \bm{\theta}^\top\hat{\bv{x}}$, which is $\|\bm{\theta}\|_2 \leq R$.
\end{proof}

\paragraph{Bandit convex optimization.}
By \Cref{lem:strategic-classification-principal-regularity}, each loss function $\ell_t(\bm{\theta}) = \ell(\bm{\theta},\brr_t(\bm{\theta}),y_t)$ in the myopic setting is convex, Lipschitz, and bounded. When $y_t = 1$, feedback $\hat{\bv{x}}_t$ is sufficient to determine the gradient $\nabla \ell_t(\bm{\theta}_t)$, suggesting regret minimization via online convex optimization (OCO). Although this is not the case when $y_t = -1$, since the agent's manipulation costs encoded by $\mathsf{d}_t$ are hidden, the regime of OCO with one-point function evaluations, or \emph{bandit convex optimization}, is well-studied. \cite{dong2018} employ the classic ``gradient descent without a gradient'' procedure \GDwoG{} of \cite{flaxman2005} (\Cref{alg:gradient-descent-without-a-gradient}) to obtain regret $O(\sqrt{d}T^{3/4})$ against myopic agents, computing unbiased gradient estimates from stochastic function evaluations.\medskip

\begin{algorithm}[H]
    \caption{Online \textbf{G}radient \textbf{D}escent \textbf{w}ith\textbf{o}ut a \textbf{g}radient \citep{flaxman2005}}\label{alg:gradient-descent-without-a-gradient}
    \DontPrintSemicolon
    \SetAlgoNoLine
    \SetKwInOut{Input}{input}
    \Input{domain $S \subset \R^d$ with $\mathbb{B} \subseteq S \subseteqq R \mathbb{B}$, $L$-Lipshitz convex fn.s $c_1, \dots, c_T : S \to [-C,C]$}
    $\delta \gets \sqrt{\frac{RdC}{3(L+C)}}T^{-1/4}$,\, $\eta \gets \frac{R}{C\sqrt{T}}$, \, $v_1 \gets (0,\dots,0) \in \R^d$\;
    \For{round $t = 1,\dots,T$}{
        Sample unit vector $\bv{s}_t \in \mathbb{S}^{d-1}$ uniformly at random\;
        $\bv{u}_t \gets \bv{v}_t + \delta \bv{s}_t$\;
        $\bv{v}_{t+1} \gets \Pi_{(1-\delta)S}(\bv{v}_t - \eta c_t(\bv{u}_t) \bv{s}_t)$ \tcp*{$\Pi_K$ is the Euclidean projection onto $K$}
    }
\end{algorithm}

\begin{lemma}[\cite{flaxman2005}, Theorem 2]
\label{lem:gradient-descent-without-a-gradient}
If functions $c_1,\dots,c_T:S \to [-C,C]$ are convex and $L$-Lipschitz, and $S \subseteq \R^d$ is convex with $\mathbb{B} \subseteq S \subseteq R\mathbb{B}$, then the queries $\bv{u}_1,\dots,\bv{u}_T$ of \GDwoG{} satisfy $\E\left[\sum_{t=1}^T c_t(\bv{u}_t)\right] - \min_{\bv{u} \in S} \sum_{t=1}^T c_t(\bv{u}) \leq 6 T^{3/4} \sqrt{RdC(L+C)} + 5C(Rd)^2$.
\end{lemma}

\paragraph{Our extension to non-myopic agents.} We now extend this
approach to robust learning, due to an intrinsic robustness of \GDwoG{}. First, we prove the relevant error bound.

\begin{lemma}
\label{lem:strategic-classification-robust-bd}
If the agent chooses $\hat{\bv{x}}_t \in \BestResp^\eps(\bm{\theta}_t)$, then  $|\ell(\bm{\theta}_t,\hat{\bv{x}}_t,y_t) - \ell(\bm{\theta}_t,\brr_t(\bm{\theta}_t),y_t)| \leq R\sqrt{2\eps/\alpha}$.
\end{lemma}
\begin{proof}
By the strong convexity assumption, we have $\|\hat{\bv{x}}_t - \brr_t(\bm{\theta}_t)\|_2 \leq \sqrt{2\eps/\alpha}$, and so the result follows from the Lipschitz bound $R$ from \Cref{lem:strategic-classification-principal-regularity}.
\end{proof}

We next provide an appropriate robust learning guarantee for \GDwoG{}.

\begin{lemma}
\label{lem:robust-gradient-descent-without-a-gradient}
Under the setting of \Cref{lem:gradient-descent-without-a-gradient}, \GDwoG{} achieves the same regret up to an additive factor of $\lambda R d T/\delta$ if each $c_t(\bv{u}_t)$ is substituted with an adversarial perturbation $\tilde{c}_t(\bv{u}_t) \in [c_t(\bv{u}_t) \pm \lambda]$.
\end{lemma}
\begin{proof}[Proof sketch]
The proof of \Cref{lem:gradient-descent-without-a-gradient} for the unperturbed setting goes through the smoothed functions $\bar{c}_t(\bv{u}) = \E[c_t(\bv{u} + \delta \bv{s}_t)]$. The key observations are that each $\bar{c}_t$ is convex and Lipschitz with $|\bar{c}_t(\bv{u}) - c_t(\bv{u})| \leq L\delta$ and, crucially, $\E[\frac{d}{\delta}c_t(\bv{u}_t)\bv{s}_t] = \nabla \bar{c}_t(\bv{v}_t)$. With adversarial perturbations, we have
\begin{equation*}
    \big\|\E\big[\tfrac{d}{\delta}\tilde{c}_t(\bv{u}_t)\bv{s}_t\big] - \nabla \bar{c}_t(\bv{v}_t)\big\|_2 = \big\|\E\big[\tfrac{d}{\delta}\left(\tilde{c}_t(\bv{u}_t) - c_t(\bv{u}_t)\right) \bv{s}_t\big]\big\|_2 \leq \frac{d \lambda}{\delta}.
\end{equation*}
At this point, the proof of Theorem 2 in \cite{flaxman2005} for the unperturbed setting nearly applies, so long as their Lemma 2 is adjusted to our setting where gradient estimates $\bv{g}_t$ have bias $b = \frac{d\lambda}{\delta}$ rather than $b=0$. Switching to their notation for the lemma, if we have $\E[\bv{g}_t|\bv{x}_t] =  \nabla c_t(\bv{x}_t) + \bm{\xi}_t$ with $\|\bm{\xi}_t\|_2 \leq b$, then the final chain of inequalities in their proof of Lemma 2 still holds, up to an added term of $\sum_{t=1}^n \bm{\xi}_t^\top( \bv{x}_t - \bv{x}_\star )\leq n b R$. Switching back to our notation and substituting our value for $b$, this gives the claimed regret overhead of $O(\frac{\lambda R Td}{\delta})$.
\end{proof}

Finally, we provide our combined algorithm and prove its non-myopic regret bound.\medskip

\begin{algorithm}[H]
    \caption{Cycled gradient descent without a gradient (\textsc{CGDwoG})}\label{alg:non-myopic-strategic-classification}
    \DontPrintSemicolon
    \SetAlgoNoLine
    $\eps \gets \alpha(R^4 dT^{2.5})^{-1}$, $D \gets \lceil T_\gamma \log(R^2(1+1/\alpha)T_\gamma/\eps) \rceil$\;
    Initialize copies $\cA_1,\dots,\cA_{D}$ of \Cref{alg:gradient-descent-without-a-gradient} w/ $S = \Theta$, $C = 1 + R^2 + R^2/\alpha$, and $L = R + 2R/\alpha$\;
    \For{round $t = 1,\dots,T$}{
        Write $t = D(k-1) + (r-1)$ for $k,r \in \mathbb{Z}_{>0}$\;
        Simulate query $\bv{u}_k$ and perturbed feedback $\tilde{c}_k(\bv{u}_k)$ for $\cA_r$ using $\bm{\theta}_t$ and $\ell(\bm{\theta}_t,\hat{\bv{x}}_t,y_t)$
    }
\end{algorithm}

\begin{proof}[Proof of Theorem~\ref{thm:strategic-classification}]
Since Stackelberg regret is subadditive over disjoint sequences of rounds, we obtain regret $D R_{\cA_1}^\eps(\lceil T/D \rceil)$ against $\eps$-approximate best-responding agents. Combining \Cref{lem:strategic-classification-principal-regularity,lem:strategic-classification-robust-bd,lem:robust-gradient-descent-without-a-gradient} and substituting our choices of constants,  we bound the regret of any 
single copy by \begin{align*}
    R_{\cA_1}^\eps(T) &\leq \E\left[\sum_{t=1}^T \ell(\bm{\theta}_t,\hat{\bv{x}}_t,y_t) - \min_{\bm{\theta} \in \Theta} \sum_{t=1}^T \ell(\bm{\theta},\brr_t(\bm{\theta}),y_t)\right]\\
    &\leq \E\left[\sum_{t=1}^T \ell(\bm{\theta}_t,\brr_t(\bv{x}_t),y_t) - \min_{\bm{\theta} \in \Theta} \sum_{t=1}^T \ell(\bm{\theta},\brr_t(\bm{\theta}),y_t)\right] + TR \sqrt{2\eps/\alpha}\\
    &\leq 6T^{3/4}\sqrt{RdC(L+C)} + 5C(Rd)^2 + T R\sqrt{\frac{2\eps}{\alpha}}(Rd/\delta+1)\\
    &= O\left(R^{5/2}\hat{\alpha}^{-1}\sqrt{d}T^{3/4} + R^4\hat{\alpha}^{-2}d^2\right),
\end{align*}
where $\hat{\alpha} = \min\{\alpha,1\}$. By \Cref{prop:delayed-feedback} and \Cref{lem:strategic-classification-agent-regularity}, our feedback delay induces $\eps$-approximate best responses, so we obtain a final regret bound of
\begin{equation*}
    O\left(R^{5/2}\hat{\alpha}^{-1}T_\gamma^{1/4}\sqrt{d}T^{3/4}\log^{1/4}(T R d / \alpha) + R^4\hat{\alpha}^{-2}d^2\right).\qedhere
\end{equation*}
\end{proof}

\begin{remark}
As noted in \citep{dong2018}, \GDwoG{} can be replaced by more modern methods with regret $\widetilde{O}(\poly(d)\sqrt{T})$ \citep{bubeck2021kernel}, at the cost of substantial complexity and worse scaling with $d$. While beyond the scope of this paper, robustifying such algorithms would imply improved non-myopic regret bounds.
\end{remark}

\end{document}